\theoremstyle{thmstyleone}%
\newtheorem{theorem}{Theorem}
\newtheorem{proposition}[theorem]{Proposition}%
\newtheorem{lemma}[theorem]{Lemma} 
\newtheorem{corollary}[theorem]{Corollary} 
\newtheorem{remark}[theorem]{Remark}
\theoremstyle{thmstyletwo}%
\theoremstyle{thmstylethree}%
\theoremstyle{assumption}
\newtheorem{assumption}{Assumption}
\newcommand\sgn{\operatorname{sgn}}
\newcommand\RE{\operatorname{Re}}
\newcommand\IM{\operatorname{Im}}
\renewcommand\MOD{\operatorname{mod}}
\newcommand\bb[1]{\mathbb{#1}} 
\newcommand\q[1]{\mathcal{#1}} 
\newcommand{\Complex}{\mathbb{C}}
\renewcommand{\R}{\mathbb{R}}
\renewcommand{\Z}{\mathbb{Z}}
\newcommand{\N}{\mathbb{N}}
\renewcommand{\T}{\mathbb{T}}
\newcommand*{\defeq}{\mathrel{\vcenter{\baselineskip0.5ex \lineskiplimit0pt\hbox{\scriptsize.}\hbox{\scriptsize.}}} =}
\newcommand*{\defeqr}{=\mathrel{\vcenter{\baselineskip0.5ex \lineskiplimit0pt \hbox{\scriptsize.}\hbox{\scriptsize.}}} }
\definecolor{TUMBlau}{RGB}{0,101,189} 
\definecolor{TUMOrange}{RGB}{227,114,34} 
\definecolor{TUMGreen}{RGB}{162,173,0}
\def\BibTeX{{\rm B\kern-.05em{\sc i\kern-.025em b}\kern-.08em T\kern-.1667em\lower.7ex\hbox{E}\kern-.125emX}}
\begin{document}

\title[Background Denoising for Ptychography via Wigner Distribution Deconvolution]{Background Denoising for Ptychography via Wigner Distribution Deconvolution}


\author[1,2]{\fnm{Oleh} \sur{Melnyk} \email{oleh.melnyk@tu-berlin.de}}
\author*[1,3]{\fnm{Patricia} \sur{Römer} \email{roemerp@ma.tum.de} 
}

\affil[1]{\small  \orgdiv{Mathematical Imaging and Data Analysis}, \orgname{Helmholtz Center Munich}, \orgaddress{\street{Ingolstädter Landstrasse 1}, \city{Neuherberg}, \postcode{85764}, \country{Germany}, } 
}

\affil[2]{\small \orgdiv{Institute of Mathematics}, \orgname{Technical University of Berlin}, \orgaddress{\street{Straße des 17. Juni 136}, \city{Berlin}, \postcode{10623}, \country{Germany}}}

\affil[3]{\small \orgdiv{Department of Mathematics}, \orgname{Technical University of Munich}, \orgaddress{\street{Boltzmannstr. 3}, \city{Garching bei M\"unchen}, \postcode{85748}, \country{Germany}}} 



\abstract{
Ptychography is a computational imaging technique that aims to reconstruct the object of interest from a set of diffraction patterns. Each of these 
is obtained by a localized illumination of the object, which is shifted after each illumination to cover its whole domain. As in the resulting measurements the phase information is lost, ptychography gives rise to solving a phase retrieval problem.
In this work, we consider ptychographic measurements corrupted with background noise, a type of additive noise that is independent of the shift, i.e., it is the same for all diffraction patterns.
Two algorithms are provided, for arbitrary objects and for so-called phase objects that do not absorb the light but 
only scatter it. For the second type, a uniqueness of reconstruction is established for almost every object. Our approach is based on the Wigner Distribution Deconvolution, which lifts the object to a higher-dimensional matrix space where the recovery can be reformulated as a linear problem. Background noise only affects a few equations of the linear system that are therefore discarded. The lost information is then restored using redundancy in the higher-dimensional space.
}

\keywords{phase retrieval, ptychography, background noise, Wigner Distribution Deconvolution, uniqueness of reconstruction.}



\pacs[MSC Classification]{78A46, 65T50, 42A38, 42B05.}

\maketitle

\section{Introduction}

Ptychography \cite{hoppe1969beugung} is an imaging technique that allows recovery of an object from a collection of diffraction patterns. In a ptychographic experiment, a beam of light is concentrated on a small part of the object of interest. As light passes through the object, it encodes information about the object. Then, a detector placed in the far field captures the intensity of the incoming light wave. Subsequently, the object is shifted and the experiment with the localized illumination is performed again for the next region. The two adjacent illuminated areas of the object are required to overlap, effecting that multiple measurements contain information on the same part of the object. That means there is redundancy in the data, which ensures that the object can be recovered from the obtained measurements. Finally, the described procedure is repeated until the whole object is covered. 

Ptychography is applied, e.g., in X-ray microscopy \cite{pfeiffer2018x} as well as in electron microscopy \cite{rodenburg2019ptychography}, to obtain high resolution images at nanometer scale of biological specimen \cite{giewekemeyer2010quantitative} or other nanoscale materials \cite{shi2019x}.

The illumination is realized as a localized window function $g$ and the object is described by its object transmission function $f$. The transmission function represents two physical properties of the object. Its amplitude quantifies the percentage of the illumination that is not absorbed by the object. A zero value means that in the respective part of the object no light passes through, while in the other extreme case the value is set to one. On the other hand, the phase of the object transmission function represents the scattering of the illumination by the object. Often in practice, objects are sufficiently thin and only scatter light. Consequently, their transmission function has constant one magnitude and they are called phase objects \cite{hawkes2019springer}. 

Mathematically, the obtained diffraction patterns are given as follows. The light exiting the object is characterized by the product of the illumination with the object transmission function. When the exit wave propagates to the far field, it can be described as the Fourier transform
\begin{equation*}
    \int_{\R^2} f(y) g(y-s) e^{-2\pi i \xi \cdot y} dy,
\end{equation*}
with frequency variable $\xi$, and $s$ denoting the shift. Due to the nature of charge-coupled device (CCD) cameras used as detectors, the observed images are the intensities of the incoming waves, which are given by
\begin{equation*}
    I(s,\xi) = \left\vert \int_{\R^2} f(y) g(y-s) e^{-2\pi i \xi \cdot y} dy \right\vert^2.
\end{equation*}

The task is then to recover the object transfer function from samples of $I(s,\xi)$. Reconstruction from intensity measurements means to solve a phase retrieval problem. In this work, we study a discrete version of the ptychographic inverse problem. The object function $f$ is approximated by a vector $x \in \Complex^d$. The measurement procedure involving the window function $g$, the translation in the space variable, and the modulation with the complex exponential are summarized in their discrete correspondence as masks $m^{(r,\ell)} \in \Complex^d$, where $r$ represents the respective shift and $\ell$ the frequency. Together, the discrete version of the ptychographic measurements reads as
\begin{equation*}
    Y_{r,\ell} = \left\vert \left\langle x, m^{(r,\ell)} \right\rangle \right\vert^2.
\end{equation*}

For computational reconstruction, a phase retrieval problem is often posed as an optimization problem, which is then solved by an iterative method. The pool of algorithms includes (stochastic) gradient methods \cite{candes2015phase, xu2018accelerated, wang2017solving, melnyk2022stochastic}, alternating projection and reflection methods \cite{gerchberg1972practical, fienup1978reconstruction, marchesini2016alternating, luke2004relaxed}, and the alternating direction method of multipliers \cite{chang2018total, chang2019blind}. Among practitioners, the most popular algorithm is a version of stochastic gradient descent for ptychography, the so-called ptychographic iterative engine \cite{rodenburg2004phase, melnyk2023convergence}. 

Alternatively, the ptychographic inverse problem can also be tackled with a non-iterative solver based on the Wigner Distribution Deconvolution (WDD) method proposed in \cite{rodenburg1992theory, chapman1996phase}. By this approach, the measurements are transformed into a convolution of object- and illumination-related functions, which are then decoupled by a deconvolution procedure. The remaining step is to recover the object from the corresponding function. The WDD method is described in more detail in \Cref{sec: ptycho and wdd}.

Contributions in \cite{iwen2016fast, iwen2020phase, preskitt2018phase, cordor2020fast, perlmutter2020provably, perlmutter2021inverting, melnyk2023phase} use the WDD algorithm to prove uniqueness of reconstruction for the ptychographic problem. Furthermore, these works have shown that the WDD algorithm is robust to additive noise. Note that uniqueness and stability are the major differences between ptychography \cite{bojarovska2016phase, jaganathan2016stft, alaifari2021stability, bendory2022nearoptimal} and a single illumination Fourier phase retrieval problem \cite{beinert2015ambiguities}.
The data redundancy generated by the overlapping illuminations allows to solve the phase retrieval problem uniquely.

In a ptychographic experiment under real-world conditions, different issues occur, one of which is experimental noise. In an imaging experiment, one type of noise arises from the counting procedure of the illumination particles arriving at the detector, causing the measurements to be corrupted by random Poisson noise \cite{thibault2012maximum, roemer2022wirtinger, li2022poisson}. Further, CCD cameras face the problem that imperfections in the experimental setup such as, e.g., contamination, result in additional charge measured in the pixels of the detector. This type of noise is referred to as background noise or parasitic scattering \cite{chang2019advanced, salditt2020nanoscale}. While Poisson noise can be neglected in an experimental setup using a sufficiently high illumination particle count, background noise can highly dominate the ptychographic measurements. For this reason, we focus here on a noise model that assumes background noise to be the only perturbation source.

Background noise is assumed to be independent of the measurement process. That is, for each shift of the object, the recorded diffraction pattern is composed of the actual captured intensity and a shift-independent background noise,
\begin{equation*}
    \left\vert \int_{\R^2} f(y) g(y-s) e^{-2\pi i \xi \cdot y} dy \right\vert^2 + b(\xi).
\end{equation*}
In the discrete version of the problem, the resulting measurements are of the form
\begin{equation*}
    Y_{r,\ell} = \left\vert \left\langle x, m^{(r,\ell)} \right\rangle \right\vert^2 + b_\ell, 
\end{equation*}
with background noise $b \in \R^d$.

When approaching the ptychographic phase retrieval problem as an optimization problem, the background can be incorporated into the optimization procedure as an additional unknown. Suitable iterative methods for tackling the resulting problem were presented in \cite{marchesini2013augmented, chang2019advanced}. Another possibility is to preprocess the measurements to counteract the effects of the background noise \cite{wang2017background}.

In this paper, we make use of the WDD method to develop an algorithm that removes the background noise to full extent. Our approach uses the shift-invariance of the noise, which allows to separate the background from most of the noiseless intensities in the deconvolution step. Consequently, the remaining corrupted intensities can be discarded to denoise the data completely. We find that their noise-free equivalent can be recovered from the separated noise-free information using the redundancy in the ptychographic measurements. After that, we can continue with the object reconstruction as in the WDD algorithm. Two different recovery procedures are proposed for phase and arbitrary objects. Our main contribution is a guarantee for uniqueness of reconstruction from ptychographic measurements with background noise for almost every phase object.

The paper is structured in the following way. In \Cref{sec: preliminaries} we provide the necessary notation and basic properties related to the Fourier transform. We formulate the ptychographic problem mathematically and summarize preliminary results for the WDD approach. Our contribution is presented in \Cref{sec: background_noise_removal} and proved in \Cref{sec: proofs}. In \Cref{sec: numerics}, we corroborate the theoretical findings with numerical experiments. Finally, the paper is summarized by a short conclusion.

\section{Preliminaries}\label{sec: preliminaries}

\subsection{Notation and basic properties}

In this paper, we work with index sets $[d] \defeq \left\{0,\ldots,d-1\right\}$ for $d \in \N$ and the entries of vectors $x \in \bb C^d$ are enumerated from $x_0$ to $x_{d-1}$. Note that all indices are considered modulo $d$, but we forgo the notation using $\MOD d$.
The Euclidean norm of a vector $x \in \Complex^d$ is denoted by $\norm{x}_2$, and the Frobenius norm of a matrix $A \in \Complex^{d_1 \times d_2}$, $d_1,d_2 \in \N$ by $\norm{A}_F$. 

The notation $a \in b + c\Z$ means that $a$ is equal to $b$ up to an additive constant from $c\Z \defeq \left\{cm: m \in \Z\right\}$ with $c \in \R$.

Any $z\in\Complex$ is composed as $z = \vert z \vert e^{i\phi}$ with $\vert z \vert \in \R_{\geq 0}$ and $\phi \in (0,2\pi]$. We call $\vert z \vert$ the magnitude, and $\phi \defeqr \arg(z)$ the argument of $z$. For $z \in \Complex \backslash \left\{0\right\}$, $ \mathrm{sgn}(z) \defeq \frac{z}{\vert z \vert}$ denotes the phase of $z$. For $z = 0$, it is set $\mathrm{sgn}(0) \defeq 0$. For $x \in \Complex^d$, the operations $\vert x \vert$, $\arg(x)$, $\sgn(x)$ are applied entrywise.

We denote the discrete Fourier transform of a vector $x \in \Complex^d$, and its inverse, by
\begin{equation*}
    (Fx)_j \defeq \sum_{k \in [d]} e^{-\frac{2\pi i kj}{d}} x_k, \quad \text{and} \quad (F^{-1}x)_j \defeq \frac{1}{d} \sum_{k \in [d]} e^{\frac{2\pi i kj}{d}} x_k, \quad j \in [d],
\end{equation*}
involving the Fourier matrix $F \in \Complex^{d \times d}$ with entries $(F)_{k,j} = e^{-\frac{2\pi i kj}{d}}$.

The circular shift, modulation and reflection operators on $\Complex^{d}$ with parameter $r \in \Z$ are defined as
\begin{equation*}
(S_{r}x)_j \defeq x_{j+r }, 
\quad 
(M_{r}x)_j \defeq e^{\frac{2\pi i jr}{d}}x_{j},
\quad 
(R_dx)_j \defeq x_{-j},
\quad j \in [d], 
\end{equation*}
respectively. We will use the following basic relations between these operators and  the Fourier transform.
\begin{lemma} \label{l: basic_properties}
For all $x \in \Complex^d$ and $r \in [d]$, we have
\[
(i) \ F (S_r x) = M_r (F x),
\quad \text{and} \quad
(ii) \ F \overline x = R_d \overline{F x}.
\]
\end{lemma}

For real vectors, \Cref{l: basic_properties} $(ii)$ turns into the following well-known result.
\begin{corollary}
\label{l: fourier of real vector conjugate symmetric}
The Fourier transform of $x$ is conjugate symmetric, i.e., it satisfies $Fx = R_d \overline{Fx}$, if and only if $x \in \R^d$.
\end{corollary}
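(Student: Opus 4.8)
The plan is to reduce the claimed equivalence directly to \Cref{l: basic_properties} $(ii)$ together with the invertibility of the Fourier matrix $F$. The key observation is that the right-hand side of the conjugate-symmetry condition is exactly the object appearing in part $(ii)$ of the lemma: since $F \overline{x} = R_d \overline{Fx}$ holds for every $x \in \Complex^d$, the defining condition $Fx = R_d \overline{Fx}$ can be rewritten, without any loss, as $Fx = F \overline{x}$.

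From there I would argue as follows. First I would note that $F$ is invertible (its inverse $F^{-1}$ is given explicitly in the preliminaries), so $F$ is in particular injective. Hence $Fx = F \overline{x}$ holds if and only if $x = \overline{x}$. Finally, $x = \overline{x}$ is precisely the statement that every entry of $x$ equals its own complex conjugate, which is equivalent to $x \in \R^d$. Chaining these equivalences yields
\[
Fx = R_d \overline{Fx} \iff Fx = F\overline{x} \iff x = \overline{x} \iff x \in \R^d,
\]
which establishes both implications simultaneously.

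I do not expect any genuine obstacle here, as the statement is an immediate corollary of \Cref{l: basic_properties} $(ii)$. The only point requiring mild care is that the argument must preserve equivalence at each step rather than a single implication, so that both directions of the ``if and only if'' are covered at once; this is guaranteed because each link in the chain above is itself a biconditional, relying only on the substitution from the lemma and the bijectivity of $F$.
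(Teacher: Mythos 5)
Your argument is correct and follows exactly the route the paper intends: the corollary is stated as an immediate consequence of \Cref{l: basic_properties}~$(ii)$, and your chain $Fx = R_d\overline{Fx} \iff Fx = F\overline{x} \iff x = \overline{x} \iff x \in \R^d$, using the substitution from the lemma and the injectivity of $F$, is precisely the omitted verification.
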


For two vectors $x,y \in \Complex^d$, the Hadamard product $x \circ y$ is defined by the entrywise products $(x\circ y)_j \defeq x_jy_j, \ j \in [d]$. The discrete circular convolution of $x$ and $y$ is defined as 
\begin{equation*}
(x * y)_j \defeq \sum_{k\in [d]} x_{j-k} ~ y_k, \quad j\in [d].
\end{equation*}
The relation 
\begin{equation}\label{thm: convolution theorem}
dF(x\circ y) = Fx * Fy, \quad x,y \in \Complex^d,
\end{equation}
is known as the (discrete) convolution theorem. A further well-known relation we need is Plancherel's identity
\begin{equation} \label{thm: plancherel}
\norm{Fx}_2^2 = d\norm{x}_2^2, \quad x \in \Complex^d.
\end{equation}

\subsection{Ptychography and Wigner Distribution Deconvolution}
\label{sec: ptycho and wdd}

In ptychography, we consider short-time Fourier transform measurements
\begin{equation}\label{eq: ptycho_measurements_noise}
     \tilde{Y}_{\ell,r} = \left\vert \left(F \left[S_{-r}x \circ w\right]\right)_\ell\right\vert^2 + N_{\ell,r}
     =  \bigg\vert \sum_{k \in [d]} e^{-\frac{2\pi i k \ell}{d}} x_{k-r} w_k \bigg\vert^2  + N_{\ell,r},
\end{equation}
where $x \in \Complex^d$ is the object of interest, and $N = \left(N_{\ell,r} \right)_{\ell \in [d], r \in [d]} \in \R^{d\times d}$ denotes noise. Furthermore, $w \in \Complex^d$ represents, in terms of physics nomenclature, the illumination. In the mathematical community, this is more commonly referred to as window. The window $w$ is assumed to be localized, i.e., $\mathrm{supp}(w) = [\delta]$ for $\delta < d$. Throughout the paper, we assume that the window $w$ is known, hence the problem lies in recovering the object $x$ from the measurements \eqref{eq: ptycho_measurements_noise}. That means, the ptychographic problem is an inverse problem. More specifically, it can be understood as a phase retrieval problem with short-time Fourier transform measurements.

Defining $\T \defeq \left\{ \beta \in \Complex: \vert \beta \vert = 1\right\}$, we obtain the same measurements 
\begin{equation*}
\left\vert \left(F \left[S_{-r}(\alpha x) \circ w\right]\right)_\ell\right\vert^2 =  \left\vert \left(F \left[S_{-r}x \circ w\right]\right)_\ell\right\vert^2     
\end{equation*}
for all $\alpha \in \T$, meaning that phaseless measurements only allow unique recovery up to a global phase. This motivates to define an equivalence relation
\begin{equation*}
x \sim \hat{x} \Leftrightarrow x = \alpha \hat{x} ~\text{for some}~\alpha \in \T,
\end{equation*}
and to call a solution to a phase retrieval problem unique if it is an element of
\begin{equation*}
    \left\{\alpha x: \alpha \in \T\right\}.
\end{equation*}

Uniqueness and stability of the ptychographic problem, in a noise-free setting, were investigated, e.g., in \cite{bojarovska2016phase, alaifari2021stability, bendory2017non}. These results are based on the idea of the Wigner Distribution Deconvolution (WDD) \cite{chapman1996phase, rodenburg2008ptychography}, which relates the ptychographic measurements to the Wigner distribution function of both the object and the window. 

\begin{theorem}[{\cite[Lemma 7]{perlmutter2021inverting}}] \label{thm: WDD_theorem}

    Let $Y \in \R^{d\times d}$ be the matrix with entries
    \begin{equation} \label{eq: ptychographic_measurements_noisefree}
     Y_{\ell,r} = \left\vert \left( F \left[S_{-r}x \circ w\right] \right)_\ell \right\vert^2, \quad \ell,r \in [d].
\end{equation}
Then, for all $j,k \in [d]$,
 \begin{equation} \label{eq: WDD} 
    \left( F^{-1} Y F \right) _{j,k}  = \left( F \left[x \circ S_j\overline{x}  \right] \right)_k \cdot  \overline{\left( F \left[\overline{w} \circ S_{j}w  \right] \right)}_k.
\end{equation}

\end{theorem}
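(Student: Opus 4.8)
The plan is to evaluate the two-sided transform $F^{-1}YF$ by carrying out the $\ell$-transform (the rows, via $F^{-1}$) and the $r$-transform (the columns, via $F$) in turn. First I would expand the squared modulus in \eqref{eq: ptychographic_measurements_noisefree} as a product of conjugate sums,
\begin{equation*}
Y_{\ell,r} = \Big(\sum_{a\in[d]} e^{-\frac{2\pi i a\ell}{d}} x_{a-r} w_a\Big)\Big(\sum_{b\in[d]} e^{\frac{2\pi i b\ell}{d}} \overline{x_{b-r}}\,\overline{w_b}\Big),
\end{equation*}
and substitute this into $(F^{-1}YF)_{j,k} = \sum_{\ell,r}(F^{-1})_{j,\ell}\,Y_{\ell,r}\,(F)_{r,k}$. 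The entire dependence on $\ell$ then collects into the single phase $e^{\frac{2\pi i \ell (j-a+b)}{d}}$, and summing it over $\ell\in[d]$ gives $d$ when $a = j+b$ and $0$ otherwise; this is the same orthogonality of the Fourier characters that yields \eqref{thm: plancherel}. It collapses the $a$-summation and leaves a double sum over the shift index $r$ and the remaining index $b$.

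After imposing $a=j+b$, the surviving expression is $\sum_{r,b} e^{-\frac{2\pi i r k}{d}} x_{j+b-r} w_{j+b}\,\overline{x_{b-r}}\,\overline{w_b}$. The decisive step is the change of variables $p = b-r$ (all indices modulo $d$): this rewrites the shift phase as $e^{-\frac{2\pi i (b-p)k}{d}}$, which splits multiplicatively, while the $x$-dependent factors now involve only $p$ and $j$ and the $w$-dependent factors only $b$ and $j$. Consequently the double sum factors as a product of a purely $x$-dependent Fourier sum in $p$ and a purely $w$-dependent Fourier sum in $b$, which is exactly the product structure on the right-hand side of \eqref{eq: WDD}.

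It remains to identify the two factors. Using the definitions of the shift $S_j$ and the Hadamard product, each factor is recognized as a discrete Fourier transform of a Hadamard product of $x$ (respectively $w$) with its shifted conjugate, namely $x\circ S_j\overline{x}$ and $\overline{w}\circ S_j w$, with the exponential signs and the single complex conjugation falling out as in the statement. I expect the main obstacle to be precisely this last bookkeeping: keeping the modulo-$d$ reindexings ($a=j+b$ and $p=b-r$) consistent and tracking the complex conjugates so that the conjugate lands on the correct one of the two factors. As a cleaner alternative to the brute-force character sum, one may instead recognize the $\ell$-transform $F^{-1}\big(\lvert F u^{(r)}\rvert^2\big)$ of the $r$-th column as the autocorrelation of $u^{(r)} = S_{-r}x\circ w$, using \Cref{l: basic_properties}$(ii)$ together with the convolution theorem \eqref{thm: convolution theorem}; this reproduces the same intermediate double sum more conceptually, after which the $r$-transform factorization proceeds as above.
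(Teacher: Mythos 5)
Your route --- expand the squared modulus, hit it with the two-sided transform, collapse the $\ell$-sum by character orthogonality to force $a=j+b$, then reindex $p=b-r$ so the double sum factors --- is the natural direct verification, and it is the only reasonable basis for comparison, since the paper itself gives no proof of this theorem (it is imported as Lemma 7 of the cited work). All of your displayed intermediate expressions are correct, including the surviving sum $\sum_{r,b} e^{-\frac{2\pi i r k}{d}} x_{j+b-r} w_{j+b}\,\overline{x_{b-r}}\,\overline{w_b}$ and the splitting of the phase under $p=b-r$.

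The gap is exactly the step you defer with ``the conjugation falls out as in the statement'': it does not. Carrying out your own substitution, the object factor is $\sum_{p} e^{+\frac{2\pi i p k}{d}} x_{p+j}\overline{x_{p}} = \overline{\left(F[x\circ S_j\overline{x}]\right)_k}$, while the window factor is $\sum_{b} e^{-\frac{2\pi i b k}{d}} w_{b+j}\overline{w_{b}} = \left(F[\overline{w}\circ S_j w]\right)_k$ \emph{without} a bar, so your computation ends at
\begin{equation*}
\left(F^{-1} Y F\right)_{j,k} \;=\; \overline{\left(F\left[x\circ S_j \overline{x}\right]\right)_k}\cdot \left(F\left[\overline{w}\circ S_j w\right]\right)_k ,
\end{equation*}
which is the complex conjugate of \eqref{eq: WDD}, not \eqref{eq: WDD} itself. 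This is not cosmetic bookkeeping: the two expressions genuinely differ whenever the product is non-real. Take $d=3$, $x=(1,i,1)$, $w=(1,e^{i\pi/4},0)$, $(j,k)=(1,1)$: direct evaluation gives $\left(F^{-1}YF\right)_{1,1} = e^{3i\pi/4} + 2\cos\!\left(\tfrac{5\pi}{12}\right) \approx -0.19 + 0.71\,i$, whereas the right-hand side of \eqref{eq: WDD} equals $\left(\tfrac{\sqrt3-1}{2} + i\,\tfrac{\sqrt3-3}{2}\right)e^{-i\pi/4} \approx -0.19 - 0.71\,i$. In other words, with this paper's conventions ($(S_rx)_j = x_{j+r}$, $F_{k,j}=e^{-2\pi i kj/d}$, measurements built from $S_{-r}x\circ w$) the printed statement carries a misplaced conjugation; equivalently, \eqref{eq: WDD} holds verbatim with $FYF^{-1}$ in place of $F^{-1}YF$, since conjugating the left-hand side of a real matrix's transform swaps $F$ and $F^{-1}$. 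So a complete write-up of your proof must carry the conjugation through, arrive at the conjugate-swapped identity, and flag the discrepancy with the statement --- precisely the ``main obstacle'' you identified and then asserted away. (Your alternative autocorrelation route via \Cref{l: basic_properties}$(ii)$ and \eqref{thm: convolution theorem} is fine and lands at the same conjugate-swapped expression.)
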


To retrieve the Fourier coefficients $\left( F \left[x \circ S_j\overline{x}  \right] \right)_k$ via the relation (\ref{eq: WDD}), we need to assume the following.

\begin{assumption} \label{assumption: window}
   Let the window $w$ satisfy
\begin{equation*}
    \left( F \left[\overline{w} \circ S_{j}w  \right] \right)_k \neq 0 \quad \text{for all} ~ -\gamma < j < \gamma, ~ k \in [d],
\end{equation*} 
for some $0 < \gamma \leq \delta.$
\end{assumption}

With this assumption, the Wigner Distribution Deconvolution approach allows the following statement on uniqueness of reconstruction from ptychographic measurements.

\begin{theorem}[{\cite[Theorem 2.2]{bojarovska2016phase}}] \label{thm: bojarovska_flinth_general}
Let  $\delta > \frac{d}{2}$. If \Cref{assumption: window} holds with $\gamma = \delta$, any $x \in \Complex^d$ is uniquely determined by measurements \eqref{eq: ptychographic_measurements_noisefree}.
\end{theorem}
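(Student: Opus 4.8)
The plan is to invert the chain of relations supplied by \Cref{thm: WDD_theorem} step by step, reconstruct the full rank-one matrix $xx^*$, and then read off $x$ from it. Starting from the noise-free measurements \eqref{eq: ptychographic_measurements_noisefree}, I would form $F^{-1}YF$ and invoke the identity \eqref{eq: WDD}, which expresses each entry $(F^{-1}YF)_{j,k}$ as the product of the object factor $(F[x \circ S_j\overline{x}])_k$ and the window factor $\overline{(F[\overline{w}\circ S_j w])}_k$. Since $w$ is known, the window factors are explicitly computable, and by \Cref{assumption: window} with $\gamma=\delta$ they are nonzero for every $k \in [d]$ whenever $-\delta < j < \delta$. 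For such $j$ I can therefore divide \eqref{eq: WDD} by the window factor to recover $(F[x \circ S_j\overline{x}])_k$ for all $k$, and then apply the inverse Fourier transform in the $k$-variable to obtain the vector $x \circ S_j\overline{x}$, whose $n$-th entry is precisely the product $x_n\overline{x_{n+j}}$.

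Next I would exploit the hypothesis $\delta > \frac{d}{2}$. The shifts $j$ accessible above form the symmetric set $\{-(\delta-1), \ldots, \delta-1\}$, consisting of $2\delta-1$ consecutive integers. Because $\delta > \frac{d}{2}$ forces $2\delta-1 \geq d$, these shifts cover every residue modulo $d$. Consequently the products $x_n\overline{x_{n+j}}$ are known for all $n \in [d]$ and all $j \in [d]$; equivalently, every entry $(xx^*)_{n,m} = x_n\overline{x_m}$ of the Hermitian rank-one matrix $xx^*$ is determined by the measurements. In particular, the case $j=0$ recovers the squared magnitudes $\vert x_n \vert^2$ along the diagonal.

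The final step is a standard linear-algebra argument: a nonzero Hermitian positive-semidefinite matrix of rank one determines the vector generating it uniquely up to a unimodular scalar. Hence the reconstructed matrix $xx^*$ fixes $x$ up to a global phase $\alpha \in \T$, which is exactly the equivalence class in which uniqueness was defined.

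I expect the main obstacle to be making the coverage argument of the second step fully precise, since the index arithmetic is modular and one must verify that the symmetric window of shifts $-\delta < j < \delta$ indeed surjects onto $\Z/d\Z$; the threshold $\delta > \frac{d}{2}$ is exactly what guarantees this. By contrast, the division in the first step is immediate from \Cref{assumption: window}, and the rank-one recovery in the last step is routine.
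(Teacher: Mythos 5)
Your proof is correct and follows essentially the same route as the paper's WDD framework and the cited argument: deconvolve via \Cref{thm: WDD_theorem} under \Cref{assumption: window} with $\gamma=\delta$ to obtain the diagonals $x \circ S_j\overline{x}$, observe that $\delta > \tfrac{d}{2}$ makes the $2\delta-1$ accessible shifts cover all residues modulo $d$ so that the full rank-one matrix $xx^*$ (cf.\ \eqref{eq: matrix of diagonals}) is determined, and then recover $x$ up to a global phase. Note that the paper itself only cites this result from \cite{bojarovska2016phase} without reproducing its proof, but your argument is exactly the standard deconvolution-plus-rank-one-recovery proof underlying \Cref{cor: WDD_recovery} and \Cref{alg: algorithm_WDD}, including the correct integer bookkeeping $2\delta - 1 \geq d$.
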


For non-vanishing objects, that is $x \in \Complex^d$ with $\min_{\ell \in [d]} \vert x_\ell\vert > 0$, the assumption on the window that guarantees uniqueness of reconstruction can be mitigated as follows.

\begin{theorem}[{\cite[Theorem 2.4]{bojarovska2016phase}}] \label{thm: bojarovska_flinth}
Let \Cref{assumption: window} hold with $\gamma \geq 2$. Then, any $x \in \Complex^d$ satisfying $\min_{\ell \in [d]} \vert x_\ell\vert > 0$ is uniquely determined by measurements \eqref{eq: ptychographic_measurements_noisefree}.
\end{theorem}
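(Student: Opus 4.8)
The plan is to invert the Wigner Distribution Deconvolution relation \eqref{eq: WDD} diagonal by diagonal, recover the autocorrelation-type products $x_\ell \overline{x_{\ell+j}}$ for the small shifts $j$ permitted by \Cref{assumption: window}, and then reconstruct the phases of $x$ by propagating along the consecutive products. Concretely, \Cref{thm: WDD_theorem} expresses $(F^{-1}YF)_{j,k}$ as the product of the object factor $\left(F[x \circ S_j \overline{x}]\right)_k$ and the known window factor $\overline{\left(F[\overline{w}\circ S_j w]\right)_k}$. By \Cref{assumption: window} with $\gamma \geq 2$, the window factor is nonzero for every $k \in [d]$ whenever $-\gamma < j < \gamma$; in particular it is nonzero for $j \in \{-1,0,1\}$. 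Hence, for these $j$, I would divide to isolate $\left(F[x\circ S_j\overline{x}]\right)_k$ for all $k$ and apply $F^{-1}$ to recover the full vector $x \circ S_j \overline{x}$, whose $\ell$-th entry equals $x_\ell \overline{x_{\ell+j}}$.

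Second, I would read off the two pieces of information needed. Taking $j=0$ gives $(x\circ\overline{x})_\ell = \vert x_\ell\vert^2$, hence the magnitudes $\vert x_\ell\vert$ for all $\ell \in [d]$. Taking $j = 1$ gives the consecutive products $x_\ell \overline{x_{\ell+1}}$ for all $\ell \in [d]$. Because $\min_\ell \vert x_\ell\vert > 0$, every factor is nonzero, so each consecutive product is nonzero and I can extract the relative phase $\sgn(x_\ell \overline{x_{\ell+1}}) = \sgn(x_\ell)\overline{\sgn(x_{\ell+1})}$ by normalizing with the already-known magnitudes.

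Third, I would recover $x$ up to a global phase by propagation. Writing each recovered relative phase as $c_\ell \defeq \sgn(x_\ell)\overline{\sgn(x_{\ell+1})}$, the identity $\sgn(x_{\ell+1}) = \sgn(x_\ell)\overline{c_\ell}$ determines $\sgn(x_{\ell+1})$ from $\sgn(x_\ell)$. Fixing $\sgn(x_0)$ — which is exactly the unremovable global phase — the recursion along $\ell = 0,1,\ldots,d-1$ pins down all $\sgn(x_\ell)$, and together with the magnitudes this reconstructs $x_\ell = \vert x_\ell\vert\,\sgn(x_\ell)$. Since the consecutive products $x_0\overline{x_1},\ldots,x_{d-2}\overline{x_{d-1}}$ form a connected chain through all $d$ indices, every phase is tied to $\sgn(x_0)$, so two objects producing the same measurements differ only by the choice of $\sgn(x_0)$, i.e., by a global factor in $\T$.

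The main obstacle — and the reason the non-vanishing hypothesis is essential — is precisely the propagation step: connectivity of the phase chain breaks down if some $x_\ell = 0$, since then $x_{\ell-1}\overline{x_\ell}$ and $x_\ell\overline{x_{\ell+1}}$ vanish and carry no phase information, decoupling the two sides of the index set and leaving a residual phase ambiguity that the shifts $\vert j\vert<\gamma$ with small $\gamma$ cannot resolve. This is exactly the gap that \Cref{thm: bojarovska_flinth_general} closes for arbitrary objects by instead demanding the much stronger window condition $\gamma = \delta$ together with $\delta > \frac{d}{2}$. I would therefore present the recovery of the products as routine linear algebra (invertibility of $F$ and nonvanishing of the window factor) and devote the care to verifying that the chain of consecutive nonzero products indeed connects all indices modulo $d$ and hence determines the phases uniquely up to the single global degree of freedom.
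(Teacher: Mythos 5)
Your proposal is correct and follows essentially the same route that the paper (and the cited result of Bojarovska--Flinth) uses: invert the WDD relation via \Cref{cor: WDD_recovery} to obtain the diagonals $x \circ S_j \overline{x}$ for $j \in \{0,1\}$, read off the magnitudes from $j=0$, and recover the phases from the nonvanishing consecutive products $x_\ell \overline{x_{\ell+1}}$ by the greedy propagation the paper describes after \Cref{cor: WDD_recovery}, with the non-vanishing hypothesis guaranteeing the phase chain stays connected. Nothing essential is missing.
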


With \Cref{assumption: window}, the result in \Cref{thm: WDD_theorem} can be turned into a recovery formula.

\begin{corollary} \label{cor: WDD_recovery}
    If \Cref{assumption: window} holds for some $0<\gamma\leq \delta$, the Fourier coefficients $\left( F \left[x \circ S_j\overline{x}  \right] \right)_k$ for all $- \gamma < j < \gamma,$ and all $k \in [d]$, can be recovered by
    \begin{equation*}
        \left( F \left[x \circ S_j\overline{x}  \right] \right)_k =   \left( F^{-1} Y F \right) _{j,k} \ \big/ \  \overline{\left( F \left[\overline{w} \circ S_{j}w  \right] \right)}_k.
    \end{equation*}
\end{corollary}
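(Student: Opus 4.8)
The plan is to read off the claimed formula directly from the Wigner Distribution Deconvolution identity in \Cref{thm: WDD_theorem} by dividing through by the window-dependent factor, which \Cref{assumption: window} guarantees to be nonzero on the relevant index range. No new machinery is needed; the argument is a one-line algebraic rearrangement of \eqref{eq: WDD} together with a nonvanishing observation.

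Concretely, I would first recall that \Cref{thm: WDD_theorem} gives, for \emph{all} $j,k \in [d]$,
\begin{equation*}
    \left( F^{-1} Y F \right) _{j,k}  = \left( F \left[x \circ S_j\overline{x}  \right] \right)_k \cdot  \overline{\left( F \left[\overline{w} \circ S_{j}w  \right] \right)}_k .
\end{equation*}
I would then fix an index $j$ with $-\gamma < j < \gamma$ and an arbitrary $k \in [d]$, which is exactly the range on which \Cref{assumption: window} applies. The assumption states that $\left( F \left[\overline{w} \circ S_{j}w  \right] \right)_k \neq 0$ for these indices; since complex conjugation preserves nonvanishing, the factor $\overline{\left( F \left[\overline{w} \circ S_{j}w  \right] \right)}_k$ appearing in \eqref{eq: WDD} is also nonzero. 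Hence division by this factor is well-defined, and solving the displayed identity for the object-dependent term yields precisely
\begin{equation*}
    \left( F \left[x \circ S_j\overline{x}  \right] \right)_k =   \left( F^{-1} Y F \right) _{j,k} \ \big/ \  \overline{\left( F \left[\overline{w} \circ S_{j}w  \right] \right)}_k ,
\end{equation*}
as claimed.

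There is essentially no substantive obstacle here; the corollary is a direct specialization of \Cref{thm: WDD_theorem}. The only point requiring a little care is bookkeeping: one must restrict to the range $-\gamma < j < \gamma$ on which the divisor is guaranteed nonzero, since outside this range \Cref{assumption: window} provides no control and the formula need not make sense. I would therefore state the restriction on $j$ explicitly and note the conjugate-of-nonzero step, after which the recovery formula follows immediately.
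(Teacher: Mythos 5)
Your proposal is correct and matches the paper's intended argument exactly: the corollary is stated without proof precisely because it is the immediate rearrangement of \eqref{eq: WDD} from \Cref{thm: WDD_theorem}, with \Cref{assumption: window} (and the fact that conjugation preserves nonvanishing) justifying the division on the range $-\gamma < j < \gamma$. Nothing is missing.
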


Using the Fourier coefficients obtained by \Cref{cor: WDD_recovery}, the products $(x \circ S_j \overline{x})_\ell, \ \ell\in [d],$ can be reconstructed for all $- \gamma < j < \gamma$.  These vectors correspond to the diagonals of the rank-one matrix $x x^*$, which is why we will refer to the vectors $x \circ S_j \overline{x}$ as diagonals. 

As $\gamma \leq \delta < d$, only a part of the matrix $xx^*$ is recoverable. More precisely, if \Cref{assumption: window} is true for $\gamma \leq \delta$, we can reconstruct the matrix
\begin{align} \label{eq: matrix of diagonals}
    X_{\ell_1,\ell_2} \defeq \begin{cases}
        (x \circ S_{\ell_2 - \ell_1} \overline{x})_{\ell_1} = x_{\ell_1}\overline{x}_{\ell_2}, &\text{if} ~ \min\left\{ \vert \ell_1 - \ell_2 \vert, d - \vert \ell_1 - \ell_2\vert \right\} < \gamma,\\ 0, &\text{otherwise.}
    \end{cases} 
\end{align}
This matrix is symmetric, that means the lower off-diagonals provide the same information as the upper off-diagonals. Hence, in the following, we will avoid discussing the cases $- \gamma < j <0$.

Using the main diagonal of $X$, i.e., $x \circ  S_0\overline{x} = \vert x \vert^2$, the magnitudes of $x$ can be recovered. More stable approaches for magnitude estimation can be found in \cite{preskitt2021admissible, melnyk2023phase}. 

The first off-diagonal of $X$, i.e., $x \circ  S_1\overline{ x}$, provides the argument differences $\arg(x_\ell) - \arg(x_{\ell+1})$ for all $\ell \in [d]$. As $x$ can be recovered only up to its global phase, $\arg(x_0)$ can be chosen arbitrarily. The phases of $x_\ell, ~ \ell >0$, are then found recursively from the argument differences. This approach can be linked to the greedy angular synchronization method discussed in \cite{iwen2016fast}. Later, \cite{viswanathan2015fast, iwen2020phase} suggested to alternatively use an eigenvector-based approach to angular synchronization. Define the matrix of phase differences
\begin{align*}
\left(\mathrm{sgn}(X)\right)_{\ell_1,\ell_2} \defeq \begin{cases}
        \mathrm{sgn}(x_{\ell_1})\mathrm{sgn}(\overline{x}_{\ell_2}), &\quad X_{\ell_1,\ell_2} \neq 0,\\
        0, &\quad  \text{otherwise,}
    \end{cases}
\end{align*}
corresponding to $X$. Then, the vector of phases of $x$, i.e., $\mathrm{sgn}(x)$, is the top eigenvector of the matrix $\mathrm{sgn}(X)$. Other versions for phase synchronization, together with reconstruction guarantees, were presented in \cite{preskitt2018phase, filbir2021recovery}.

The main steps of the reconstruction algorithm for ptychographic data based on Wigner Distribution Deconvolution are summarized in \Cref{alg: algorithm_WDD}.

\RestyleAlgo{ruled}
\begin{algorithm} 
\caption{Recovery from ptychographic measurements}\label{alg: algorithm_WDD}
\vspace{1mm}
\begin{flushleft}

\textbf{Input:} Ptychographic measurements $Y \in \R^{d\times d}$ as in  \eqref{eq: ptychographic_measurements_noisefree} with $w \in \Complex^d$ satisfying $\mathrm{supp}(w) = [\delta]$ for $\delta < d$ and \Cref{assumption: window} for $0 <\gamma \leq \delta$.

\vspace{1mm}

\textbf{Step 1:} Compute $\left( F \left[x \circ S_j\overline{x}  \right] \right)_k$ for all $j \in [\gamma] , ~k \in [d],$ via \Cref{cor: WDD_recovery}.

\vspace{1mm}

\textbf{Step 2:} Via inverse Fourier transforms, compute $\left( x \circ S_j\overline{x} \right)_\ell$ for all $j \in [\gamma] , ~\ell \in [d],$ and build $X$ as in \Cref{eq: matrix of diagonals}.
	
\vspace{1mm}
 
\textbf{Step 3:} Compute the magnitudes $\vert \tilde{x}_\ell\vert = \sqrt{X_{\ell,\ell}},~ \ell \in [d]$.

\vspace{1mm}

\textbf{Step 4:} Compute the top eigenvector $\tilde{z}$ of  $\mathrm{sgn}(X)$. Set $\mathrm{sgn}(\tilde{x}) = \mathrm{sgn}(\tilde{z})$.

\vspace{1mm}

\textbf{Output:} $\tilde{x} = \vert \tilde{x} \vert \cdot \mathrm{sgn}(\tilde{x}) \in \Complex^d$ with $\tilde{x} \sim x$.
\end{flushleft}
\end{algorithm}

\Cref{alg: algorithm_WDD} was shown to satisfy the following recovery guarantee.

\begin{theorem}[{\cite[Theorem 1]{preskitt2018phase}}] \label{thm: recovery guarantee wdd}

Let $\delta > 2$ and $d \geq 4\delta$. Applied to noisy ptychographic measurements \eqref{eq: ptycho_measurements_noise}, \Cref{alg: algorithm_WDD} creates an estimate $\tilde{x} \in \Complex^d$ satisfying
\begin{align*}
    \min_{\theta \in (0,2\pi]} \norm{x - e^{i\theta} \tilde{x}}_2 \leq ~ & 24 \frac{\norm{x}_\infty}{\min_{\ell \in [d]} \vert x_\ell \vert^2} \cdot  \frac{d^{\tfrac{3}{2}}}{\delta^{\tfrac{5}{2}}} \cdot \frac{ \norm{N}_F}{\min_{j \in [\delta],k\in [d]} \vert F[\overline{w}\circ S_j w]_k\vert}\\[3pt] 
    &+  \sqrt{\frac{ \norm{N}_F}{\min_{j \in [\delta],k\in [d]} \vert F[\overline{w}\circ S_j w]_k\vert}} .
\end{align*}
\end{theorem}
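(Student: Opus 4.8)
The plan is to track how the additive noise $N$ in \eqref{eq: ptycho_measurements_noise} propagates through the four steps of \Cref{alg: algorithm_WDD} and then to collect the resulting errors by the triangle inequality. Throughout I abbreviate $\eta \defeq \norm{N}_F / c_{\min}$ with $c_{\min} \defeq \min_{j\in[\delta],k\in[d]}\vert (F[\overline w\circ S_j w])_k\vert$ (positive by \Cref{assumption: window} with $\gamma=\delta$), and I write $u\defeq\sgn(x)$ and $x_{\min}\defeq\min_\ell\vert x_\ell\vert$. First I would quantify Steps 1 and 2. Writing $\tilde Y=Y+N$ and applying \Cref{cor: WDD_recovery} to $\tilde Y$, the deconvolution is linear, so by \Cref{thm: WDD_theorem} the recovered coefficients equal the true $(F[x\circ S_j\overline x])_k$ plus the error $(F^{-1}NF)_{j,k}\big/\overline{(F[\overline w\circ S_j w])_k}$. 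Since conjugation by $F$ and $F^{-1}$ is norm preserving in the sense of \eqref{thm: plancherel} (one checks $\norm{F^{-1}NF}_F=\norm{N}_F$) and each denominator is bounded below by $c_{\min}$ on the band, the Frobenius norm of the coefficient error is at most $\eta$. The inverse transform in Step 2 contributes a factor $d^{-1/2}$ per diagonal, and assembling the Hermitian band of $2\delta-1$ diagonals into $X$ as in \eqref{eq: matrix of diagonals} gives the key estimate $\norm{E}_F\lesssim\eta/\sqrt d$ for $E\defeq\tilde X-X$.

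For Step 3 I would use $\vert\sqrt a-\sqrt b\vert\le\sqrt{\vert a-b\vert}$, so that $\big\vert\,\vert\tilde x_\ell\vert-\vert x_\ell\vert\,\big\vert^2\le\vert E_{\ell,\ell}\vert$. Summing over $\ell$ and applying Cauchy--Schwarz to the $d$ diagonal entries yields $\norm{\,\vert\tilde x\vert-\vert x\vert\,}_2^2\le\sqrt d\,\norm{E}_F\lesssim\eta$, i.e.\ a magnitude error of order $\sqrt\eta$. This is exactly the second, square-root, term in the bound; it is the unavoidable price of taking square roots of diagonal entries that may lie near zero.

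The phase reconstruction in Step 4 is the heart of the argument. The matrix $\sgn(X)$ factors as $\diag(u)\,B\,\diag(\overline u)$, where $B\in\{0,1\}^{d\times d}$ is the symmetric banded circulant carrying ones on the diagonals $-\delta<j<\delta$; hence $u$ is an eigenvector of $\sgn(X)$ with eigenvalue $2\delta-1$, and since $B$ is circulant its eigenvalues are the Dirichlet values $\lambda_m=\sum_{\vert j\vert<\delta}e^{2\pi i jm/d}$, maximised by $\lambda_0=2\delta-1$ at $m=0$. Thus $u$ is the top eigenvector, and $d\ge 4\delta$ guarantees that the second largest eigenvalue is $\lambda_1$, with gap
\[
\lambda_0-\lambda_1 = 4\sum_{k=1}^{\delta-1}\sin^2\!\Big(\tfrac{\pi k}{d}\Big)\gtrsim \frac{\delta^3}{d^2},
\]
using $\sin(\pi k/d)\ge 2k/d$ on $k<\delta\le d/4$. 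Passing to the perturbed matrix, I would bound $\Delta\defeq\sgn(\tilde X)-\sgn(X)$ entrywise by $\vert\Delta_{\ell_1,\ell_2}\vert\lesssim\vert E_{\ell_1,\ell_2}\vert/x_{\min}^2$ on the band; this is where the non-vanishing of $x$ enters and where the factor $x_{\min}^{-2}$ is born, the phase map being Lipschitz only away from zero. The decisive point is to apply the \emph{residual} form of the Davis--Kahan $\sin\Theta$ theorem to the flat eigenvector $\hat u\defeq u/\sqrt d$: since $\sgn(X)\hat u=\lambda_0\hat u$, the angle to the computed eigenvector $\tilde z$ obeys $\sin\theta\lesssim\norm{\Delta\hat u}_2/(\lambda_0-\lambda_1)$, and a single Cauchy--Schwarz over the $2\delta-1$ active entries of each row gives $\norm{\Delta\hat u}_2\lesssim\sqrt{\delta/d}\,\norm{E}_F/x_{\min}^2$. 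With $\norm{E}_F\lesssim\eta/\sqrt d$ and the gap bound this is $\lesssim d\,\eta/(x_{\min}^2\delta^{5/2})$.

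Finally I would reassemble $\tilde x=\vert\tilde x\vert\circ\sgn(\tilde z)$ and split $x-e^{i\theta}\tilde x$ into a phase part and a magnitude part. Projecting $\tilde z$ onto the unit circle costs only a constant factor (for unit-modulus $a$ and any $b\neq0$, $\vert a-\sgn(b)\vert\le 2\vert a-b\vert$), and undoing the normalisation multiplies by $\norm{u}_2=\sqrt d$; weighting the phase error by the magnitudes $\vert x_\ell\vert\le\norm{x}_\infty$ then converts the previous bound into $\lesssim\norm{x}_\infty d^{3/2}\eta/(x_{\min}^2\delta^{5/2})$, which is the first term, while the magnitude part supplies the $\sqrt\eta$ term. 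Absorbing all $\sqrt2$'s, the Davis--Kahan constant and the sign-projection factor into the constant $24$ completes the estimate. I expect the main obstacle to be precisely Step 4: one must (i) recognise $\sgn(X)$ as a phase-modulated circulant and compute its gap exactly, (ii) resist bounding $\norm{\Delta}_2$ by $\norm{\Delta}_F$, which loses a factor $\sqrt{d/\delta}$ and degrades the exponent, and instead use the residual bound applied to the flat top eigenvector, and (iii) control the non-Lipschitz phase map through the lower bound $x_{\min}$. The remaining steps are routine norm bookkeeping via Plancherel's identity.
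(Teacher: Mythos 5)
You should know at the outset that the paper does not prove this statement at all: it is imported verbatim, with citation, as Theorem~1 of \cite{preskitt2018phase}, so the only proof to compare against is the one in that reference. At the level of architecture, your reconstruction is the right one and matches the cited work: the noise passes linearly through the deconvolution with $\norm{F^{-1}NF}_F=\norm{N}_F$ and denominators bounded below by $c_{\min}$; the magnitude step costs $\sqrt{\eta}$ via $\vert\sqrt{a}-\sqrt{b}\vert\le\sqrt{\vert a-b\vert}$, which is exactly the second term; and the phase step rests on the unitary similarity $\sgn(X)=\diag(u)\,B\,\diag(\overline{u})$ with $B$ the banded circulant, whose Dirichlet-kernel spectrum has gap of order $\delta^3/d^2$, combined with the entrywise estimate $\vert\sgn(a)-\sgn(b)\vert\lesssim\vert a-b\vert/x_{\min}^2$. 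Your bookkeeping correctly reproduces the $\norm{x}_\infty\, d^{3/2}/(x_{\min}^2\delta^{5/2})$ scaling.

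There is, however, one genuine gap, and it sits exactly where you locate the ``heart of the argument''. The residual Davis--Kahan inequality $\sin\theta\lesssim\norm{\Delta\hat u}_2/(\lambda_0-\lambda_1)$ is not unconditional: to bound the angle to the top eigenvector of the \emph{perturbed} matrix, the denominator must be (at least) $\lambda_0(\sgn(X))-\lambda_1(\sgn(\tilde X))$, and by Weyl's inequality this costs $\norm{\Delta}_{\mathrm{op}}$, so your step is valid only when $\norm{\Delta}_{\mathrm{op}}$ is small compared with the gap $\asymp\delta^3/d^2$. The complementary regime cannot be dismissed: when $\norm{\Delta}_{\mathrm{op}}\gtrsim\delta^3/d^2$, the theorem's right-hand side is only of order $\norm{x}_\infty\sqrt{\delta}$ (plug $\eta\gtrsim x_{\min}^2\sqrt{d}\,\norm{\Delta}_F\gtrsim x_{\min}^2\delta^3/d^{3/2}$ into the first term), whereas the only free bound on the phase error is $2\sqrt{d}\,\norm{x}_\infty$; since $d$ may be much larger than $\delta$, no ``either the perturbation is small or the bound is trivial'' case analysis closes this, and the theorem must hold for \emph{every} $N$. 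So, as written, your argument proves the statement only under an additional smallness condition on $\norm{N}_F$ of roughly $c_{\min}x_{\min}^2\delta^3/d^{3/2}$. The cited reference gets around this with an eigenvector-perturbation result tailored to angular synchronization, which exploits that $\sgn(\tilde X)$ is itself a matrix of unimodular entries supported on the same band (a graph-Laplacian/Dirichlet-energy argument on the band graph), rather than generic Hermitian perturbation theory; that structural input is the missing idea. Two smaller points: the identification of the second-largest eigenvalue of $B$ as the Dirichlet value $\lambda_1$ needs the sidelobe estimate $\lambda_m\le 1/\sin(\pi m/d)$ together with $d\ge 4\delta$, and the theorem's second term carries constant exactly $1$, so the constants you absorb must be tracked there; both are routine to repair.
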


\Cref{thm: recovery guarantee wdd} states that exact recovery of the ground-truth object from noise-free measurements is possible via \Cref{alg: algorithm_WDD}, i.e., the output $\tilde{x}$ of \Cref{alg: algorithm_WDD} indeed satisfies $\tilde{x} \sim x$. Moreover, it tells that noise can affect the reconstruction quality only up to some amount that depends on the level of noise. That means, it can be reasonable to apply the algorithm to noisy data. However, exact recovery of the ground-truth is not to be expected.

\section{Background noise removal} \label{sec: background_noise_removal}

In the following, we investigate an adaption of the WDD reconstruction method for a special type of noise, so-called background noise. 

We consider ptychographic measurements
\begin{equation}\label{eq: ptycho_measurements}
     \tilde{Y}_{\ell,r} = Y_{\ell,r} + b_{\ell}, \quad \ell,r \in [d],
\end{equation}
with unknown background noise $b_\ell \in \R, \ \ell \in [d]$, and noise-free measurements $Y_{\ell,r},  \ \ell,r \in [d]$, as in \eqref{eq: ptychographic_measurements_noisefree}. For each shift $r \in [d]$ of the object, the background is assumed to be constant, i.e., this type of noise is only frequency-dependent.

Applying WDD to measurements with background noise, \Cref{thm: recovery guarantee wdd} provides an upper bound on how much the resulting reconstruction can be offset from the ground-truth object by the noise. However, \Cref{alg: algorithm_WDD} makes no use of the prior knowledge about the noise structure. The background noise is the same for all diffraction patterns. Hence, there is a redundancy that can be incorporated into WDD to improve its performance. 

Firstly, we note that only the zeroth Fourier coefficients are affected by the background noise in the WDD method.

\begin{proposition} \label{prop: theorem_WDD_background}
For all $k, j \in [d]$, ptychographic measurements \eqref{eq: ptycho_measurements} satisfy
\begin{align*}
\big(F^{-1} \tilde{Y} F\big)_{j,k}
= \big(F\left[x \circ  S_j\overline{ x}\right]\big)_k \cdot \big(\overline{F\left[\overline{w} \circ  S_j w\right]}\big)_k + d \big(F^{-1} b \big)_j \mathds{1}_{k = 0} .
\end{align*}

\end{proposition}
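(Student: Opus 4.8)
The plan is to exploit the linearity of the deconvolution map $A \mapsto F^{-1} A F$ together with the noise-free identity already established in \Cref{thm: WDD_theorem}. First I would decompose the noisy measurement matrix as $\tilde Y = Y + B$, where $Y$ carries the noise-free intensities \eqref{eq: ptychographic_measurements_noisefree} and $B \in \R^{d\times d}$ is the pure background contribution with entries $B_{\ell,r} = b_\ell$. The crucial structural observation is that $B$ is constant along each row, i.e.\ its entries are independent of the shift index $r$. Since $A \mapsto F^{-1} A F$ is linear, we immediately get $(F^{-1} \tilde Y F)_{j,k} = (F^{-1} Y F)_{j,k} + (F^{-1} B F)_{j,k}$, and by \Cref{thm: WDD_theorem} the first summand equals $\big(F[x \circ S_j \overline x]\big)_k \cdot \overline{\big(F[\overline w \circ S_j w]\big)}_k$, which is precisely the first term of the claimed formula. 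It therefore remains only to evaluate the background term $(F^{-1} B F)_{j,k}$.

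Next I would compute this term directly from the definitions of $F$ and $F^{-1}$. Writing it out and separating the two summation variables,
\[
(F^{-1} B F)_{j,k} = \sum_{\ell \in [d]} (F^{-1})_{j,\ell}\, b_\ell \sum_{r \in [d]} (F)_{r,k}.
\]
The inner sum over the shift variable is the key computation: $\sum_{r \in [d]} (F)_{r,k} = \sum_{r \in [d]} e^{-\frac{2\pi i r k}{d}}$ is a geometric sum equal to $d$ when $k \equiv 0 \ (\MOD d)$ and vanishing otherwise, so $\sum_{r \in [d]} (F)_{r,k} = d\, \mathds{1}_{k=0}$. The remaining sum over $\ell$ is, by definition, exactly $(F^{-1} b)_j$. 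Multiplying the two factors yields $d\, (F^{-1} b)_j\, \mathds{1}_{k=0}$, which is the second term in the statement. Combining both contributions gives the asserted identity for all $j,k \in [d]$.

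The main obstacle, such as it is, is purely organizational rather than mathematical: one must keep careful track of which matrix index is acted upon from the left by $F^{-1}$ (the frequency index $\ell$) and which from the right by $F$ (the shift index $r$), and then recognize that the shift-invariance of the background makes its discrete Fourier transform in the $r$-variable collapse onto the single frequency $k=0$. No argument beyond \Cref{thm: WDD_theorem} and an elementary geometric-sum evaluation is required; conceptually, the proposition simply records that additive, shift-independent noise is confined entirely to the zeroth Fourier coefficient of the deconvolved data.
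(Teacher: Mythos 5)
Your proposal is correct and follows essentially the same route as the paper: decompose $\tilde Y = Y + B$ with the rank-structured background matrix, invoke \Cref{thm: WDD_theorem} for the noise-free part by linearity, and observe that the shift-independence of $B$ makes $BF$ collapse onto the zeroth column via the geometric sum $\sum_{r \in [d]} e^{-2\pi i rk/d} = d\,\mathds{1}_{k=0}$. The paper phrases this as matrix multiplications ($BF$ first, then $F^{-1}$ applied to the result) rather than entrywise sums, but the argument is identical.
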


\begin{proof}[Proof.]
Since the background noise is only frequency dependent, the noise-free and the noisy measurements in \eqref{eq: ptycho_measurements} are related via
\begin{equation*}
    \tilde{Y} = Y + \begin{pmatrix}
        \vert & & \vert\\
        b & \cdots & b\\
        \vert & & \vert
    \end{pmatrix},
\end{equation*}
with $b = (b_j)_{j \in [d]}$.
We apply the same transforms as in \Cref{thm: WDD_theorem} to the measurements $\tilde{Y}$ and obtain the following relation:
\begin{align*}
    F^{-1}\tilde{Y} F &= F^{-1} Y F + F^{-1} \begin{pmatrix}
        b & \cdots & b\\
    \end{pmatrix} F \\
    & = F^{-1} Y F + F^{-1} \begin{pmatrix}
        d b & 0 & \ldots & 0\\
    \end{pmatrix}
    = F^{-1} Y F + d\begin{pmatrix}
        F^{-1}b & 0 & \ldots & 0\\
    \end{pmatrix}.
\end{align*}
\end{proof}

If \Cref{assumption: window} holds for some $\gamma \leq \delta$, all Fourier coefficients $\left(F\left[x \circ  S_j\overline{ x}\right]\right)_k$ can be reconstructed exactly for $k > 0$ and all $j \in [\gamma]$ as in \Cref{cor: WDD_recovery}.

We choose to discard the components $ \left(F\left[x \circ  S_j\overline{ x}\right]\right)_0, \ j \in [d]$, which are corrupted by the background noise. Our strategy is to reconstruct the zeroth coefficients before proceeding with the rest of the steps in \Cref{alg: algorithm_WDD}.

\subsection{Reconstruction algorithm} \label{sec: reconstruction algorithm}

To reconstruct the zero frequencies, we use the higher-order relation between the diagonals induced by the rank-one structure of $x x^*$. That is, for all $j, \ell \in [d]$, the corresponding two diagonals are related by 
    \begin{align} \label{eq: equality_of_diagonals}
        (x \circ S_j \overline{x}) \circ S_{\ell}\overline{(x \circ S_j \overline{x})} =  (x \circ S_{\ell} \overline{x}) \circ S_{j}\overline{(x \circ S_{\ell} \overline{x})}.
    \end{align}
This equality was previously made use of to circumvent the cases where \Cref{assumption: window} does not hold, which led to the so-called subspace completion strategy \cite{forstner2020well}.

The relation \eqref{eq: equality_of_diagonals} can be expressed in terms of frequencies via the Fourier transform. The convolution theorem \eqref{thm: convolution theorem}, together with the  properties of shift, modulation, and reversal operators in \Cref{l: basic_properties}, yields
\[
d F[ (x \circ S_j \overline{x}) \circ S_{\ell}\overline{(x \circ S_j \overline{x})}]
= F[ x \circ S_j \overline{x}] * F[S_{\ell}\overline{(x \circ S_j \overline{x})}]
= f^j * M_\ell R_d \overline{f^j}.
\]
Here and in the following, we abbreviate the Fourier coefficients as
\begin{equation}\label{eq: def f}
     f_{k}^j \defeq \left(F\left[x \circ  S_j\overline{ x}\right]\right)_k
\end{equation}
for all $j, k \in [d]$.
Expanding the convolution gives
\begin{align} \label{eq: equality_fourier_diagonals}
\sum_{k = 0}^{d-1} e^{-\frac{2\pi i \ell (k-s)}{d}} f^j_k ~ \overline{f^j}_{k-s} 
= \sum_{k = 0}^{d-1} e^{-\frac{2\pi i j (k-s)}{d}} f^\ell_k ~ \overline{f^\ell}_{k-s}
\end{align}
for all $s \in [d]$.

If \Cref{assumption: window} is satisfied for some $\gamma \leq \delta$ and $\ell, j \in [\gamma]$, the summands are fully known for all $k \in [d]\backslash \left\{0,s\right\}$. Hence, we can sort equation (\ref{eq: equality_fourier_diagonals}) into unknown and known summands, and obtain the system of linear equations
\begin{align} \label{eq: subspace_completion_linear_system}
& e^{\frac{2\pi i j s}{d}}f^\ell_0 ~ \overline{f^\ell}_{-s} 
+ f^\ell_s ~ \overline{f^\ell}_{0} 
- e^{\frac{2\pi i \ell s}{d}}f^{j}_0 ~ \overline{f^j}_{-s} 
- f^j_s ~ \overline{f^j}_{0} \\ \notag
& \qquad = \sum_{\substack{k = 1 \\ k \neq s}}^{d-1} e^{-\frac{2\pi i \ell (k-s)}{d}} f^j_k ~ \overline{f^j}_{k-s} 
- e^{-\frac{2\pi i j (k-s)}{d}}f^\ell_k ~ \overline{f^\ell}_{k-s} \defeqr c_{\ell, j, s},
\end{align}
for all $s \in [d]$. Note that the case $s=0$ leads only to squared magnitudes $\vert f^j_0 \vert^2$, which is why it is left out.

This linear system has four real unknowns and can be solved to obtain the zero frequencies $f^j_0 = \left(F[x \circ S_j \overline{x}]\right)_0$ for all $j \in [\gamma]$.

The number of unknowns can be further reduced by considering the case $\ell=0$. Recall that $(x \circ S_0 \overline{x})_k = \vert x_k \vert^2 \in \R$ for all $k\in [d]$ so that $f_s^0 = \overline{f^0}_{-s}$, $s \in [d]$, and 
\[
f_0^0 = \sum_{k=0}^{d-1} (x \circ S_0 \overline{x})_k \ge 0.
\]
This can be incorporated into \eqref{eq: subspace_completion_linear_system}, giving
\[
\left( e^{\frac{2\pi i j s}{d}} + 1\right) f^0_s ~ f^0_0 
- f^{j}_0 ~ \overline{f^j}_{-s} - f^j_s ~ \overline{f^j}_{0}
= c_{0,j,s}, \quad s \in [d]\backslash\{0\}.
\]
If $z_{j,s} \defeq \left( e^{\frac{2\pi i j s}{d}} + 1\right) f^0_s$ is zero, the above equation becomes
\[
f^{j}_0 ~ \overline{f^j}_{-s} + f^j_s ~ \overline{f^j}_{0}
= - c_{0,j,s},
\]
or, equivalently,
\begin{equation}\label{eq: subspace 1}
\begin{aligned}
\RE f^{j}_0 ~ [\RE f^j_{-s} + \RE f^j_{s}]  + \IM f^{j}_0 ~ [\IM f^j_{-s} + \IM f^j_{s}] & = - \RE c_{0,j,s}, \\
\RE f^{j}_0 ~ [\IM f^j_{s} - \IM f^j_{-s}]  + \IM f^{j}_0 ~ [\RE f^j_{-s} - \RE f^j_{s}] & = - \IM c_{0,j,s}. \\
\end{aligned}
\end{equation}
Otherwise, by multiplying it with $\overline{z}_{j,s}$, we get
\begin{equation*}
f^0_0 \vert z_{j,s}\vert^2
- f^{j}_0 ~ \overline{f^j}_{-s} \overline{z}_{j,s} - f^j_s ~ \overline{f^j}_{0} \overline{z}_{j,s}
= c_{0,j,s} \overline{z}_{j,s}.
\end{equation*}
As $f^0_0 \vert z_{j,s}\vert^2 \in \R$, the imaginary part of the equation reads as
\begin{equation}\label{eq: subspace 2}
\begin{aligned}
& \RE f^{j}_0 ~ [\IM(f^j_{-s} z_{j,s}) - \IM(f^j_{s} \overline{z}_{j,s}) ]  + \IM f^{j}_0 ~ [\RE(f^j_{s} \overline{z}_{j,s}) - \RE(f^j_{-s} z_{j,s})] \\
& \qquad \qquad \qquad \qquad \qquad \qquad \qquad \qquad \qquad \qquad \qquad=  \IM(c_{0,j,s} \overline{z}_{j,s}).
\end{aligned}
\end{equation}
Hence, finding four real unknowns from \eqref{eq: subspace_completion_linear_system} can be reduced to a linear system for just two real unknowns formed by \eqref{eq: subspace 1} or \eqref{eq: subspace 2} for $s \in [d]\backslash\{0\}$. 

The last step is to reconstruct $f^0_0$ from \eqref{eq: subspace_completion_linear_system} using the case $s = 0$. We get
\[
\vert f^0_0 \vert^2 - \vert f^j_0 \vert^2 = c_{0,j,0},
\]
which, with $f^0_0 \geq 0$, yields 
\[
f_0^0 
= \sqrt{\vert f^0_0 \vert^2}
= c_{0,j,0} + \vert f^j_0 \vert^2,
\quad \text{for any} \ j \in [\gamma] \backslash \{0\}.
\]
For a better stability, we average over $j$,
\begin{equation} \label{eq: subspace 0 component}
f_0^0 = \left( \frac{1}{\gamma-1} \sum_{j=1}^{\gamma-1} \left[ c_{0,j,0} + \vert f^j_0 \vert^2 \right] \right)^{1/2}.    
\end{equation}

The steps above provide a recovery strategy for the lost frequencies, summarized in \Cref{alg: algorithm general object}. The theoretical analysis of this procedure is a complicated task which remains to be tackled in future work.

\RestyleAlgo{ruled}
\begin{algorithm}[h!] 
\caption{Recovery from ptychographic measurements with background noise}\label{alg: algorithm general object}
\vspace{1mm}
\begin{flushleft}

\textbf{Input:} Ptychographic measurements $\tilde{Y} \in \R^{d\times d}$ as in  \eqref{eq: ptycho_measurements}  with $w \in \Complex^d$ satisfying $\mathrm{supp}(w) = [\delta]$ for $\delta < d$ and \Cref{assumption: window} for $0 <\gamma \leq \delta$.

\vspace{2mm}

\textbf{Step 1:}  Compute $f^j_k$ for all $j \in [\gamma],\ k \in [d] \backslash \left\{0\right\}$, via \Cref{prop: theorem_WDD_background}. 

\vspace{1mm}

\textbf{Step 2:}  Reconstruct $f^j_0$ for all $j \in [\gamma]\backslash\{0\}$ via the linear system \eqref{eq: subspace 2}. 

\vspace{1mm}

\textbf{Step 3:}  Reconstruct $f^0_0$ via \eqref{eq: subspace 0 component}. 

\vspace{1mm}

\textbf{Step 4:} Build $X$ as in \eqref{eq: matrix of diagonals} by applying the inverse Fourier transform to the Fourier coefficients obtained in Step 1 - 3.

\vspace{1mm}
 
\textbf{Step 5:} Compute the magnitudes $\vert \tilde{x}_\ell\vert = \sqrt{X_{\ell,\ell}},~ \ell \in [d]$.

\vspace{1mm}

\textbf{Step 6:} Compute the top eigenvector $\tilde{z}$ of  $\mathrm{sgn}(X)$. Set $\mathrm{sgn}(\tilde{x}) = \mathrm{sgn}(\tilde{z})$.

\vspace{1mm}

\textbf{Output:} $\tilde{x} = \vert \tilde{x} \vert \cdot \mathrm{sgn}(\tilde{x}) \in \Complex^d$. 
\end{flushleft}

\end{algorithm}

\subsection{Reconstruction strategy and guarantees for phase objects} \label{sec: phase object algorithm}

In the following, we consider phase objects, which allow us to decouple the magnitude and the phase recovery for the lost coefficients $f_0^j,~ j \in [d]$. 
This class of objects is represented by the set
\[
\T^d \defeq \left\{ v \in \Complex^d: \vert v_j\vert = 1, ~ j \in [d]\right\}.
\]

Turning back to the recovery of $f^j_0$, the inclusion $x \in \mathbb T^d$ provides a straightforward procedure for magnitude recovery.

\begin{proposition} \label{prop: abs_zero_freq_phase_object}
Let $x \in \T^d$. For every $j \in [d]$, the diagonals $x \circ S_j \overline x$ belong to $\mathbb T^d$, and their Fourier transforms satisfy $\norm{f^j}_2^2 = d^2$. Consequently, $\vert f_0^j \vert$ can be recovered by 
\begin{equation*}
\vert f^j_0 \vert^2 = d^2 - \sum_{k=1}^{d-1} \vert f^j_k \vert^2.
\end{equation*}
\end{proposition}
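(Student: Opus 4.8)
The plan is to establish the three claims in sequence, each following quickly from the definitions together with Plancherel's identity \eqref{thm: plancherel}. First I would verify that the diagonals lie in $\T^d$. By the definition of the Hadamard product and the shift operator, the $\ell$-th entry of $x \circ S_j \overline{x}$ equals $x_\ell \, \overline{x}_{\ell+j}$. Since $x \in \T^d$ means $\vert x_\ell \vert = 1$ for every $\ell \in [d]$, the product satisfies $\vert x_\ell \, \overline{x}_{\ell+j}\vert = \vert x_\ell\vert \cdot \vert x_{\ell+j}\vert = 1$, so every entry of $x \circ S_j \overline{x}$ has modulus one, i.e., $x \circ S_j \overline{x} \in \T^d$.

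Next I would compute the norm of $f^j$. Recalling the abbreviation $f^j = F[x \circ S_j \overline{x}]$ from \eqref{eq: def f}, applying Plancherel's identity \eqref{thm: plancherel} to the vector $x \circ S_j \overline{x}$ gives $\norm{f^j}_2^2 = \norm{F[x \circ S_j \overline{x}]}_2^2 = d \, \norm{x \circ S_j \overline{x}}_2^2$. Since the diagonal has $d$ entries, each of modulus one by the first step, $\norm{x \circ S_j \overline{x}}_2^2 = d$, and therefore $\norm{f^j}_2^2 = d^2$.

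Finally, the recovery formula follows by isolating the zeroth coefficient. Writing out the squared norm as $\norm{f^j}_2^2 = \sum_{k \in [d]} \vert f^j_k\vert^2 = \vert f^j_0\vert^2 + \sum_{k=1}^{d-1} \vert f^j_k\vert^2$ and substituting $\norm{f^j}_2^2 = d^2$, a rearrangement yields the stated expression for $\vert f^j_0\vert^2$.

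There is no genuine obstacle in this argument; it is a direct computation. The only point requiring a little care is the normalization: in the conventions of this paper, Plancherel's identity \eqref{thm: plancherel} carries the factor $d$ rather than being an isometry, so one must track this factor to arrive at $\norm{f^j}_2^2 = d^2$ rather than $d$. It is also worth noting that the argument uses $x \in \T^d$ in an essential way, since for a general object the diagonals $x \circ S_j \overline{x}$ no longer have constant modulus and $\norm{f^j}_2^2$ is not determined a priori, which is precisely why the phase-object case admits this simple closed-form magnitude recovery while the general case requires the linear system \eqref{eq: subspace_completion_linear_system}.
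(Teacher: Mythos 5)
Your proposal is correct and follows essentially the same route as the paper's proof: both verify entrywise that the diagonal lies in $\T^d$, apply Plancherel's identity \eqref{thm: plancherel} with the normalization factor $d$ to obtain $\norm{f^j}_2^2 = d^2$, and then isolate $\vert f^j_0\vert^2$. Your remark about tracking the factor $d$ in the non-isometric convention is exactly the point the paper's computation encodes.
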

\begin{proof}[Proof.]
For $x \in \T^d$, we have
\[
\left\vert \left(x \circ S_j\overline{x}\right)_{\ell}\right\vert = \vert x_{\ell}\vert \vert x_{\ell + j} \vert = 1,
\]
for all $ \ell, j \in [d]$. Plancherel's identity \eqref{thm: plancherel} provides
\begin{equation*} 
\sum_{k=0}^{d-1} \vert f^j_k \vert ^2 = \norm{f^j}_2^2 
= \left\|F\left[x \circ S_j\overline{x}\right]\right\|_2^2
= d \left\|x\circ S_j\overline{x}\right\|_2^2
= d \cdot \sum_{\ell = 0}^{d-1} \left\vert x_{\ell} x_{\ell + j} \right\vert^{2}
= d^2. 
\end{equation*}
\end{proof}

Next, the phase of $f^j_0$ has to be recovered. If, by \Cref{prop: abs_zero_freq_phase_object}, the resulting $\vert f_{0}^j \vert = 0$, the lost coefficient $f_{0}^j$ is zero. For all other cases, we need to find the argument of $f^j_0$, shortly denoted by 
\begin{equation}\label{eq: f phi definitions}
\varphi_j \defeq \arg(f^j_0).
\end{equation}

By \Cref{prop: abs_zero_freq_phase_object}, the diagonals satisfy $x \circ S_j\overline{x} \in \T^d$ for all $j \in [d]$. Rewriting this in terms of its Fourier coefficients gives
\begin{equation*} 
1 = \left\vert \left(x \circ S_j\overline{x}\right)_{\ell}\right\vert^2 =  \left\vert \left(F^{-1}  F \left[x \circ S_j\overline{x}\right]\right)_{\ell} \right\vert^2 
= \frac{1}{d^2} \bigg\vert \vert f^j_0 \vert \cdot e^{i \varphi_j} + \sum_{k=1}^{d-1} e^\frac{2\pi i \ell k}{d}  f^j_k \bigg\vert^2.
\end{equation*}
Hence, the missing argument $\varphi_j$ can be found by solving the following linear system.

\begin{theorem} \label{thm: linear_system}

Let $x \in \T^d$. 
For all $j \in [d]$ with $\vert f_{0}^j \vert > 0$, the argument $ \varphi_j \in (0,2\pi]$ as defined in \eqref{eq: f phi definitions} solves the system
\begin{align} \label{eq: linear_system}
    A^j \begin{pmatrix}
        \cos(\varphi_j)\\ \sin(\varphi_j)
    \end{pmatrix} 
= \frac{d^2}{2\vert f_0^j\vert}
    \begin{pmatrix} 1 - \frac{1}{d^2}\left(\vert a_0^j \vert^2 + \vert f_0^j\vert^2\right)\\
        \vdots\\
         1 - \frac{1}{d^2}\left(\vert a_{d-1}^j \vert^2 + \vert f_0^j\vert^2\right)
    \end{pmatrix},
\end{align}
where
\begin{align}
A^j \defeq \begin{pmatrix}
    \operatorname{Re}(a_0^j) & \operatorname{Im}(a_0^j)\\
        \vdots & \vdots \\
        \operatorname{Re}(a_{d-1}^j) & \operatorname{Im}(a_{d-1}^j)
    \end{pmatrix} \quad \text{and} \quad
a^j \defeq d F^{-1} 
\begin{pmatrix}
    0\\
    f^j_1 \\
    \vdots \\
    f^j_{d-1} \\
    \end{pmatrix}. \label{eq: def A a}
\end{align}
\end{theorem}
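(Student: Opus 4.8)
The plan is to convert the unit-modulus constraint on each diagonal into a single scalar linear equation in $(\cos\varphi_j, \sin\varphi_j)$, one per entry, and then stack these equations to obtain \eqref{eq: linear_system}. The starting point is \Cref{prop: abs_zero_freq_phase_object}: since $x \in \T^d$, every diagonal $x \circ S_j\overline{x}$ again lies in $\T^d$, so $\vert(x \circ S_j\overline{x})_\ell\vert = 1$ for all $\ell \in [d]$. First I would expand each such entry through the inverse Fourier transform, peeling off the zeroth coefficient from the rest,
\begin{equation*}
d\,(x \circ S_j\overline{x})_\ell = \sum_{k=0}^{d-1} e^{\frac{2\pi i \ell k}{d}} f^j_k = f^j_0 + \sum_{k=1}^{d-1} e^{\frac{2\pi i \ell k}{d}} f^j_k = f^j_0 + a^j_\ell,
\end{equation*}
where the tail sum is exactly the $\ell$-th entry of $a^j$ as defined in \eqref{eq: def A a} (the extra factor $d$ in that definition is precisely what cancels the $1/d$ in $F^{-1}$). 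The coefficients $f^j_k$ with $k \neq 0$ are known, so $a^j$ is a computable quantity, and the only unknown on the left is the phase $\varphi_j = \arg(f^j_0)$ from \eqref{eq: f phi definitions}.

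Next I would square moduli. The condition $\vert(x \circ S_j\overline{x})_\ell\vert^2 = 1$ reads $\vert f^j_0 + a^j_\ell\vert^2 = d^2$, and expanding the square gives $\vert f^j_0\vert^2 + \vert a^j_\ell\vert^2 + 2\operatorname{Re}\big(f^j_0\,\overline{a^j_\ell}\big) = d^2$. Substituting the polar form $f^j_0 = \vert f^j_0\vert e^{i\varphi_j}$ and using $\operatorname{Re}(e^{i\varphi_j}\overline{a^j_\ell}) = \operatorname{Re}(a^j_\ell)\cos\varphi_j + \operatorname{Im}(a^j_\ell)\sin\varphi_j$, I would then divide by $2\vert f^j_0\vert$ — legitimate exactly because the hypothesis guarantees $\vert f^j_0\vert > 0$ — to arrive at
\begin{equation*}
\operatorname{Re}(a^j_\ell)\cos\varphi_j + \operatorname{Im}(a^j_\ell)\sin\varphi_j = \frac{d^2}{2\vert f^j_0\vert}\left(1 - \tfrac{1}{d^2}\big(\vert a^j_\ell\vert^2 + \vert f^j_0\vert^2\big)\right).
\end{equation*}

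Finally, collecting this identity over all $\ell \in [d]$ reproduces \eqref{eq: linear_system} verbatim: the left-hand sides are the rows of $A^j\,(\cos\varphi_j, \sin\varphi_j)^{\!\top}$ with $A^j$ as in \eqref{eq: def A a}, and the right-hand sides are the entries of the stated vector. Since $\varphi_j$ is the true argument of $f^j_0$, it satisfies each of these equations simultaneously, which is precisely the claim. I expect no genuine conceptual obstacle here — the statement is a direct algebraic consequence of the unit-modulus constraint together with \Cref{prop: abs_zero_freq_phase_object}. The only points demanding care are the normalization of the inverse transform when identifying the tail sum with $a^j$, and the sign bookkeeping in separating real and imaginary parts; both are routine.
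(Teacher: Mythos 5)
Your proposal is correct and follows essentially the same route as the paper's own proof: expand each diagonal entry via the inverse Fourier transform to isolate $f^j_0$ from the known tail $a^j_\ell$, impose the unit-modulus constraint $\vert f^j_0 + a^j_\ell\vert^2 = d^2$ from \Cref{prop: abs_zero_freq_phase_object}, expand the square, and divide by $2\vert f^j_0\vert$ to obtain one linear equation in $(\cos\varphi_j,\sin\varphi_j)$ per entry. The only cosmetic difference is that you write the cross term as $\operatorname{Re}\bigl(f^j_0\,\overline{a^j_\ell}\bigr)$ while the paper writes $\vert f^j_0\vert\operatorname{Re}\bigl(a^j_\ell e^{-i\varphi_j}\bigr)$, which is the same quantity.
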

The proof of \Cref{thm: linear_system} can be found in \Cref{sec: proof thm linear system}. Essentially, it is not required to solve the, possibly large, linear system \eqref{eq: linear_system}. The $(k+1)$th equation of system \eqref{eq: linear_system} is equivalent to 
\begin{align*}
    \operatorname{Re}\left(a_k^j e^{-i \varphi_j}\right) = \frac{d^2 - \vert a_k^j\vert^2 - \vert f_0^j\vert^2}{2 \vert f_0^j \vert},
\end{align*}
where 
\begin{align*}
    \operatorname{Re}\left(a_k^j e^{-i \varphi_j}\right) = \vert a_k^j\vert \cos(\arg(a_k^j) - \varphi_j).
\end{align*}
Hence, we find that
\begin{align} \label{eq: recovery_formula}
    \varphi_j \in  \bigcap_{k=0}^{d-1} \left\{\arg(a^j_k) \pm \arccos\left( \frac{d^2-\vert a^j_k\vert^2-\vert f_0^j\vert^2}{2\vert a^j_k\vert\vert f_0^j\vert} \right)\right\}.
\end{align}
If $\rank(A^j) = 2$, the intersection of the sets in \eqref{eq: recovery_formula} has only one element. Computing these values for at least two distinct $k \in [d]$, we can determine the true value of the argument $\varphi_j$ of $f_0^j$.

Systems \eqref{eq: linear_system} provide only the recovery procedure for each $\varphi_j$ which is not necessarily unique. On the other hand, all $\varphi_j$ are linked to the same object $x$. Hence, by looking at multiple systems \eqref{eq: linear_system} at the same time, we are able to determine whether unique recovery of $x$ is possible or not. 
It turns out that only two families of objects cannot be recovered uniquely in the presence of background noise.

\begin{theorem}[Negative results] \label{thm: ambiguities}

$ $

\begin{enumerate}
    \item[(i)]  For all $m \in [d]$, the objects $x \sim x^m \defeq  \left(e^{\frac{2\pi i km}{d} }\right)_{k\in [d]}$ with appropriately chosen $b^m \in \R^d$ produce the same measurements \eqref{eq: ptycho_measurements}. 
    \item[(ii)] Suppose $d$ is even and consider $x^q$ for $q = 1,2$, defined by
    \begin{align*}
    x_k^q \defeq  e^{-\frac{2\pi i k m}{d}} \cdot \begin{cases}
        1, ~~k ~\text{even},\\ -(-1)^q e^{(-1)^q \frac{1}{2}i \rho}, ~~k ~\text{odd},
    \end{cases}
~ m \in [d], \ \rho \in (-\pi,\pi).
\end{align*}
If $x \sim x^1$, there exist backgrounds $b^q, \ q = 1,2,$ such that $(x^1,b^1)$ and $(x^2,b^2)$ result in the same measurements \eqref{eq: ptycho_measurements}.
\end{enumerate}
\end{theorem}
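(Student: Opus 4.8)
The plan is to reduce both ambiguities to a single mechanism supplied by \Cref{prop: theorem_WDD_background}: the background enters the transformed measurements $F^{-1}\tilde Y F$ only in the column $k=0$. Hence any modification of the object that leaves the Fourier coefficients $f^j_k$ unchanged for all $j\in[d]$ and all $k\in[d]\backslash\{0\}$ can be compensated by a suitable background. I would first make this precise. Suppose two objects $x,x'$ satisfy $f^j_k(x)=f^j_k(x')$ for every $j$ and every $k\neq0$. By \Cref{thm: WDD_theorem} the matrix $F^{-1}(Y(x)-Y(x'))F$ has entries $\big(f^j_k(x)-f^j_k(x')\big)\overline{(F[\overline w\circ S_j w])_k}$, so it is supported on the single column $k=0$; call it $D$ with $D_{j,k}=v_j\mathds{1}_{k=0}$. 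Conjugating back via $Y=F(F^{-1}YF)F^{-1}$ gives $(Y(x)-Y(x'))_{\ell,r}=\big(FDF^{-1}\big)_{\ell,r}=\tfrac1d (Fv)_\ell$, which is independent of the shift index $r$. Since $Y(x)$ and $Y(x')$ are real-valued, this difference is a real, shift-independent vector, and choosing backgrounds that differ by it makes the two measurement sets \eqref{eq: ptycho_measurements} coincide.

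With this reduction, part (i) is a one-line computation. For $x^m_k=e^{2\pi i km/d}$ the diagonals are $(x^m\circ S_j\overline{x^m})_\ell=x^m_\ell\overline{x^m_{\ell+j}}=e^{-2\pi i jm/d}$, constant in $\ell$, so $f^{j,m}_k=0$ for every $k\neq0$ and every $m$. Thus all $x^m$ share the same off-zero coefficients, and the mechanism above yields a common measurement matrix for all of them; concretely, taking $x^0$ with zero background as reference gives $b^m_\ell=|(Fw)_\ell|^2-|(Fw)_{\ell-m}|^2$.

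For part (ii) I would compute the diagonals by the parity of $j$. Writing $x^q_k=e^{-2\pi i km/d}u^q_k$, the phase prefactors cancel and $(x^q\circ S_j\overline{x^q})_\ell=e^{2\pi i jm/d}\,u^q_\ell\,\overline{u^q_{\ell+j}}$. For $j$ even, $\ell$ and $\ell+j$ have equal parity and one checks the diagonal equals the constant $e^{2\pi i jm/d}$ for both $q$, so $f^{j,q}_k=0$ for $k\neq0$. For $j$ odd, the diagonal depends on the parity of $\ell$ and can be written as $e^{2\pi i jm/d}\big(\pm\cos(\rho/2)+i\sin(\rho/2)(-1)^\ell\big)$, with sign $+$ for $q=1$ and $-$ for $q=2$. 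Here I would use that $d$ is even, so $(-1)^\ell=e^{2\pi i \ell(d/2)/d}$ and its Fourier transform is the single spike $d\,\mathds{1}_{k=d/2}$. Hence $f^{j,q}_k$ is nonzero only for $k\in\{0,d/2\}$, the value at $k=d/2$ being $e^{2\pi i jm/d}\,id\sin(\rho/2)$, identical for $q=1,2$; only the $k=0$ entry flips sign. Therefore $f^{j,1}_k=f^{j,2}_k$ for all $k\neq0$, and the reduction produces backgrounds $b^1,b^2$ with equal measurements. It remains to record that $x^1\not\sim x^2$: the ratio $x^2_\ell/x^1_\ell$ equals $1$ on even and $-e^{i\rho}$ on odd indices, which is non-constant for $\rho\in(-\pi,\pi)$, so the ambiguity is genuine.

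The main obstacle is the frequency computation in part (ii). One must verify that the discrepancy between $x^1$ and $x^2$ is confined to the frequencies $k\in\{0,d/2\}$ of the odd diagonals and that the $k=d/2$ contributions coincide, so that only $k=0$—the sole frequency the background can touch—differs. Identifying the alternating factor $(-1)^\ell$ with the frequency $k=d/2$, which is precisely where the evenness of $d$ is used, is the crux; the remainder is bookkeeping of parities together with the verification that the compensating backgrounds are real, which is automatic since the underlying $Y$-matrices are real-valued.
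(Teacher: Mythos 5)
Your proposal is correct, and it rests on the same mechanism as the paper's proof: by \Cref{prop: theorem_WDD_background} the background can only reach the column $k=0$ of $F^{-1}\tilde Y F$, so it suffices to show that the competing objects share all Fourier coefficients $f^j_k$ with $k\neq 0$. Your verification of this is the same as the paper's --- for (i) all diagonals are constant, and for (ii) the even-$j$ diagonals are constant while the odd-$j$ diagonals are supported on frequencies $\{0,d/2\}$ with identical $k=d/2$ entries (your constants $d\cos(\rho/2)$ and $id\sin(\rho/2)$ are in fact the correct ones; the paper's $\tfrac{d}{2}$ factors there are a harmless slip). Where you genuinely diverge is in how the compensating backgrounds are produced and shown to be admissible. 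The paper works in the Fourier domain of the $j$-variable: it prescribes $(F^{-1}b^m)_j$ from the $k=0$ discrepancy, and must then separately prove that $b^m$ is real, which requires checking conjugate symmetry of $F^{-1}b^m$ via the window identity \eqref{eq: symmetry of wigner window} and \Cref{l: fourier of real vector conjugate symmetric}, followed by an adjustment of $(F^{-1}b^0)_0$ to keep all measurements non-negative. Your route short-circuits this: writing the single-column discrepancy as $D = v e_0^T$ and conjugating back, $\left(FDF^{-1}\right)_{\ell,r}=\tfrac{1}{d}(Fv)_\ell$ is manifestly independent of $r$, and its realness is automatic because it equals the difference of the two real matrices $Y(x)-Y(x')$; moreover, taking one background to be zero makes the common measurement matrix a noiseless $Y\ge 0$, so non-negativity also comes for free (this last point you leave implicit, but it is immediate in your construction, and the measurement model \eqref{eq: ptycho_measurements} does not impose it anyway). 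Two further small gains of your write-up: the closed-form background $b^m_\ell = \vert (Fw)_\ell\vert^2 - \vert (Fw)_{\ell-m}\vert^2$ in (i), and the explicit check that $x^1\not\sim x^2$ (the ratio is $1$ on even and $-e^{i\rho}\neq 1$ on odd indices for $\rho\in(-\pi,\pi)$), which the paper leaves implicit but which is needed for the statement to be a genuine ambiguity.
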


\Cref{thm: ambiguities} is proven in \Cref{sec: proof of ambiguity theorem}.

The first example corresponds to the case where the zeroth Fourier coefficients of all diagonals are the only nontrivial entries. Therefore, we have no means to reconstruct the phases of the lost coefficients. This happens only for modulations of a constantly one vector. 

The second case can be interpreted similarly to the conjugate reflection ambiguity well-known for Fourier phase retrieval \cite{beinert2015ambiguities}. In fact, we have
\begin{equation*}
\overline{x^{1}}_{-k} = (-1)^k x^2_k, \quad k\in [d].
\end{equation*}
However, this type of ambiguity only applies to objects described by $(ii)$ and not for all $x \in \T^d$.

For all other $x \in \T^d$ we can guarantee the following.
\begin{theorem}  \label{thm: unique_recovery}
Let \Cref{assumption: window} hold with $\gamma \geq 3$. Assume that $x \in  \T^d$ neither admits $(i)$ nor $(ii)$ in \Cref{thm: ambiguities}. Then, $x$ can be uniquely recovered from the measurements \eqref{eq: ptycho_measurements}.
\end{theorem}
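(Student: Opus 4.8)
The plan is to prove the statement in contrapositive form: I will show that whenever a second phase object $\hat x \in \T^d$ produces the same measurements \eqref{eq: ptycho_measurements} as $x$ (for some background), then either $\hat x \sim x$, or $x$ belongs to one of the two excluded families in \Cref{thm: ambiguities}. By \Cref{prop: theorem_WDD_background} together with \Cref{assumption: window} for $\gamma \ge 3$, producing the same measurements forces $f^j_k[\hat x] = f^j_k[x]$ for all $k \neq 0$ and $j \in \{1,2\}$, the discrepancies at $k=0$ being absorbed into the free background. Passing through the inverse Fourier transform, this is equivalent to the diagonals agreeing up to an additive constant, i.e.
\[
\hat x_k \overline{\hat x}_{k+j} = x_k \overline{x}_{k+j} + c_j, \qquad k \in [d], \ j \in \{1,2\},
\]
for suitable $c_1, c_2 \in \Complex$. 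Writing $v^j_k \defeq x_k \overline{x}_{k+j}$ and $\hat v^j_k \defeq \hat x_k \overline{\hat x}_{k+j}$, both sequences are unimodular for phase objects, which is the structural fact I exploit throughout.

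I first treat the first diagonal. If $c_1 = 0$, then $\hat v^1 = v^1$; dividing the unimodular identity $\hat x_k \overline{\hat x}_{k+1} = x_k \overline{x}_{k+1}$ shows that the ratio $\hat x_k / x_k$ is independent of $k$, hence $\hat x \sim x$. If $c_1 \neq 0$, then every $v^1_k$ lies in $\{z \in \T : z + c_1 \in \T\}$, the intersection of the unit circle with a line, which contains at most two points; so a genuinely distinct $\hat x$ forces $v^1$ to take at most two distinct values. In particular, if $v^1$ takes at least three values — equivalently $\operatorname{rank}(A^1) = 2$ in \Cref{thm: linear_system}, so that $\varphi_1$ is already uniquely determined — then $c_1 = 0$ and $\hat x \sim x$. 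It remains to analyze the degenerate case in which $v^1$ takes at most two values.

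If $v^1$ is constant, then $x_k \overline{x}_{k+1}$ is independent of $k$, which forces $x$ to be a linear phase ramp, i.e. family $(i)$. If $v^1$ takes exactly two values $\alpha \neq \beta$ on $\T$, I bring in the second diagonal through $v^2_k = v^1_k v^1_{k+1}$, whose entries therefore take values among $\alpha^2, \alpha\beta, \beta^2$ according to the adjacency pattern of $\alpha$'s and $\beta$'s. The $j=2$ consistency $\hat v^2_k - v^2_k = c_2$ must hold with the same constant across all patterns that occur; substituting $\hat v^1_k \in \{\alpha + c_1, \beta + c_1\}$ yields $c_2 = 2\alpha c_1 + c_1^2$ on an $(\alpha,\alpha)$ block, $c_2 = 2\beta c_1 + c_1^2$ on a $(\beta,\beta)$ block, and $c_2 = (\alpha+\beta)c_1 + c_1^2$ on a mixed block. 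Since both values occur, a mixed block is present; if in addition an $(\alpha,\alpha)$ or $(\beta,\beta)$ block occurs, equating the two expressions for $c_2$ and using $c_1 \neq 0$ forces $\alpha = \beta$, a contradiction, so again $\hat x \sim x$. The only surviving possibility is that no equal-valued block occurs, i.e. $v^1$ strictly alternates between $\alpha$ and $\beta$; this requires $d$ even, and reconstructing $x$ from the alternating differences shows it to be a $2$-periodically modulated ramp, which is exactly family $(ii)$.

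The main obstacle is the two-valued analysis of the last paragraph: reliably translating the additive-constant consistency at $j=2$ into the combinatorial statement about adjacency patterns, and then verifying that strict alternation coincides, up to global phase, with family $(ii)$ of \Cref{thm: ambiguities} — including the parity constraint that $d$ is even and the closure condition that pins the ramp frequency. Secondary care is needed in the reduction step, namely that ``same measurements'' is genuinely equivalent to the diagonals agreeing up to a constant (this uses the non-vanishing window coefficients from \Cref{assumption: window} at $j=1,2$ to cancel the windows, and the freedom in the background to absorb the $k=0$ entries), and in confirming that the two-valued structure is inherited by $\hat v^1$ (since $\alpha + c_1 \neq \beta + c_1$), so that the pattern bookkeeping applies symmetrically to both objects.
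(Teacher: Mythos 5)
Your route is genuinely different from the paper's. The paper fixes $\vert f^j_0\vert$ via Plancherel and then runs a case analysis on $\rank(A^1)$ and $\rank(A^2)$ of the systems \eqref{eq: linear_system} (the cases of \Cref{table:1}, resolved in \Cref{l: shift1_rank0}, \Cref{l: shift_1_rank_1_no_unique_recovery}, \Cref{l: shift1_rank1_shift2_rank1}, \Cref{l: negative result 2}, \Cref{l: shift1_rank1_shift2_rank2}). You instead compare two phase objects producing the same data directly: the diagonals then agree up to additive constants $c_j$, and the geometry of $\{z \in \T : z + c_1 \in \T\}$ (at most two points) plus the adjacency-pattern bookkeeping at $j=2$ replaces the paper's analysis of the two candidate phases. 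Your reduction step is sound (it needs only \Cref{prop: theorem_WDD_background} and the nonvanishing window coefficients for $j=1,2$), the $c_1=0$ and constant-$v^1$ branches are correct, and your mixed-block versus equal-block contradiction is a clean, more elementary substitute for \Cref{l: shift1_rank1_shift2_rank1}.

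The final branch, however, has a genuine gap: strict alternation of $v^1$ does \emph{not} by itself imply that $x$ belongs to family $(ii)$. Family $(ii)$ has $\rho \in (-\pi,\pi)$, which translates into the two alternating values $\alpha,\beta$ of $v^1$ never being antipodal, i.e.\ $\beta \neq -\alpha$. Alternation alone permits $\beta = -\alpha$: for even $d$, the object $x = (1, i, 1, i, \ldots)$ has $v^1$ alternating between $-i$ and $i$, yet it lies in neither family $(i)$ nor $(ii)$, so the theorem asserts it \emph{is} uniquely recoverable; your chain of implications, as written, would instead classify it as a type-$(ii)$ ambiguity, which is false. The missing ingredient is a constraint you already have in hand but never invoke: both values must satisfy $\alpha + c_1,\ \beta + c_1 \in \T$ with $c_1 \neq 0$. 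Since $\alpha,\beta$ are then the two intersection points of $\T$ with the line $\RE(z\overline{c_1}) = -\vert c_1\vert^2/2$, one gets $\alpha + \beta = -c_1 \neq 0$, so the antipodal case is excluded, and the alternating branch does land exactly in family $(ii)$ (with the closure condition $(\alpha\beta)^{d/2}=1$ pinning the ramp frequency). This is precisely the point the paper handles by proving $\rho^j \neq \pi$ inside \Cref{l: shift_1_rank_1_no_unique_recovery}, and by treating $\vert f^1_0\vert = 0$ --- which is exactly the antipodal alternating case --- separately, before \eqref{eq: linear_system} is ever invoked. With that short geometric addition your argument closes; without it, the proof fails on the antipodal objects.

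A minor point: your parenthetical equivalence ``$v^1$ takes at least three values $\Leftrightarrow \rank(A^1)=2$'' is true but not obvious (it uses $f^1_0 = \sum_k v^1_k$, which forces the two $a$-values in the two-valued case to be antiparallel); since your argument never uses it, it is cleaner to drop the claim.
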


The proof of \Cref{thm: unique_recovery} can be found in \Cref{sec: proof thm uniqueness}. 
Comparing \Cref{thm: unique_recovery} to \Cref{thm: bojarovska_flinth}, we observe that for measurements with background noise more diagonals are required to be taken into account to guarantee unique recovery. More precisely, the Fourier coefficients of the second off-diagonal $(j=2)$ have to be additionally considered to compensate for the coefficients lost due to the background noise. We also note that the zeroth diagonal only provides information about the magnitudes which are a priori known for phase objects. In contrary to the noise-free setting where unique reconstruction is guaranteed for every non-vanishing object, background noise causes two classes of phase objects that are non-vanishing and cannot be reconstructed uniquely. This holds true even if further diagonals are included.

As it was mentioned before, the proof relies upon the underlying rank of the linear systems \eqref{eq: linear_system}. We collect all possible scenarios in \Cref{table:1} and summarize a respective recovery strategy in \Cref{alg: algorithm 2}.
\begin{table}[h!]
\centering
\begin{tblr}{| c | c | c | c |c|} 
\hline 
 $\rank(A^1)$ & $\rank(A^2)$ & $d$ & Result & Corresponding Lemma \\ \hline
 2 &  &  & unique recovery  & \Cref{l: unique recovery from diagonal if coprime} with $j=1$\\  \hline
 0 & &  & negative example $(i)$ & \Cref{l: shift1_rank0} with $j = 1$ 
 \\ \hline 
 1 & 1 &  & not possible & \Cref{l: shift1_rank1_shift2_rank1} \\ \hline 
  1 & 0 & odd & not possible &  \Cref{l: negative result 2}
 \\ \hline 
 1 & 0 & even  & negative example $(ii)$ & \Cref{l: negative result 2} 
 \\ \hline 
  1 & 2 & odd & unique recovery & \Cref{l: unique recovery from diagonal if coprime} with $j=2$ \\ \hline 
    1 & 2 & even & unique recovery & \Cref{l: shift1_rank1_shift2_rank2} \\ \hline

\end{tblr}
\vspace{2mm}
\caption{The solvability of \eqref{eq: linear_system} in \Cref{thm: linear_system} depends on the rank of the matrix $A^j$. The listed cases have to be discussed in the proof of \Cref{thm: unique_recovery}. }
\label{table:1}
\end{table}

\RestyleAlgo{ruled}
\begin{algorithm}[t!] 
\caption{Recovery of a phase object from ptychographic measurements with background noise}\label{alg: algorithm 2}
\vspace{1mm}
\begin{flushleft}

\textbf{Input:} Ptychographic measurements $\tilde{Y} \in \R^{d\times d}$ as in  \eqref{eq: ptycho_measurements}.

\vspace{2mm}

\textbf{Step 1:} Compute $f^1_k$ for all $k \in [d] \backslash \left\{0\right\}$ via \Cref{prop: theorem_WDD_background}. 

Recover $\vert f_0^1\vert$ via \Cref{prop: abs_zero_freq_phase_object} and set up $A^1$.

\vspace{1mm}

    \uIf{$\rank(A^1) = 0$}{
    Stop. Ground-truth $x$ is of type $(i)$.
    }\uElseIf{$\rank(A^1) = 1$}{
    
    Compute $f^2_k$ for all $k \in [d]\backslash\left\{0\right\}$ via \Cref{prop: theorem_WDD_background}.
        
    Recover $\vert f_0^2\vert$ via \Cref{prop: abs_zero_freq_phase_object} and set up $A^2$.

        \uIf{$d$ is odd}{
        Solve \eqref{eq: linear_system} to recover $f^2$. 
        Go to Step 2 with $\gamma = 3$.
        }\uElse{
        Compute the two solutions $f^{1,q}, \ q = 1,2$, using \eqref{eq: recovery_formula}. 
        
            \uIf{$\rank(A^2) = 0$}{
            Stop. Ground-truth $x$ is of type $(ii)$.
            }\uElse{
            Solve \eqref{eq: linear_system} to recover $f^2$. Select $f^{1,q}, \ q = 1,2$, which admits equality in \eqref{eq: phase obj diag relation}. Go to Step 2 with $\gamma = 3$.
            }
        }
    }
    \uElse
    {Solve \eqref{eq: linear_system}. Go to Step 2 with $\gamma = 2$.}

 \vspace{1mm}

\textbf{Step 2:} Build $X$ as in \eqref{eq: matrix of diagonals} with $\gamma = 2$ or $\gamma = 3$  by applying the inverse Fourier transform to the Fourier coefficients obtained in Step 1.

\vspace{1mm}

\textbf{Step 3:} Compute the top eigenvector $\tilde{z}$ of  $\mathrm{sgn}(X)$. Set $\mathrm{sgn}(\tilde{x}) = \mathrm{sgn}(\tilde{z})$.

\vspace{2mm}

\textbf{Output:} $\tilde{x} = \mathrm{sgn}(\tilde{x}) \in \Complex^d$ with $\tilde{x} \sim x$.
\end{flushleft}
\end{algorithm}

\section{Numerical experiments} \label{sec: numerics}

We perform numerical experiments to test the performance of the suggested algorithms. While the theory we provide is restricted to a one-dimensional setting, we perform the experiments with 2D images. All algorithms can easily be extended to a two-dimensional version by replacing the corresponding operators with their 2D analogy \cite{cordor2020fast}. However, it remains an open question whether a uniqueness guarantee similar to \Cref{thm: unique_recovery} holds true. Our numerical experiments provide a first glimpse on this question.

In all trials, the algorithms are tested on synthetic data as depicted in \Cref{fig: general object and window}. The ground-truth object is a $128\times 128$ image with the argument and magnitude being the cameraman picture. For the phase object, we replace the magnitude with a matrix of ones. The localized window has a $16\times 16$ nonzero block, represented by a Gaussian bell with a random offset. The offset is required as \Cref{assumption: window} fails to hold for symmetric windows \cite{forstner2020well}.

\begin{figure}[b!]
    \centering
   \begin{minipage}[htbp]{0.32\textwidth}
\centering
    \includegraphics[width=1.5in]{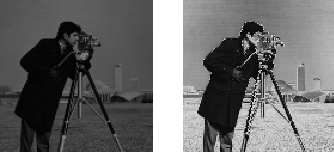}\\
    \figurecaptionfont (a) Magnitude and argument
\end{minipage}
\begin{minipage}[htbp]{0.32\textwidth}
\centering
    \includegraphics[width=0.685in]{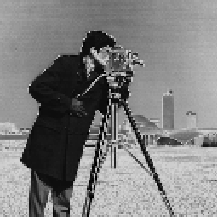}\\
    \figurecaptionfont (b) Argument
\end{minipage}
\begin{minipage}[htbp]{0.32\textwidth}
\centering
    \includegraphics[width=1.5in]{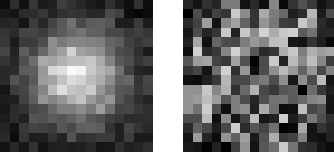}\\
    \figurecaptionfont (c) Magnitude and argument
 
\end{minipage}
    \caption{(a) Ground-truth object, (b) ground-truth phase object and (c) window.}
    \label{fig: general object and window}
\end{figure}

As background, we use the Shepp-Logan phantom in \Cref{fig: background and noisy diff pat} (a). \Cref{fig: background and noisy diff pat} (b) illustrates a diffraction pattern with background noise. Different levels of noise are simulated by adding different multiples of the phantom to the same diffraction patterns. To quantify the noise level, we use the ratio $\|Y - \tilde{Y}\|_F \big/ \norm{Y}_F,$
where $Y$ is as defined in \eqref{eq: ptychographic_measurements_noisefree} and $\tilde{Y}$ as in \eqref{eq: ptycho_measurements}.

\begin{figure}[t!]
    \centering
   \begin{minipage}[htbp]{0.3\textwidth}
\centering
    \includegraphics[width=1in]{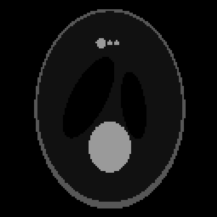}\\
    \figurecaptionfont (a) 
\end{minipage}
\begin{minipage}[htbp]{0.3\textwidth}
\centering
    \includegraphics[width=1in]{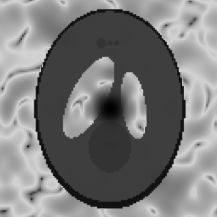}\\
    \figurecaptionfont (b) 
\end{minipage}
    \caption{(a) Background and (b) exemplary diffraction pattern perturbed by background noise in logarithmic scale.}
    \label{fig: background and noisy diff pat}
\end{figure}

As quality metrics, relative reconstruction error and relative measurement error 
\begin{equation*}
    \min_{\theta \in (0,2\pi]} \norm{x - e^{i\theta}\tilde{x}}_2 \big/ \norm{x}_2, \quad \text{and} \quad  \|Y - Y^{rec}\|_F \big/ \norm{Y}_F, 
\end{equation*}
are used, where $x$ is the ground-truth, $\tilde{x}$ is the reconstructed object and
$Y^{rec}_{\ell,r} = \left\vert \left( F \left[S_{-r}\tilde{x} \circ w\right] \right)_\ell\right\vert^2$, for all $\ell,r \in [d]$, are the simulated corresponding measurements.

In the following, we compare the object reconstruction obtained by the `vanilla' WDD algorithm (\Cref{alg: algorithm_WDD}) with the reconstruction results of our algorithms, described in \Cref{alg: algorithm general object} for general objects and in \Cref{alg: algorithm 2} for phase objects.

\begin{figure}[b!]
\centering
\includegraphics[width = 1\textwidth]
{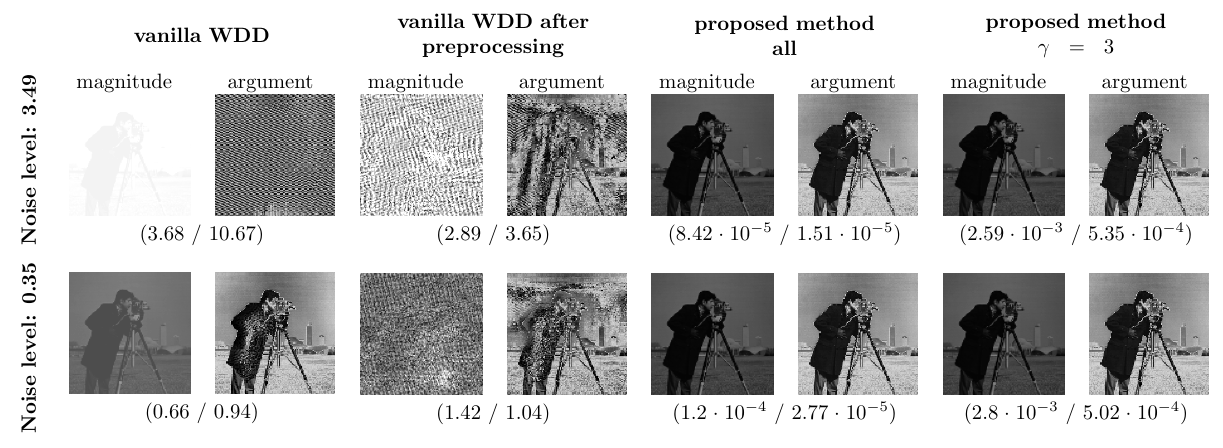}
\caption{Comparison of the performance of the vanilla WDD algorithm (\Cref{alg: algorithm_WDD}) without and with preprocessing, and \Cref{alg: algorithm general object} for different noise levels and different amounts of diagonals incorporated into the experiment (all or $\gamma = 3$). Below the reconstruction results, find the respective (relative reconstruction error/measurement error).}
\label{table: general object reco}
\end{figure}

In 2D, we interpret the parameter $\gamma$ in \Cref{assumption: window} for any $\gamma < \delta$ such that we consider all diagonals corresponding to tuples $j = (j_1,j_2)$ in 
\begin{equation*}
\q D_\gamma \defeq \{(j_1,j_2) \in \Z^2: 0 \leq j_1 < \gamma, \ -\gamma < j_2 < \gamma , \ -\gamma < j_1 + j_2 < \gamma\}.    
\end{equation*}
The case when all diagonals
\begin{equation*}
\{(j_1,j_2) \in \Z^2: 0 \leq j_1 < \delta, \ -\delta < j_2 < \delta\}
\end{equation*} 
are used is referred to as ``all''.

All experiments were performed in Python on a MacBook Pro equipped with a 2GHz Intel Core i5 and 16 GB of memory.

\begin{figure}[t!]
\centering
\includegraphics[width = 0.98 \textwidth]{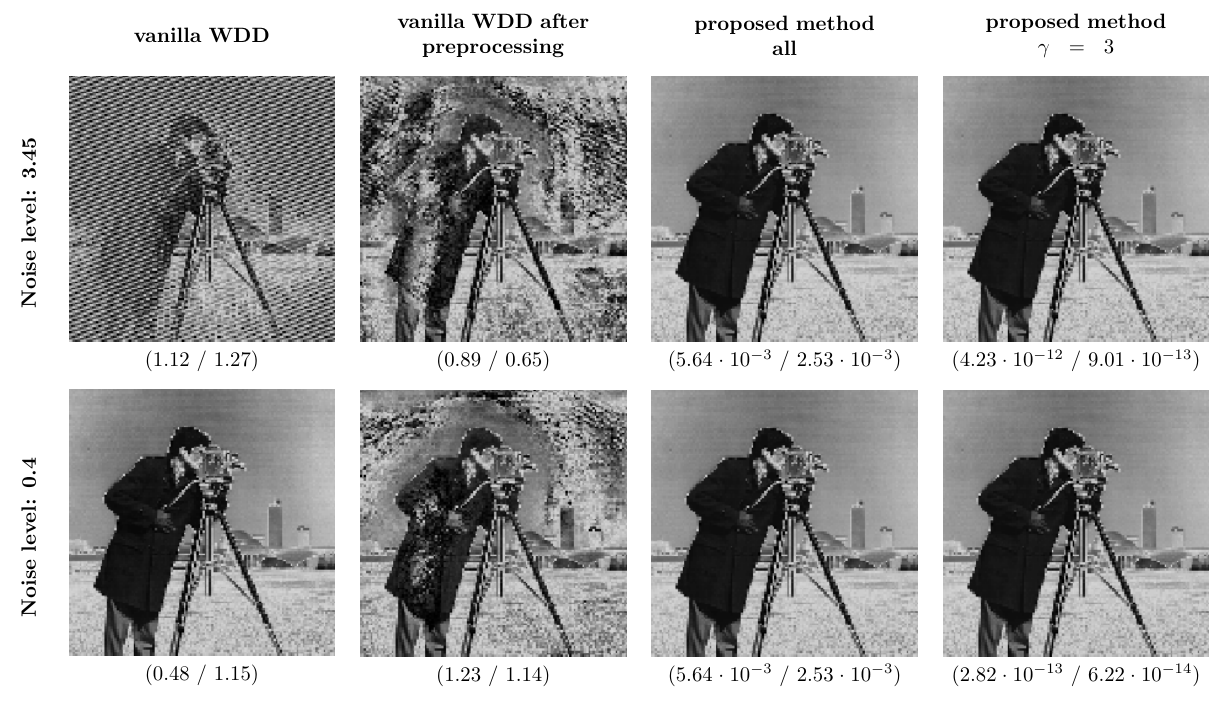}
\caption{Comparison of the performance of the vanilla WDD algorithm (\Cref{alg: algorithm_WDD}) without and with preprocessing, and \Cref{alg: algorithm 2} for different noise levels and different amounts of diagonals incorporated into the experiment (all or $\gamma = 3$). Below the reconstruction results, find the respective (relative reconstruction error / measurement error).}
\label{table: phase object reco}
\end{figure}

\begin{table}[b!]
\centering
\begin{tblr}{width=\textwidth,
        colspec={*{11}{X[c,m]}},
        vlines,
        column{1,2} = {2.5mm},
        column{3,6,9} = {7mm},
        column{4,7,10} = {11mm},
        column{5} = {7mm},
        cell{1}{1,2} = {r=2,c=1}{c},
        cell{3}{1} = {r=2,c=1}{c},
        cell{5}{1} = {r=2,c=1}{c},
        }
\hline 

 $d$ &  $\delta$ & \SetCell[c=3]{c}{{{run time  (in seconds)}}} & & & \SetCell[c=3]{c} {{{relative  reconstruction error}}} & & & \SetCell[c=3]{c} {{{relative measurement error}}} & & \\ \cline{3-11} 
   
  & & {{{WDD (all)}}} & {{{WDD ($\gamma = 3$)}}} & {{{ADP}}} &  {{{WDD (all)}}} & {{{WDD ($\gamma = 3$)}}} & {{{ADP}}} &  {{{WDD (all)}}} & {{{WDD ($\gamma = 3$)}}} & {{{ADP}}}  \\ \hline

\SetCell[r=2]{} \vspace{-3mm} 64
 & 8  & \centering 17 & \centering  7 & \centering 59   & \centering $1.6 \cdot 10^{-4}$ & \centering $3.2 \cdot 10^{-3}$  & \centering $0.29$ & \centering $5.1 \cdot 10^{-5}$ & \centering  $9.4 \cdot 10^{-4}$ &  \centering  0.09 \\\cline{2-11} 
 &  16  & \centering 62  & \centering 10  & \centering 64  & \centering $7.4 \cdot 10^{-5}$ & \centering  $2.1 \cdot 10^{-3}$ &  \centering  $0.30$ & \centering $1.6 \cdot 10^{-5}$ & \centering  $4.9 \cdot 10^{-4}$ &  \centering  0.15 \\ 
\hline
\SetCell[r=2]{} 96 
 &  8   & \centering 47 & \centering 41 & \centering 812  & \centering $2.5 \cdot 10^{-4}$  &  \centering  $1.7 \cdot 10^{-3}$   & \centering $0.38$& \centering $5.1 \cdot 10^{-5}$ & \centering  $5.5 \cdot 10^{-4}$ &  \centering  0.08 \\   \cline{2-11} 

 &  16  & \centering  143  & \centering 41  & \centering 789  & \centering  $5.5 \cdot 10^{-5}$ & \centering  $1.6 \cdot 10^{-3}$   & \centering $0.31$ & \centering $1.2 \cdot 10^{-5}$ & \centering  $4.0 \cdot 10^{-4}$ &  \centering  0.16 \\ 
 \hline

\end{tblr}
\vspace{1mm}
\caption{Comparison of \Cref{alg: algorithm general object} (WDD) with 10 iterations of the ADP algorithm \cite{chang2019advanced} for ptychographic measurements with background noise causing a noise level of size approximately $3.5$. }
\label{table: general object adp comparison}
\end{table}

First, we test \Cref{alg: algorithm general object} on the object in \Cref{fig: general object and window} (a). We compare the reconstructions obtained by our approach with the results of \Cref{alg: algorithm_WDD}. Apart from these two methods, we implemented the preprocessing approach by \cite{wang2017background}. 
The results can be found in \Cref{table: general object reco}.

As expected, vanilla WDD suffers from the noise. While preprocessing improves the reconstruction, it requires tuning the parameters for optimal performance depending on the noise level. In contrast, the method proposed by us filters the background noise and provides a good reconstruction independently of the noise level both when using all diagonals or only those in $\q D_3$.

Now, we apply \Cref{alg: algorithm 2} to reconstruct the phase object in \Cref{fig: general object and window} (b) from its ptychographic measurements with background noise. Again, the reconstruction results are compared with \Cref{alg: algorithm_WDD} with and without additional preprocessing \cite{wang2017background}. 
As these methods are not expected to produce a phase object, the reconstructions are projected on $\T^{d \times d}$.
\Cref{table: phase object reco} shows that, again, our proposed algorithm filters the background noise well. Interestingly, the $\mathcal D_3$ version of \Cref{alg: algorithm 2} shows better reconstruction errors while using less diagonals. This could be a result of a numerical error accumulation from a larger number of diagonals or a suboptimal outcome of the phase synchronization step.

Additionally, we compare the performance of our algorithms and the ADMM denoising algorithm for phase retrieval (ADP) proposed in \cite{chang2019advanced}. In \Cref{table: general object adp comparison} and \Cref{table: phase object adp comparison}, we list the runtime, the relative reconstruction and measurement errors of the respective methods. For general objects, an additional comparison of \Cref{alg: algorithm general object} both for all diagonals and for those in $\q D_3$ is included. We observe that WDD reconstructs the ground-truth object up to a numerical error. ADP, on the other hand, requires more iterations and time to match the same precision.

\begin{table}[t!]
\centering
\begin{tblr}{width=\textwidth,
        colspec={*{8}{X[c,m]}},
        vlines,
        column{1,2}={2.5mm},
        column{3,4,6,8}={10mm},
        cell{1}{1,2} = {r=2,c=1}{c},
        cell{3}{1} = {r=2,c=1}{c},
        cell{5}{1} = {r=2,c=1}{c},
        }
\hline 

 $d$ &  $\delta$ & \SetCell[c=2]{c}{{{run time (in seconds)}}} & & \SetCell[c=2]{c} {{{relative reconstruction error}}} & & \SetCell[c=2]{c} {{{relative measurement error}}} &  \\\cline{3-8} 
   
  & & {{{WDD}}} & {{{ADP }}} &  {{{WDD}}} &  {{{ADP }}} &  {{{WDD}}} &  {{{ADP }}}  \\ \hline

\SetCell[r=2]{} \vspace{-3mm} 64
 & 8    & \centering  5  & \centering 61  & \centering $2.7 \cdot 10^{-14}$   & \centering $0.95$ &  \centering  $1.0 \cdot 10^{-14}$ &  \centering  0.29 \\ \cline{2-8} 
 &  16    & \centering 8  & \centering 58 & \centering  $5.7 \cdot 10^{-14}$ &  \centering  $0.80$  &  \centering  $2.8 \cdot 10^{-14}$ &  \centering  0.45   \\ 
\hline
\SetCell[r=2]{} 96 
 &  8    & \centering 33 & \centering 705  &  \centering   $1.9 \cdot 10^{-13}$  & \centering $0.99$ &  \centering  $3.4 \cdot 10^{-14}$ &  \centering  0.27  \\   \cline{2-8} 

 &  16     & \centering  37 & \centering 778  & \centering  $1.1 \cdot 10^{-13}$   & \centering 0.93  &  \centering  $3.8 \cdot 10^{-14}$ &  \centering  0.44 \\ 
 \hline

\end{tblr}
\vspace{1mm}
\caption{Comparison of \Cref{alg: algorithm 2} (WDD with $\gamma = 3$) with 10 iterations of the ADP algorithm \cite{chang2019advanced} for ptychographic measurements of a phase object with background noise causing a noise level of size approximately $3.5$. }
\label{table: phase object adp comparison}
\end{table}

\section{Proofs} \label{sec: proofs}

\subsection{Proof of  \Cref{thm: linear_system}} \label{sec: proof thm linear system}

For $x \in \T^d$, by \Cref{prop: abs_zero_freq_phase_object} we have
\begin{equation} \label{eq: relation_inv_F_diag_1}
\left\vert \left(F^{-1} \left( F \left[x \circ S_j\overline{x}\right]\right)\right)_{\ell} \right\vert 
= \left\vert \left(x \circ S_j\overline{x}\right)_{\ell}\right\vert 
= 1,
\end{equation}
for all $ \ell, j \in [d]$. We expand the inverse Fourier transform as
\begin{equation}  \label{eq: relation_inv_F_diag_2}
    \left(F^{-1} \left( F \left[x \circ S_j\overline{x}\right]\right)\right)_{\ell} = \frac{1}{d} \sum_{k=0}^{d-1} e^\frac{2\pi i \ell k}{d}  \left( F \left[x \circ S_j\overline{x}\right]\right)_{k}
    =\frac{1}{d} 
     \left\vert f_0^j \right\vert e^{i \varphi_j} + \frac{1}{d}
     a_\ell^j,
\end{equation}
where the definitions \eqref{eq: f phi definitions} and \eqref{eq: def A a} were used.

To obtain $\varphi_j$, we combine \eqref{eq: relation_inv_F_diag_1} and \eqref{eq: relation_inv_F_diag_2}, and study the quadratic equation
\begin{equation*}
    d^2 = \left\vert \vert f_{0}^j \vert \cdot e^{i \varphi_j} +a_{\ell}^j \right\vert^2, ~~ \ell \in [d].
\end{equation*}
Expanding the squares gives 
\begin{align*}
    \left\vert \vert f_{0}^j \vert \cdot e^{i \varphi_j} +a_{\ell}^j \right\vert^2 &=  \vert f_{0}^j \vert^2 + 2\vert f_{0}^j \vert\operatorname{Re}\left(a_\ell^je^{-i\varphi_j}\right) + \vert a_{\ell}^j \vert^2\\[6pt]
    &=  \vert f_{0}^j \vert^2 + 2\vert f_{0}^j \vert \left(\operatorname{Re}(a_\ell^j) \cos(\varphi_j) + \operatorname{Im}(a_\ell^j)\sin(\varphi_j)\right) + \vert a_{\ell}^j \vert^2,
\end{align*}
and we rewrite the condition to be satisfied by $\varphi_j$ as
\begin{align*}
  \operatorname{Re}(a_\ell^j) \cos(\varphi_j) + \operatorname{Im}(a_\ell^j)\sin(\varphi_j) = \frac{1}{2\vert f_{0}^j \vert} \left(   d^2 -  \vert a_{\ell}^j \vert^2 - \vert f_{0}^j \vert^2\right)
\end{align*}
for all $\ell \in [d]$. That means, we need to solve (\ref{eq: linear_system}) to recover $\varphi_j$.
\qed

\subsection{Proof of \Cref{thm: ambiguities}} \label{sec: proof of ambiguity theorem}

Firstly, we show that, for all $m \in [d]$, the objects $x \sim \left(e^{\frac{2\pi i \ell m}{d} }\right)_{\ell \in [d]}$ with appropriately chosen background $b^m \in \R^d$ produce the same measurements. 

Let $x^m \defeq   \left(e^{\frac{2\pi i \ell m}{d}}\right)_{\ell\in [d]}$ for $m\in [d]$. Let $\tilde{Y}^{m} \in \R^{d\times d}$ be the respective measurements \eqref{eq: ptycho_measurements} for all $x^{m}$, $m\in [d]$,  with some background noise $b^{m} \in \R^d$. 

For all objects $x^m$, all diagonals are constant, more precisely
\begin{equation} \label{eq: modulation objects constant diagonals}
    (x^m\circ S_j\overline{x^m})_\ell = e^{-\frac{2\pi i j m}{d}}
\end{equation}
for all $\ell \in [d]$. Hence, for all $m,j \in [d]$, the respective Fourier coefficients $f^{j,m}_k$ are zero for all $k \in [d]\backslash \left\{0\right\}$. Consequently, for all $m,j \in [d],$ and all $k\in [d]\backslash \left\{0\right\}$, \Cref{prop: theorem_WDD_background} provides $(F^{-1}\tilde{Y}^{0}F)_{j,k} = (F^{-1}\tilde{Y}^{m}F)_{j,k}$. 
For $k = 0$, \Cref{prop: theorem_WDD_background} tells that $(F^{-1}\tilde{Y}^{0}F)_{j,0} = (F^{-1}\tilde{Y}^{m}F)_{j,0}$ for all $j\in [d]$ if and only if 
\begin{equation*} 
\left(f^{j,0}_0 - f^{j,m}_0\right) (\overline{F[\overline{w} \circ S_j w]})_0 = d(F^{-1}(b^{m} - b^{0}))_j,
\end{equation*}
which is equivalent to
\begin{equation*}
    (F^{-1}(b^{m} - b^{0}))_j = \left(1 - e^{-\frac{2\pi i j m}{d}} \right)  (\overline{F[\overline{w} \circ S_j w]})_0,
\end{equation*}
as from \eqref{eq: modulation objects constant diagonals} we obtain  $f^{j,m}_0 = de^{-\frac{2\pi i j m}{d}}$.

Via this relation, we can define backgrounds $b^m$ which cause the same measurements for objects $x^m$. For example, we can set $(F^{-1} b^0)_j = 0$ for all $j \in [d]\backslash \left\{0\right\}$, and choose $(F^{-1} b^0)_0 \in \R$ suitably. 

As $F^{-1} b^0$ is chosen conjugate symmetric, its Fourier transform $b^0$ is real-valued according to \Cref{l: fourier of real vector conjugate symmetric}. Moreover, $F^{-1} b^m$ is conjugate symmetric since  $(F^{-1} b^m)_0 = (F^{-1} b^0)_0 \in \R$ as $e^{-\frac{2\pi i j m}{d}} = 1$ for $j = 0$, and
\begin{align*}
    \overline{(F^{-1} b^m)_{-j}}& = \overline{\left(1 - e^{\frac{2\pi i j m}{d}} \right)}  (F[\overline{w} \circ S_{-j} w])_0\\
    &= \left(1 - e^{-\frac{2\pi i j m}{d}} \right)  \overline{(F[\overline{w} \circ S_{j} w])}_0 = (F^{-1}b^m)_j,
\end{align*}
for all $j \in [d]\backslash \left\{0\right\}$, where we use the symmetry of 
\begin{equation} \label{eq: symmetry of wigner window}
    (F[\overline{w} \circ S_{-j} w])_0 = \sum_{\ell = 0}^{d-1} \overline{w}_\ell w_{\ell - j} = \sum_{\ell = 0}^{d-1} \overline{w}_{\ell + j} w_\ell =  \overline{(F[\overline{w} \circ S_{j} w])}_0.
\end{equation}
Hence, also $b^m \in \R^d$ for all $m \in [d]$.

It remains to choose $(F^{-1} b^0)_0 \in \R$ large enough such that the backgrounds $b^m$ satisfy $b^m_\ell \geq - \vert F[S_{-r} x^m \circ w]_\ell \vert^2 $, to ensure $\tilde{Y}^m_{\ell,r} \geq 0$ for all $m, \ell, r \in [d]$. The inverse Fourier transform gives
\[
(F^{-1} b^0)_0 
= (F^{-1} b^m)_0  
\geq - \vert F[S_{-r} x^m \circ w]_\ell \vert^2 - \sum_{j=1}^{d-1} e^{\frac{2 \pi i j \ell}{d}} (F^{-1} b^m)_j .
\]
Taking the maximum on the right-hand side over $\ell, m \in [d]$ provides a suitable choice of $(F^{-1} b^0)_0$.

We conclude that, with such background noise, objects $x^m$ produce the same measurements \eqref{eq: ptycho_measurements} and, thus, cannot be recovered uniquely.

Secondly, we show that for objects $x^q, \ q = 1,2,$ defined by entries
\begin{align} \label{eq: object_type_2}
    x_\ell^q = e^{-\frac{2\pi i \ell m}{d}} \cdot \begin{cases}
        1, \quad &\ell ~\text{even},\\-(-1)^q e^{(-1)^q \frac{1}{2}i \rho}, &\ell ~\text{odd},
    \end{cases}, \quad \ell \in [d],
\end{align}
there exist backgrounds $b^q, \ q = 1,2,$ such that $(x^1,b^1)$ and $(x^2,b^2)$ result in the same measurements \eqref{eq: ptycho_measurements}, i.e., there exist $b^1, b^2 \in \R^{ d}$ with 
\begin{equation*}
   \tilde{Y}^1_{\ell,r} \defeq \left\vert \left( F \left[S_{-r}x^1 \circ w\right]\right)_\ell \right\vert^2 + b^1_\ell =  \left\vert \left( F \left[S_{-r}x^2 \circ w\right]\right)_\ell \right\vert^2 + b^2_\ell \defeqr \tilde{Y}^2_{\ell,r},
\end{equation*}
for all $\ell,r \in [d]$.

Fix $m \in \Z, \ \rho \in (-\pi,\pi)$, and consider objects $x^q, \ q = 1,2$, defined as in \eqref{eq: object_type_2}. The diagonals corresponding to the objects $x^q, \ q = 1,2,$ are given by
\begin{align*}
   ( x^q \circ S_j \overline{x^q})_\ell  = \begin{cases}
       1, \quad &j ~ \text{even},\\
-(-1)^q e^{\frac{2\pi i j m}{d} -(-1)^q (-1)^\ell \frac{1}{2}i\rho },  \quad &j ~ \text{odd},
   \end{cases} \quad \ell \in [d].
\end{align*}
For $j$ even, the Fourier coefficients are
\begin{align} \label{eq: coefficients even}
    f^{j,q}_k  = \begin{cases}
       d, \quad & k = 0,\\
0,  \quad & k \in [d] \backslash \{0\},
   \end{cases}
\end{align}
and for $j$ odd, they are 
\begin{align*} 
    f^{j,q}_k &= -(-1)^q e^{\frac{2\pi i j m}{d}} \left( \sum_{\ell = 0}^{\frac{d}{2}-1} e^{-\frac{2\pi i k \frac{\ell}{2}}{\frac{d}{2}}} e^{-(-1)^q \frac{1}{2}i\rho} +  \sum_{\ell = 0}^{\frac{d}{2}-1} e^{-\frac{2\pi i k \frac{\ell-1}{2}}{\frac{d}{2}}} e^{(-1)^q \frac{1}{2}i\rho}  \right) \notag \\
    &= \begin{cases}
       - (-1)^q e^{\frac{2\pi i j m}{d}} \cdot \frac{d}{2} \cos(\frac{\rho}{2}) , \quad & k = 0,\\
  e^{\frac{2\pi i j m}{d}} \cdot \frac{d}{2} i \sin(\frac{\rho}{2}) , \quad & k = \frac{d}{2},\\     
0,  \quad & k \in [d] \backslash \{0,\frac{d}{2}\}.
   \end{cases}
\end{align*}
Hence, for all $j \in [d]$ and all  $k \in [d] \backslash \{0\}$, we know $f^{j,1}_k = f^{j,2}_k$, so that \Cref{prop: theorem_WDD_background} guarantees $(F^{-1}\tilde{Y}^{1}F)_{j,k} = (F^{-1}\tilde{Y}^{2}F)_{j,k}$. For $k = 0$, again, it holds $(F^{-1}\tilde{Y}^{1}F)_{j,0} = (F^{-1}\tilde{Y}^{2}F)_{j,0}$ for all $j\in [d]$ if and only if 
\begin{equation*} 
\left(f^{j,1}_0 - f^{j,2}_0\right) (\overline{F[\overline{w} \circ S_j w]})_0 = d(F^{-1}(b^{2} - b^{1}))_j.
\end{equation*}
Analogously to the above, we can conclude that there exists background noise with which objects $x^1$ and $x^2$ produce the same measurements \eqref{eq: ptycho_measurements}, i.e., unique recovery of objects $x^1$ or $x^2$ is also not possible.
\qed

\subsection{Analysis of linear systems \eqref{eq: linear_system}} \label{sec: disjoint_analysis}
In this section, the foundation of the proof of \Cref{thm: unique_recovery} is established by studying the linear systems \eqref{eq: linear_system}. 
For any $j \in [d]$, the three possible cases $\rank(A^j) = 0, \ \rank(A^j) = 1,$ and $\rank(A^j) = 2$ are investigated.

In the following, we write $F^{-1} f^j$ instead of $x \circ S_j \overline{x}$ as there is not necessarily a unique $x$ corresponding to $F^{-1} f^j$.

We start with the case $\rank(A^j) = 0$.

\begin{lemma} \label{l: shift1_rank0}
Let $j \in [d]$. The following claims are equivalent:
\begin{enumerate}
    \item[(i)] The matrix $A^j$ has $\rank(A^j) = 0$.
    \item[(ii)] $(F^{-1} f^j)_k = e^{i \varphi_j}$ for all $k \in [d]$, where $\varphi_j$ is as in \eqref{eq: f phi definitions}. 
    \item[(iii)] The zeroth Fourier coefficient admits $\vert f^j_0 \vert = d$.
\end{enumerate} 
Furthermore, if $\rank(A^j) = 0$ and $j$ is coprime with $d$,  the ground-truth object is $x \sim \left(e^{\frac{2\pi i km}{d}}\right)_{k\in [d]}$, for some $m \in [d]$.
\end{lemma}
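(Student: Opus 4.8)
The plan is to establish the chain $(i)\Leftrightarrow(iii)\Leftrightarrow(ii)$ through the representation of the diagonal $F^{-1}f^j$ already derived in the proof of \Cref{thm: linear_system}, and then to extract the explicit modulation form of $x$ from $(ii)$ by an angular synchronization argument exploiting that $j$ is coprime with $d$. Throughout I work under the standing assumption $x \in \T^d$ of this section.

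First I would record the identity
\begin{equation*}
(F^{-1}f^j)_\ell = \tfrac{1}{d}\big(\vert f_0^j\vert e^{i\varphi_j} + a_\ell^j\big), \qquad \ell \in [d],
\end{equation*}
which is exactly \eqref{eq: relation_inv_F_diag_2}. Since $a^j = dF^{-1}(0,f_1^j,\dots,f_{d-1}^j)^\top$ and $F^{-1}$ is invertible, the condition $\operatorname{rank}(A^j)=0$, i.e.\ $a_\ell^j=0$ for every $\ell$, holds if and only if $f_k^j=0$ for all $k\in[d]\backslash\{0\}$. By Plancherel's identity in the form $\norm{f^j}_2^2=d^2$ from \Cref{prop: abs_zero_freq_phase_object}, the vanishing of all $f_k^j$ with $k\neq 0$ is in turn equivalent to $\vert f_0^j\vert = d$, which is $(iii)$. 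Feeding $a_\ell^j\equiv 0$ and $\vert f_0^j\vert=d$ back into the identity yields $(F^{-1}f^j)_\ell = e^{i\varphi_j}$ for all $\ell$, giving $(ii)$; conversely, if $F^{-1}f^j$ equals the constant vector $e^{i\varphi_j}\mathds{1}$, applying $F$ shows $f_k^j = d\,e^{i\varphi_j}\mathds{1}_{k=0}$, so $a^j=0$ and $\operatorname{rank}(A^j)=0$. This closes the cycle of equivalences.

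For the final claim I would translate $(ii)$ into a phase recurrence. Writing $x_k = e^{i\theta_k}$, the identity $(x\circ S_j\overline x)_k = x_k\overline{x}_{k+j} = e^{i\varphi_j}$ becomes $\theta_k - \theta_{k+j} \equiv \varphi_j \pmod{2\pi}$ for every $k$. Summing this telescoping relation over $k\in[d]$ forces $d\varphi_j \in 2\pi\Z$, so $\varphi_j = 2\pi m'/d$ for some integer $m'$. Because $\gcd(j,d)=1$, the shift $k\mapsto k+j$ is a single $d$-cycle on $[d]$, so the recurrence determines all phases from $\theta_0$: iterating gives $\theta_{k} \equiv \theta_0 - (k j^{-1})\varphi_j \pmod{2\pi}$, where $j^{-1}$ denotes the inverse of $j$ modulo $d$ (the quantization of $\varphi_j$ makes this well defined independently of the chosen lift of $kj^{-1}$). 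Setting $m \equiv -m' j^{-1} \pmod d$ then yields $\theta_k \equiv \theta_0 + 2\pi k m/d$, hence $x_k = e^{i\theta_0}\,e^{2\pi i km/d}$, i.e.\ $x\sim\big(e^{2\pi i km/d}\big)_{k\in[d]}$.

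The routine part is the equivalence cycle, which is merely bookkeeping with Plancherel and the invertibility of $F$. The main obstacle, and the step demanding care, is the last paragraph: one must argue that the purely local relation $\theta_k-\theta_{k+j}=\varphi_j$ globally pins down a modulation, which is precisely where coprimality enters to guarantee a single orbit, and where the quantization $\varphi_j\in\frac{2\pi}{d}\Z$ must be verified so that the resulting exponent $m$ is an integer. I expect the subtlety to lie in handling the modular inverse $j^{-1}$ cleanly and in checking the consistency of the cyclic recurrence after a full loop.
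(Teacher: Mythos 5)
Your proposal is correct and follows essentially the same route as the paper: the equivalences rest on the representation $(F^{-1}f^j)_\ell = \tfrac{1}{d}\big(\vert f_0^j\vert e^{i\varphi_j} + a_\ell^j\big)$ together with Plancherel via \Cref{prop: abs_zero_freq_phase_object}, and the coprimality claim follows from the phase recurrence $\arg(x_k)-\arg(x_{k+j}) \equiv \varphi_j$, which forces $\varphi_j \in \tfrac{2\pi}{d}\Z$ and pins down $x$ as a modulation. The only (immaterial) differences are that you obtain the quantization of $\varphi_j$ by summing the telescoping relation over all $k$ rather than traversing the length-$d$ orbit as the paper does, and that you make the modular inverse $j^{-1}$ explicit where the paper argues along the orbit directly.
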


\begin{proof}[Proof.]
$(i) \Rightarrow (ii), (iii):$ Let $\rank(A^j) = 0$. Definition \eqref{eq: def A a} gives $a_\ell^j = 0$ for all $\ell \in [d]$. We then obtain 
\begin{equation*}
d (F^{-1} f^j)_\ell = \sum_{k=0}^{d-1} e^{\frac{2\pi i k\cdot \ell}{d}}f^j_k = a_\ell^j  + f_0^j = f_0^j = \vert f_0^j \vert e^{i \varphi_j},
\end{equation*}
for all $\ell \in [d]$, i.e., $F^{-1} f^j$ has constant value $f_0^j /d$. By \Cref{prop: abs_zero_freq_phase_object}, $F^{-1} f^j \in \T^d$ and, thus, $F^{-1} f^j = e^{i \varphi_j}$ and $\vert f_0^j \vert = d$.  

$(ii) \Rightarrow (iii):$ If $(F^{-1} f^j)_\ell = e^{i \varphi_j} \in \T$ for all $\ell \in [d]$, we get 
\begin{align*}
   f^j_k = (F(F^{-1} f^j))_k = \sum_{\ell=0}^{d-1} e^{\frac{-2\pi i k \cdot \ell}{d}} e^{i \varphi_j} =  \begin{cases}
d e^{i \varphi_j}, \quad &k = 0, \\ 0, \quad &k \in [d]\backslash\{0\},
\end{cases} 
\end{align*}
and, hence, $\vert f^j_0\vert = d \vert e^{i \varphi_j} \vert = d$.

$(iii) \Rightarrow (i):$ If $\vert f^j_0\vert = d$, by \Cref{prop: abs_zero_freq_phase_object}, the definition in \eqref{eq: def A a}, and Plancherel's identity \eqref{thm: plancherel} we have
\[
\norm{a^j}_2^2 
= \tfrac{1}{d} \norm{F a^j}_2^2 
= d \norm{(0,f_1^j,\ldots, f_{d-1}^j)}_2^2 = d \norm{f^j}_2^2 - d \vert f^j_0 \vert^2
= d^3 - d^3 = 0.
\]
That means, $a^j_k =0$ for all $k \in [d]$, and $\rank(A^j) = 0$.

In addition to $\rank(A^j) = 0$, let us assume that $j$ is coprime with $d$. Consider any $x\in \T^d$ with $x \circ S_j \overline{x} = F^{-1} f^j$. From the equivalence of $(i)$ and $(ii)$ proven above, we find that  $x_k \overline{x}_{k+j} = (F^{-1} f^j)_k = e^{i\varphi_j}$ for all $k\in [d]$, i.e.,
      \begin{equation} \label{eq: relation phases of x}
          \arg(x_k) - \arg(x_{k+j}) = \varphi_j + 2\pi m
      \end{equation}
      for some $m \in \Z$.
With this, we obtain
\begin{equation*}
          \arg(x_{\ell j})  \in  
          \arg(x_{0}) - \ell \varphi_j + 2\pi \Z
      \end{equation*}
      for any $\ell \in \Z$. Since $j$ and $d$ are coprime, $\ell = d$ is the smallest integer satisfying $\ell j = 0 \MOD d$. We conclude that  
\begin{equation*}
          \arg(x_0)   = \arg(x_{dj})  \in  \arg(x_{0}) - d \varphi_j + 2\pi \Z,
      \end{equation*}
hence, $\varphi_j \in \frac{2\pi}{d}  \Z$. Combining this with \eqref{eq: relation phases of x}, we find that the corresponding objects are of the type $x \sim  \left(e^{\frac{2\pi i km}{d}}\right)_{k\in [d]}$, for some $m \in \Z$.
\end{proof}

Next, we study the system \eqref{eq: linear_system} in case $\rank(A^j) = 1$.

\begin{lemma} \label{l: shift_1_rank_1_no_unique_recovery}
Let $j \in [d]$ and $\rank(A^j) = 1$. There exist two solutions $\varphi^j_1 \neq \varphi^j_2$ to \eqref{eq: linear_system} given by
\begin{align*}
    \varphi^j_q \defeq \arg(a_0^j) + (-1)^{q} \arccos\left(\frac{d^2-\vert a_0^j\vert^2-\vert f_0^j\vert^2}{2\vert a_0^j\vert \vert f_0^j\vert}\right), \quad q =1,2.
\end{align*}
Furthermore, there exists a disjoint partition $\mathcal S_j, \mathcal S_j^c$ of $[d]$ such that $0 \in \mathcal S_j$, $\mathcal S_j^c \neq \varnothing$, and we can write the corresponding diagonals $F^{-1} f^{j,q}$ as
\begin{equation*}
(F^{-1} f^{j,q})_k = \begin{cases}
e^{i\psi^j_q}, & k \in \mathcal{S}_j, \\
e^{i(\psi^j_q + (-1)^{q} \rho^j)}, & k \in \mathcal{S}_j^c, 
\end{cases}
\quad q =1,2,
\end{equation*}
with $\rho^j \in (-\pi,\pi) \backslash \{0\}$, $\psi_1^j \neq \psi_2^j,$ and 
\begin{equation} \label{eq: relation_psi1_psi2}
    \psi_2^j \in  \psi_1^j - \rho^j + \pi + 2\pi \Z.
\end{equation}

\end{lemma}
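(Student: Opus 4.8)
The plan is to exploit the geometric meaning of the rank hypothesis. Writing $u \defeq \sgn(a^j_{\ell_0})$ for some index $\ell_0$ with $a^j_{\ell_0}\neq 0$, the condition $\rank(A^j)=1$ says precisely that every $a^j_\ell$ lies on the real line $\R u$ through the origin, i.e. $a^j_\ell = t_\ell u$ with $t_\ell \in \R$ and not all $t_\ell$ zero. As observed after \eqref{eq: linear_system}, the $(\ell+1)$th equation reads $\RE(a^j_\ell e^{-i\varphi_j}) = \tfrac{1}{2|f_0^j|}(d^2-|a^j_\ell|^2-|f_0^j|^2)$, which collinearity turns into the scalar equation $t_\ell\cos(\arg u - \varphi_j) = \tfrac{1}{2|f_0^j|}(d^2-t_\ell^2-|f_0^j|^2)$. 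Since the ground-truth phase $\varphi_j = \arg(f_0^j)$ is a solution (the true diagonal has unit-modulus entries), all these equations are consistent and reduce to a single condition $\cos(\arg u - \varphi_j)=c$ for a fixed constant $c$. Solving for $\varphi_j$ and rewriting $\arg u$ and $c$ through $a_0^j$ and $f_0^j$ gives the two candidate values $\varphi^j_q = \arg(a_0^j)+(-1)^q\arccos(\cdots)$ of \eqref{eq: recovery_formula}; that $a_0^j\neq 0$, so $\arg(a_0^j)$ is well defined, will follow from the next step.

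The second step establishes the two-value structure of the diagonal. By \eqref{eq: def A a} we have $d(F^{-1}f^j)_\ell = f_0^j + a^j_\ell = f_0^j + t_\ell u$, so each scaled diagonal entry lies simultaneously on the circle $\{|z|=d\}$ (\Cref{prop: abs_zero_freq_phase_object}) and on the line $L\defeq\{f_0^j+tu: t\in\R\}$. First I would rule out $|f_0^j|=d$: by \Cref{l: shift1_rank0} this is equivalent to $\rank(A^j)=0$, which is excluded. Hence $|f_0^j|<d$, the point $f_0^j$ lies strictly inside the circle, and $L$ is a genuine secant meeting the circle in exactly two distinct points $P_1\neq P_2$ whose parameters satisfy $t^{(1)}t^{(2)}=|f_0^j|^2-d^2<0$; in particular both are nonzero, so $t_\ell\in\{t^{(1)},t^{(2)}\}\setminus\{0\}$ for every $\ell$ and $a_0^j=t_0u\neq 0$. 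Setting $\mathcal S_j\defeq\{\ell: t_\ell=t_0\}$ and $\mathcal S_j^c$ its complement yields a partition with $0\in\mathcal S_j$, and $\mathcal S_j^c\neq\varnothing$, since otherwise the diagonal would be constant, again forcing $\rank(A^j)=0$.

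For the third step I would feed each candidate $\varphi^j_q$ back into \eqref{eq: def A a} to form the reconstructed diagonal $F^{-1}f^{j,q}$ with zeroth coefficient $|f_0^j|e^{i\varphi^j_q}$ and all other coefficients equal to $f^j_k$. Then $d(F^{-1}f^{j,q})_\ell = |f_0^j|e^{i\varphi^j_q}+t_\ell u$ takes a single value $e^{i\psi^j_q}$ on $\mathcal S_j$ and a second value on $\mathcal S_j^c$. Because the two candidate phases are symmetric about $\arg(a_0^j)$ (they sum to $2\arg(a_0^j)$), candidate diagonal $q=2$ is the reflection of candidate diagonal $q=1$ across $L$: the reflection fixes the points $t_\ell u\in L$ and swaps $|f_0^j|e^{i\varphi^j_1}\leftrightarrow|f_0^j|e^{i\varphi^j_2}$. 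Writing the $\mathcal S_j^c$-value as $e^{i(\psi^j_q+(-1)^q\rho^j)}$, the reflection identity $\arg(\text{reflected})=2\arg u-\arg(\cdot)$ gives $\psi^j_2 = 2\arg u - \psi^j_1$ together with the matching sign flip of $\rho^j$, and combining this with the fact that the chord endpoints $P_1,P_2$ are symmetric about the perpendicular dropped from the center onto $L$ produces exactly \eqref{eq: relation_psi1_psi2}.

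The main obstacle is the distinctness claim $\varphi^j_1\neq\varphi^j_2$, equivalently $\rho^j\in(-\pi,\pi)\setminus\{0\}$. Since $P_1\neq P_2$ one automatically gets $\rho^j\neq 0$, so the real content is excluding $\rho^j=\pm\pi$, i.e. the configuration in which the two diagonal values are antipodal and $L$ is a diameter. This is precisely the radial case $u\parallel\sgn(f_0^j)$, in which $c=\pm1$, the $\arccos$ degenerates, and \eqref{eq: linear_system} admits only the single solution $\varphi_j=\arg(f_0^j)$. Ruling this out is the delicate point: one must show that under $\rank(A^j)=1$ the secant $L$ does not pass through the center, i.e. $\RE(\overline{u}\,f_0^j)\neq 0$. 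I expect this to require more than the unit-modulus constraint on the single diagonal $F^{-1}f^j$ — presumably the interaction with the fixed ground-truth object $x\in\T^d$ — and it is here that the restriction $\rho^j\in(-\pi,\pi)$ in the statement is consumed.
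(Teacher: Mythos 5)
Your Steps 1--3 are, modulo language, the paper's own argument: the paper encodes the rank hypothesis as $a_k^j = c_k a_0^j$, rules out $a_k^j = 0$ via \Cref{l: shift1_rank0}, uses solvability (the right-hand side of \eqref{eq: linear_system} lies in the image of $A^j$) to derive the quadratic $d^2 - c_k^2\vert a_0^j\vert^2 - \vert f_0^j\vert^2 = c_k\left(d^2 - \vert a_0^j\vert^2 - \vert f_0^j\vert^2\right)$ whose two roots $1$ and $c = (\vert f_0^j\vert^2 - d^2)/\vert a_0^j\vert^2 < 0$ are exactly your two secant parameters (your Vieta identity $t^{(1)}t^{(2)} = \vert f_0^j\vert^2 - d^2$ is the same fact), builds $\mathcal S_j,\mathcal S_j^c$ from the two roots, and obtains \eqref{eq: relation_psi1_psi2} from $e^{i\psi_1}(1-e^{i\rho_1}) = e^{i\psi_2}(1-e^{i\rho_2})$, which is your reflection step. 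The paper even draws your secant picture in its figure.

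The genuine gap is the one you flag and postpone: the distinctness $\varphi_1^j \neq \varphi_2^j$, equivalently the exclusion of the diameter configuration $\rho^j = \pm\pi$. Since this is a core assertion of the lemma (and everything in \Cref{l: shift1_rank1_shift2_rank1}--\Cref{l: shift1_rank1_shift2_rank2} leans on it), the proposal is incomplete. The paper closes it with a Plancherel argument: coincidence of the two solutions forces $\frac{d^2-\vert a_k^j\vert^2-\vert f_0^j\vert^2}{2\vert a_k^j\vert\vert f_0^j\vert} = \delta \in \{-1,1\}$ for all $k$, and summing over $k$ while using $\norm{a^j}_2^2 = d^3 - d\vert f_0^j\vert^2$ gives $2\delta\vert f_0^j\vert \sum_k \vert a_k^j\vert = 0$, a contradiction. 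You should know, however, that your caution here is vindicated: that argument tacitly uses the \emph{same} $\delta$ in every equation, whereas the sign actually flips with $\sgn(c_k)$ (it is $+\delta$ on $\mathcal S_j$ and $-\delta$ on $\mathcal S_j^c$, and the paper itself proves $\mathcal S_j^c \neq \varnothing$), so summation only yields $\sum_k \sgn(c_k)\vert a_k^j\vert = 0$, which is no contradiction at all. Indeed the degenerate case occurs: take $d=3$, $x=(1,-1,1)\in\T^3$, $j=1$. Then $x\circ S_1\overline{x} = (-1,-1,1)$, $f_0^1 = -1$, $\vert f_0^1\vert = 1$, and $a^1 = (-2,-2,4)$ is real, so $\rank(A^1)=1$; yet every row of \eqref{eq: linear_system} reduces to $\cos\varphi = -1$ (e.g.\ row $k=0$: $-2\cos\varphi = (9-4-1)/2$, row $k=2$: $4\cos\varphi = (9-16-1)/2$), so the system has the unique angular solution $\varphi = \pi$, and the two diagonal values $1,-1$ are antipodal. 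Hence the step you could not supply is not merely delicate --- it is false as stated, and no argument can close your gap without restructuring the lemma (the diameter case must be split off as a separate branch; note that in it $f_0^j$ is recovered \emph{uniquely}, so it is harmless for \Cref{thm: unique_recovery} but incompatible with the claimed two-solution dichotomy, and the same example, having $a^2=(4,-2,-2)$ and thus $\rank(A^2)=1$, also contradicts \Cref{l: shift1_rank1_shift2_rank1}).
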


\begin{proof}[Proof.]

\Cref{l: shift_1_rank_1_no_unique_recovery} can be interpreted geometrically as a problem of intersecting circles. For a visualization of the proof idea, see \Cref{fig: visualization_intersecting_circles}.

\begin{figure}[!ht] 
\centering
\begin{tikzpicture}

\draw[lightgray, thick, loosely dotted] (1.96,0.4) -- (2.66,1.39);
\draw[lightgray, thick, loosely dotted] (-0.28,1.98) -- (0.42,2.97);

\draw[lightgray, thick, loosely dotted] (1.96,0.4) -- (-0.42,-2.97);
\draw[lightgray, thick, loosely dotted] (-0.28,1.98) --  (-2.66,-1.39);

\filldraw [black] (0,0) circle (1pt) node[anchor=north]{\tiny$0$};

\draw[color=TUMBlau!80, thick](0,0) circle (2);
\filldraw [black] (0,-2) circle (0.1pt)
node[anchor=north]{\tiny$\left\{ z\in \Complex: \vert z \vert = \vert f_0 \vert \right\}$};

\draw[color=TUMGreen!80,  thick](0,0) circle (3);
\filldraw[black] (3,0) circle (0.1pt)
node[anchor=west]{\tiny$\left\{ z\in \Complex: \vert z \vert = d\right\}$};

\draw[color=TUMOrange!80, thick](-0.7,-0.99) circle (3);
\filldraw [black] (-0.32,-3.97) circle (0.1pt)
node[anchor=north]{\tiny$\left\{ z \in \Complex: \vert z - a_0 \vert = d \right\}$};

\filldraw [black] (-0.28,1.98) circle (1pt) node[anchor=south]{\tiny$f_{0}^{1}$};
\filldraw [black] (1.96,0.4) circle (1pt) node[anchor=west]{\tiny$f_{0}^{2}$};

\filldraw [black] (0.42,2.97) circle (1pt) node[anchor=south]{\tiny $f_0^1 + a_0 = d e^{i\psi_1}$};
\filldraw [black] (2.66,1.39) circle (1pt)node[anchor=south]{\tiny $f_0^2 + a_0 = d e^{i\psi_2}$};

\filldraw [black] (-2.66,-1.39) circle (1pt) node[anchor=north east]{\tiny $f_0^1 + c a_0 = d e^{i(\psi_1 - \rho)}$};
\filldraw [black] (-0.42,-2.97) circle (1pt) node[anchor=north]{\tiny $f_0^2 + c a_0 = d e^{i(\psi_2 + \rho)}$};

\draw[gray, thick,  densely dotted, ->] (1.25,1.71)  arc (54:96:2.1);
\draw[gray, thick,  densely dotted, ->] (1.26,1.7)  arc (54:11:2.1);

\filldraw [black] (1.15,1.63) circle (1pt) node[anchor = east]{\tiny $\vert f_{0}\vert e^{i \arg(a_0)}$};

\draw[gray, thick , densely dotted, ->] (0.45,3.07) arc (82:207:3.12);

\draw[gray, thick , densely dotted, ->] (2.75,1.43)  arc (27:-98:3.12);

\end{tikzpicture}
\caption{Geometric visualization of \Cref{l: shift_1_rank_1_no_unique_recovery}. By \Cref{prop: abs_zero_freq_phase_object}, $\vert f_0^q\vert$ is known. This guarantees that the possible Fourier coefficients $f^q_0$ lie on the blue circle. By definition, $f_0^{q} + a_\ell = d (F^{-1} f^q)_\ell$ and $\vert (F^{-1} f^q)_\ell \vert = 1 $. Hence, $f_0^q + a_\ell$ has to lie on the green circle. Consequently, the possible solutions $f^q_0$ lie on the orange circle, or, more precisely, on its intersections with the blue circle. It can be shown that neither the blue and the green circle coincide nor the orange circle is tangent to the blue circle, so that there are precisely two intersection points, corresponding to two solutions. The lightgray dotted lines are parallels in the direction of $a_0$ containing the solutions $f^q_0$. Their two intersection with the green circle show that there are two distinct values the entries of the diagonal can take. 
}
\label{fig: visualization_intersecting_circles}
\end{figure}

In the following, let $j \in [d]$ be fixed. Throughout the proof, we drop the index $j$ to simplify the notation.

Since $\rank(A) = 1$, all equations in \eqref{eq: linear_system} are the same up to a multiplicative constant $c_k \in \R,$ i.e., 
\begin{equation} \label{eq: ak = c a0}
    a_k = c_k \cdot a_0,
\end{equation} 
for all $k \in [d]\backslash\{0\}.$ If, for some $k \in [d]\backslash\{0\}$, we have $\vert a_k \vert = 0$, the corresponding equation 
\[
\RE(a_k) \cos(\varphi) + \IM(a_k)\sin(\varphi) 
= \frac{1}{2\vert f_{0} \vert} \left( d^2 -  \vert a_k \vert^2 - \vert f_{0} \vert^2\right)
\]
in \eqref{eq: linear_system} reduces to $d = \vert f_0 \vert$. By \Cref{l: shift1_rank0}, this is equivalent to $\rank(A)=0$, which contradicts $\rank(A)=1$. Hence, $\vert a_k \vert > 0$ and $c_k \neq 0$ for all $k \in [d]\backslash\{0\}$.

We can rewrite the equations of \eqref{eq: linear_system} as
\begin{equation*}
    d^2 = \vert a_{k} \vert^2 + \vert f_{0} \vert^2 + 2\vert f_{0} \vert\operatorname{Re}\left(a_k e^{-i\varphi}\right).
\end{equation*}
Since \eqref{eq: linear_system} is only considered for $\vert f_0\vert > 0$, it is further transformed into
\begin{equation*}
    \cos(\arg(a_k) - \varphi) = \frac{d^2-\vert a_k\vert^2 - \vert f_0\vert^2}{2\vert a_k\vert\vert f_0\vert},
\end{equation*}
which may have two solutions
\begin{equation*}
\varphi_q = \arg(a_k) + (-1)^{q} \arccos\left(\frac{d^2-\vert a_k\vert^2-\vert f_0\vert^2}{2\vert a_k\vert \vert f_0\vert}\right), \quad q =1,2.
\end{equation*}
As the system \eqref{eq: linear_system} has at least one solution corresponding to the ground-truth object $x$, the right-hand side is the same for all $k \in [d]$. Setting $k=0$ gives the desired formula for $\varphi_q$, $q = 1,2$.

Note that $\varphi_1 = \varphi_2$ if and only if, for all $k \in [d]$,
\begin{equation*}
     \frac{d^2-\vert a_k\vert^2 - \vert f_0\vert^2}{2\vert a_k\vert\vert f_0\vert} = \delta,
\end{equation*}
for $\delta \in  \left\{-1,1\right\}$. Bringing the denominator to the other side and summing the equations for all $k$ yields
\begin{equation}\label{eq: tech rank 1}
d^3 - \sum_{k=0}^{d-1}\vert a_k\vert^2 - d\vert f_0\vert^2 
= 2\delta  \vert f_0\vert \sum_{k=0}^{d-1}\vert a_k\vert.
\end{equation} 
Plancherel's identity \eqref{thm: plancherel} combined with \eqref{eq: def A a} and \Cref{prop: abs_zero_freq_phase_object} gives
\begin{equation*}
\norm{a}_2^2 
= \tfrac{1}{d} \norm{F a}_2^2 
= d \norm{(0,f_1,\ldots, f_{d-1})}_2^2 = d \norm{f}_2^2 - d \vert f_0 \vert^2
= d^3 - d \vert f_0 \vert^2.
\end{equation*}
Substituting this into \eqref{eq: tech rank 1} results in
\begin{equation*}
 2 \vert f_0\vert \sum_{k=0}^{d-1}\vert a_k\vert = 0. 
\end{equation*}
Recall that the case $\vert f_0\vert = 0$ is excluded as it does not require solving the linear system \eqref{eq: linear_system}. Moreover, we showed above that all $a_k \neq 0$. Hence, in the case $\rank(A) = 1$, there always exist two values $\varphi_1 \neq \varphi_2$ solving \eqref{eq: linear_system}. In the following, the corresponding two vectors of Fourier coefficients are denoted by $f^{1}$ and $f^{2}$, respectively.

These solutions can be described in more detail. Let us set $\psi_q$ so that
\begin{equation*}
\left(F^{-1}f^{q}\right)_0 = e^{i\psi_q}.
\end{equation*}
By the definitions \eqref{eq: def f}, \eqref{eq: def A a}, and \eqref{eq: ak = c a0}, we obtain
\begin{equation*}
f_0^{q} + c_k a_0 
= f_0^{q} + a_k 
= \sum_{\ell = 0}^{d-1} e^{\frac{2\pi i k\ell}{d}} f^{q}_\ell
=d (F^{-1} f^{q})_k.
\end{equation*}
By \Cref{prop: abs_zero_freq_phase_object}, the diagonals satisfy $F^{-1} f^{q} \in \T^d$.  Thus, we have that $\tfrac{1}{d}(f_0^{q} + c_k a_0 ) \in \T$.

The system \eqref{eq: linear_system} has at least one solution corresponding to the ground-truth object $x$. Consequently, the right-hand side of \eqref{eq: linear_system} belongs to 
\[
\operatorname{im}(A) \defeq \{v \in \R^{d}: A u = v \ \text{for some} \ u \in \R^2 \}. 
\]
For every $v \in \operatorname{im}(A)$ we have
\[
v_k = \RE(a_k) u_1 + \IM(a_k) u_2
= c_k \RE(a_0) u_1 + c_k \IM(a_0) u_2 = c_k v_0,
\]
and, thus, the right-hand side of \eqref{eq: linear_system} also admits 
\begin{equation} \label{eq: rhs_k = c rhs_0}
1 - \frac{1}{d^2}\left(\vert a_{k} \vert^2 + \vert f_0\vert^2\right) 
= c_k \left[1 - \frac{1}{d^2}\left(\vert a_0 \vert^2 + \vert f_0\vert^2\right) \right].    
\end{equation}

Inserting \eqref{eq: ak = c a0} into \eqref{eq: rhs_k = c rhs_0}, we obtain
\begin{equation*}
    d^2 - c_k^2 \vert a_0\vert^2 - \vert f_0\vert^2 = c_k \cdot \left( d^2 - \vert a_0 \vert ^2 - \vert f_0\vert ^2\right).
\end{equation*}
This quadratic equation with respect to $c_k$ has two roots, $1$ and $c \defeq (\vert f_0\vert^2-d^2)/\vert a_0\vert^2 < 0$, noticeably, independent of $k$. Let us define $\mathcal{S}$ and $\mathcal{S}^c$ as
\[
\mathcal{S} \defeq \{ k \in [d] : c_k = 1\}
\quad \text{and} \quad
\mathcal{S}^c \defeq \{ k \in [d] : c_k = c \}.
\]
By construction, $0 \in \mathcal{S}$, $\mathcal{S} \cap \mathcal{S}^c = \varnothing,$ and $\mathcal{S} \cup \mathcal{S}^c= [d]$. If $\mathcal{S}^c$ is empty, the entries of $(F^{-1} f^{q})$ are all equal to $\tfrac{1}{d}(f_0^{q} +  a_0)$, which, by \Cref{l: shift1_rank0}, contradicts the $\rank(A)=1$ assumption. Thus, we conclude that $\mathcal{S}^c \neq \varnothing$.       

Then, for both, $q=1$ and $q=2$, the values of $F^{-1} f^{q}$ are given by 
\[
(F^{-1} f^{q})_k 
= \tfrac{1}{d}(f_0^{q} + a_0) 
= e^{i\psi_q}, \quad k \in \mathcal{S},  
\]
and
\[
(F^{-1} f^{q})_k 
= \tfrac{1}{d}(f_0^{q} + c a_0) 
\defeqr e^{i(\psi_q +\rho_q)}, \quad k \in \mathcal{S}^c,
\]
for some $\rho_q \in (-\pi, \pi] \backslash \{0\}$.

The next step is to show that $\rho_2 = -\rho_1$. We have
\begin{equation*}
e^{i(\psi_q+\rho_q)} 
= \tfrac{1}{d}(f_0^{q} + c a_0 + a_0 - a_0)
= e^{i\psi_q} + (c - 1) \cdot \frac{1}{d}a_0, \quad q = 1,2.
\end{equation*}
Combining the two relations for $q = 1$ and $q = 2$ yields 
 \begin{equation} \label{eq: difference_of_values}
   e^{i\psi_1} \left( 1 - e^{i\rho_1}\right) =  e^{i\psi_2} \left( 1 - e^{i\rho_2}\right).
\end{equation}
Since $\vert e^{i\psi_1} \vert = \vert e^{i\psi_2} \vert = 1$, we get 
\begin{equation*}
   \vert 1 - e^{i\rho_1}\vert^2 = \vert 1 - e^{i\rho_2}\vert^2,
\end{equation*}
or, equivalently, 
\begin{equation*}
   \cos \rho_1 = \cos \rho_2.
\end{equation*}
As we choose $\rho_1, \rho_2 \in (-\pi,\pi] \backslash\left\{0\right\}$, either $\rho_2 = \rho_1$ or $\rho_2 = -\rho_1$ holds. If $\rho_2 = \rho_1$, then \eqref{eq: difference_of_values} yields $\psi_1 = \psi_2$ and $f^{1}_0 = f^{2}_0$, which in turn, implies $\varphi_1 = \varphi_2$ and contradicts $\varphi_1 \neq \varphi_2$. We conclude that $\rho_2 = -\rho_1$.

In particular, for $\rho_1 = \pi$, the only $\rho_2 \in (-\pi,\pi]\backslash\{0\}$ satisfying $\cos(\rho_2) = \cos(\rho_1) = -1$ is $\rho_2 = \pi = \rho_1$, contradicting $\rho_2 \neq \rho_1$. Therefore, we have $\rho_1 \in (-\pi,\pi)\backslash \{0\}$. 
    
Finally, via \eqref{eq: difference_of_values}, we obtain   
\begin{equation*}
    e^{i\psi_1} = e^{i\psi_2} \cdot \frac{e^{-i\rho_1} - 1}{e^{i\rho_1} - 1}
    = e^{i\psi_2} \cdot (-e^{-i\rho_1})
    = e^{i\psi_2} \cdot e^{-i\rho_1} \cdot e^{i\pi},
\end{equation*}
so that
\begin{equation*}
    \psi_1 = \psi_2 - \rho_1 + \pi + 2\pi m
\end{equation*}
for some $m \in \Z$. Setting $\rho = \rho_2 = -\rho_1$, we obtain the statement of the lemma.
\end{proof}

The remaining case to consider is $\rank(A^j) = 2$. In this case, system \eqref{eq: linear_system} is uniquely solvable. Hence, the lost zero frequency $f_0^j =  \left(F\left[x \circ S_j \overline{x}\right]\right)_0$ can be recovered uniquely. Then, $x$ can be recovered uniquely if, additionally, $j$ is coprime with the object's dimension $d$.

\begin{lemma} \label{l: unique recovery from diagonal if coprime}
Let $j \in [d]$. 

\begin{itemize}
    \item[(i)] If $\rank(A^j) = 2$, $\varphi_j$ is uniquely recoverable from \eqref{eq: linear_system}, i.e., $F^{-1} f^j $ is uniquely recoverable from measurements \eqref{eq: ptycho_measurements}.

    \item[(ii)] If $j$ is coprime with $d$, $x$ can be uniquely recovered from the diagonal $F^{-1} f^j = x \circ S_j \overline{x}$.
\end{itemize}
\end{lemma}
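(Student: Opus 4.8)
The plan is to treat the two parts separately; each follows from an elementary fact once the framework of \Cref{thm: linear_system} is in place. Throughout I would use that, by \Cref{prop: abs_zero_freq_phase_object}, the magnitude $\vert f_0^j \vert$ is already known, and, by \Cref{prop: theorem_WDD_background}, so are all coefficients $f_k^j$ with $k \neq 0$; hence the sole remaining unknown in $f^j$ is the argument $\varphi_j = \arg(f_0^j)$.

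For part $(i)$, I would argue via injectivity of the coefficient matrix. By \Cref{thm: linear_system}, the true argument $\varphi_j$ yields a solution $u = (\cos\varphi_j, \sin\varphi_j)^T \in \R^2$ of the linear system \eqref{eq: linear_system}, so its right-hand side lies in $\operatorname{im}(A^j)$. When $\rank(A^j) = 2$, the $d \times 2$ matrix $A^j$ has full column rank and is therefore injective as a map $\R^2 \to \R^d$. Consequently $A^j u = v$ admits at most one solution, which must coincide with $(\cos\varphi_j, \sin\varphi_j)^T$. Since a pair $(\cos\varphi_j, \sin\varphi_j)$ determines $\varphi_j \in (0,2\pi]$ uniquely, the argument is recovered, giving $f_0^j = \vert f_0^j\vert e^{i\varphi_j}$. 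Together with the known coefficients $f_k^j$, $k \neq 0$, the whole vector $f^j$ is determined, and an inverse Fourier transform yields $F^{-1} f^j = x \circ S_j \overline{x}$.

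For part $(ii)$, I would exploit the cyclic structure forced by coprimality. Writing $g \defeq F^{-1} f^j = x \circ S_j \overline{x}$, we have $g_k = x_k \overline{x}_{k+j}$ for all $k \in [d]$. Since $x \in \T^d$, every entry has unit modulus, so $\vert g_k\vert = 1$ and the relation rearranges to $x_{k+j} = x_k \overline{g}_k$. This is a recursion propagating the phase from each index to the next. Because $\gcd(j,d) = 1$, the shift $k \mapsto k+j \ (\MOD d)$ is a single $d$-cycle on $[d]$, so the iterates $0, j, 2j, \ldots, (d-1)j$ visit every index exactly once. Fixing $x_0 \in \T$ arbitrarily and iterating the recursion therefore determines all entries of $x$.

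Finally, I would observe that the only freedom in this reconstruction is the choice of $x_0 \in \T$, which is precisely the global phase ambiguity encoded by the equivalence relation $\sim$ from \Cref{sec: ptycho and wdd}; any two reconstructions differ by a unimodular constant, so $x$ is unique in the required sense. The argument is essentially elementary, and the only points meriting care are verifying in $(i)$ that the unique $\R^2$-solution genuinely pins down the angle $\varphi_j$ (rather than only the vector $(\cos\varphi_j,\sin\varphi_j)$, which it does because sine and cosine together fix the angle modulo $2\pi$), and checking in $(ii)$ that coprimality is exactly what makes the recursion reach all of $[d]$.
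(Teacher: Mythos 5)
Your proposal is correct and follows essentially the same route as the paper: part $(i)$ via the full column rank of $A^j$ forcing the system \eqref{eq: linear_system} to have exactly one solution, which must be $(\cos\varphi_j,\sin\varphi_j)^T$, and part $(ii)$ via coprimality making the shift by $j$ a single $d$-cycle on $[d]$, so that phases propagate from an arbitrarily chosen $x_0$ (your recursion $x_{k+j} = x_k \overline{g}_k$ is just the unrolled form of the paper's telescoping product $x_0\overline{x}_k = \prod_t x_{(t-1)j}\overline{x}_{tj}$). The closing observation that the choice of $x_0 \in \T$ is exactly the global phase ambiguity matches the paper's argument as well.
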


\begin{proof}[Proof.]
If $\rank(A^j) = 2$, system (\ref{eq: linear_system}) has one solution, i.e., we can recover $\varphi_j$, and, hence, $ \left(F\left[x \circ  S_j\overline{ x}\right]\right)_0$ uniquely. From  $F\left[x \circ  S_j\overline{ x}\right]$, we obtain back $x \circ  S_j\overline{ x}$.

If $j$ and $d$ are coprime, $j$ is a generator of the additive group of integers modulo $d$, denoted by $\Z_d$. That means, for every $k \in [d]$, there exists $r \in \Z$ with $rj \MOD d = k$. Hence, for any $k \in [d]$, there exists $r \in \Z$ with
\begin{equation*}
    x_0 \overline{x}_k = \prod_{t=1}^{r} x_{(t-1)j}\overline{x}_{tj} \ \bigg/ \ \prod_{t=1}^{r-1} \vert x_{tj}\vert^2.
\end{equation*}
Consequently, for all $k \in [d]$, the product $ x_0 \overline{x}_k$ can be built from the entries of $x \circ S_j \overline{x}$ together with the magnitudes of $x$, which are a priori known for a phase object. 
The phase of $x_0 \in \T$ can be chosen arbitrarily as the solution to a phase retrieval problem is unique only up to a global phase. Based on this choice, the phases of $x_k \in \T$ for all $k>0$ are found using the relations $ x_0 \overline{x}_k$. 
\end{proof}

\begin{remark}\label{rem: rank 2 coprime}
If $j$ is not coprime with $d$, $j$ does not generate $\Z_d$. That means, $\mathcal{C}_0 \defeq \left\{j\ell \MOD d, \ell \in [d]\right\} \neq \Z_d$, and there exists at least one $n \in \Z_d$ such that  $\mathcal C \defeq \{(j \ell + n) \MOD d, \ell \in [d]\}$ satisfies $\mathcal{C}_0 \cap \mathcal{C} = \varnothing$. Hence, for every object $x \in \bb C^d$ we can construct $\tilde x$ as 
\[
\tilde x_k = 
\begin{cases}
\alpha x_k, & k \in \mathcal C, \\
x_k, & k \notin \mathcal C,
\end{cases}
\]
with $\alpha \in \T~\backslash \left\{1\right\}$, such that $\tilde{x} \not\sim x$ but $x\circ S_j\overline{x} = \tilde{x} \circ S_j \overline{\tilde{x}}$.

\end{remark}

\subsection{Proof of \Cref{thm: unique_recovery}} \label{sec: proof thm uniqueness}

In this section, we prove \Cref{thm: unique_recovery}. Using the results of \Cref{sec: disjoint_analysis}, we investigate in closer detail under which conditions system (\ref{eq: linear_system}) is uniquely solvable. We will find that it is sufficient to use the information given by $j = 1$ and $j = 2$ to fully characterize the uniqueness of reconstruction for measurements with background noise. The proof is split into the analysis of the disjoint cases listed in \Cref{table:1}. 

The case $j = 0$ is of no relevance as 
\begin{equation*}
     \left(F\left[x \circ  S_0\overline{ x}\right]\right)_0 = \sum_{k = 0}^{d-1} e^{-\frac{2\pi i k\cdot 0}{d}} \cdot \vert x_{k}\vert^2 = d
\end{equation*}
is known for a phase object, and does not provide further information on the phases of the object.

We start with investigating $j = 1$. As $1$ is coprime with any $d \in \N$, \Cref{l: unique recovery from diagonal if coprime} guarantees unique recovery in case $\rank(A^1) = 2$.

Let $\rank(A^1) = 0$. With $1$ being coprime with any $d \in \N$, \Cref{l: shift1_rank0} provides that the only object satisfying this condition is the negative example $(i)$ in \Cref{thm: ambiguities}. In \Cref{sec: proof of ambiguity theorem} we showed that for this class of objects unique recovery is not possible.

If  $\rank(A^1) = 1$, \Cref{l: shift_1_rank_1_no_unique_recovery} tells that considering only the diagonal corresponding to shift $j=1$ is not sufficient. Thus, we need to include information for further diagonals.

We continue with $j = 2$ and use that the first and the second off-diagonal are related via 
\begin{equation}\label{eq: phase obj diag relation}
\begin{aligned}
(F^{-1} f^{2})_k &= (x \circ S_2 \overline x)_k 
= x_k \cdot 1 \cdot \overline{x_{k+2}} 
= x_k  ~\vert x_{k+1} \vert ^2~ \overline{x_{k+2}} \\
& = (x \circ S_1 \overline x)_k ~ (x \circ S_1 \overline x)_{k+1} 
= (F^{-1} f^{1})_k ~(F^{-1} f^{1})_{k+1}.
\end{aligned}
\end{equation}
Again, the three possible cases $\rank(A^2) = 0$, $ \rank(A^2) = 1$, and $\rank(A^2) = 2$ are considered.

Firstly, we find that $ \rank(A^2) = 1$ is not feasible if $\rank(A^1) = 1$. 

\begin{lemma} \label{l: shift1_rank1_shift2_rank1}
The case $\rank(A^1) = 1$ and $\rank(A^2) = 1$ is not possible.
\end{lemma}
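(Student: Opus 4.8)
The plan is to argue by contradiction, converting the two rank-one hypotheses into rigid structural descriptions of the first two off-diagonals and then showing these are incompatible. Assume $\rank(A^1)=\rank(A^2)=1$. Applying \Cref{l: shift_1_rank_1_no_unique_recovery} with $j=1$ produces a disjoint partition $[d]=\mathcal S_1\cup\mathcal S_1^c$ with $0\in\mathcal S_1$ and $\mathcal S_1^c\neq\varnothing$ on which the true diagonal $D^1\defeq x\circ S_1\overline x$ takes exactly two values; write $D^1_k=u$ for $k\in\mathcal S_1$ and $D^1_k=v$ for $k\in\mathcal S_1^c$, where $u,v\in\T$ and $v=u\,e^{i\rho^1}$ with $\rho^1\in(-\pi,\pi)\setminus\{0\}$. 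The same lemma applied with $j=2$ shows that $D^2\defeq x\circ S_2\overline x$ likewise takes exactly two distinct values.

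The engine of the argument is the multiplicative identity \eqref{eq: phase obj diag relation}, namely $D^2_k=D^1_k\,D^1_{k+1}$, which expresses each entry of $D^2$ through a consecutive pair of entries of $D^1$. Consequently $D^2_k\in\{u^2,uv,v^2\}$, the value being $u^2$ when $k,k+1\in\mathcal S_1$, being $v^2$ when $k,k+1\in\mathcal S_1^c$, and being $uv$ when the pair straddles the partition. Since $\rho^1\in(-\pi,\pi)\setminus\{0\}$ forces both $u\neq v$ and $u\neq -v$, the three numbers $u^2,uv,v^2$ are pairwise distinct. As both $u$ and $v$ are attained and the index set is cyclic, at least one straddling pair must occur, so the value $uv$ is always present in $D^2$. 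The rank-one hypothesis on $A^2$, which by \Cref{l: shift_1_rank_1_no_unique_recovery} means $D^2$ takes exactly two values, then forces precisely one of $u^2,v^2$ to be absent; that is, either $\mathcal S_1$ or $\mathcal S_1^c$ contains no two cyclically adjacent indices.

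It remains to exclude this last configuration, and this is where I expect the real difficulty to lie. The extra information I would bring in is global: every genuine phase-object diagonal obeys the telescoping identity $\prod_{k\in[d]}D^j_k=1$ (because $\vert x_k\vert=1$), which for $j=1$ gives $u^{\vert\mathcal S_1\vert}v^{\vert\mathcal S_1^c\vert}=1$, and for $j=2$ gives a second relation once the multiplicities of $u^2$ and $uv$ are counted through the block structure of $\mathcal S_1^c$. Combining these with the explicit second value of $D^2$ furnished by \Cref{l: shift_1_rank_1_no_unique_recovery}, which fixes that value relative to $D^2_0$ and to $\rho^1$, is the natural route toward an arithmetic obstruction. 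The main obstacle is that the two-value property of $D^2$ is compatible with many admissible block patterns of $\mathcal S_1^c$, so the argument cannot be purely combinatorial and must exploit the precise phases $u,v$ rather than just the partition; moreover, the companion solutions produced by \Cref{l: shift_1_rank_1_no_unique_recovery} need not satisfy the product identity, so one must scrupulously track which candidate is the true diagonal before invoking $\prod_k D^j_k=1$. Resolving this bookkeeping, and extracting from the two product relations together with the rank-one description of $D^2$ an obstruction to $\rho^1\in(-\pi,\pi)\setminus\{0\}$, is the crux and the step I would expect to demand the most care.
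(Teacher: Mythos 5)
Your proposal reproduces the paper's opening moves correctly --- the three candidate values $u^2, uv, v^2$ for $D^2 = x \circ S_2 \overline{x}$, their pairwise distinctness when $\rho^1 \in (-\pi,\pi)\setminus\{0\}$, the forced presence of $uv$ by cyclicity, and the conclusion that $\rank(A^2)=1$ leaves exactly one of $u^2, v^2$ absent --- but it stops precisely where a contradiction has to be produced, and the route you sketch for that step cannot produce one. The telescoping identities are a dead end: by cyclicity,
\begin{equation*}
\prod_{k\in[d]} D^2_k \;=\; \prod_{k\in[d]} D^1_k D^1_{k+1} \;=\; \Bigl(\,\prod_{k\in[d]} D^1_k\Bigr)^{2} \;=\; 1,
\end{equation*}
so the $j=2$ product relation is automatically implied by the $j=1$ relation, and no multiplicity bookkeeping over the block structure of $\mathcal S_1^c$ can extract anything new from it. The paper finishes with an idea absent from your sketch: it invokes the companion-solution relation \eqref{eq: relation_psi1_psi2} of \Cref{l: shift_1_rank_1_no_unique_recovery} \emph{twice} --- once for the $j=1$ system, giving $\psi_1 - \psi_2 \in \rho + \pi + 2\pi\Z$, and once for the $j=2$ system, whose two solutions it takes to be the two second diagonals obtained from $f^{1,1}$ and $f^{1,2}$ via \eqref{eq: phase obj diag relation}, giving $2(\psi_1-\psi_2) \in (2\ell+1)\rho + \pi + 2\pi\Z$ with $\ell \in \{0,1\}$. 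Eliminating $\psi_1-\psi_2$ forces $(2\ell-1)\rho \in \pi + 2\pi\Z$, contradicting $\rho \in (-\pi,\pi)\setminus\{0\}$.

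That said, your instinct that this remaining step is the real difficulty is better founded than you may realize, because the paper's finish rests on an identification that is not justified: the second diagonal built from the companion solution $f^{1,2}$ through \eqref{eq: phase obj diag relation} is in general \emph{not} a solution of the $j=2$ linear system (its Fourier coefficients at $k \neq 0$ need not coincide with $f^2_k$), so \eqref{eq: relation_psi1_psi2} may not be applied to that pair. In fact the statement itself appears to fail: for the ramp $x_k = e^{-ik\theta}$, $k \in [d]$, with $e^{id\theta} \notin \{1,-1\}$, one computes $D^1 = \bigl(e^{i\theta},\dots,e^{i\theta},e^{-i(d-1)\theta}\bigr)$ and $D^2 = \bigl(e^{2i\theta},\dots,e^{2i\theta},e^{-i(d-2)\theta},e^{-i(d-2)\theta}\bigr)$, both two-valued; since for a two-valued unimodular diagonal every entry $a^j_\ell = d\,D^j_\ell - f^j_0$ is a real multiple of the difference of the two values, this yields $\rank(A^1)=\rank(A^2)=1$, and such $x$ is of neither type $(i)$ nor type $(ii)$ of \Cref{thm: ambiguities}. (Unique recovery of this $x$ still holds, but via the compatibility argument of \Cref{l: shift1_rank1_shift2_rank2} --- only one of the two $j=1$ candidates reproduces the $j=2$ data under \eqref{eq: phase obj diag relation} --- not because the rank pattern is impossible.) So your attempt is genuinely incomplete, but no completion exists along either your route or the paper's: the honest repair of this case is to replace the impossibility claim by the selection argument just described.
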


\begin{proof}[Proof.]
We show the statement by contradiction. Assume that $\rank(A^1) = \rank(A^2) = 1$. As $\rank(A^1) = 1$, \Cref{l: shift_1_rank_1_no_unique_recovery} for $j=1$ states that there are the two distinct solutions 
\begin{align*}
     (F^{-1} f^{1,q})_k = \begin{cases}
     e^{i\psi_q}, & k \in \mathcal S, \\
     e^{i(\psi_q + (-1)^{q}\rho)}, & k \in \mathcal S^c,    
     \end{cases}
     \quad q = 1,2,
\end{align*}
with $\mathcal S, \mathcal S^c$ as described in the proof of \Cref{l: shift_1_rank_1_no_unique_recovery}.

Here, we drop the index $j$ for $\psi^j_q$ and $\rho^j$ to shorten the notation as we only require these values for $j = 1$. However, we keep the index in $f^{j,q}$ to distinguish $f^{1,q}$ and $f^{2,q}$.

By \eqref{eq: phase obj diag relation}, there are three possible values which the entries of the second diagonal can take, 
\begin{align}  \label{eq: second diagonal elements}
    (F^{-1} f^{2,q})_k 
    &=(F^{-1} f^{1,q})_k (F^{-1} f^{1,q})_{k+1} \notag\\
    &= \begin{cases} e^{i2\psi_q},  &k \in \mathcal{S},\ k+1 \in \mathcal{S}, \\
     e^{i(2\psi_q + (-1)^{q} \rho)}, &k \in \mathcal{S},\ k+1 \in \mathcal{S}^c \\ &\text{or}~  k \in \mathcal{S}^c,\ k+1 \in \mathcal{S},\\
     e^{i\left(2\psi_q + (-1)^{q} 2 \rho\right)},  &k \in \mathcal{S}^c, \ k+1 \in \mathcal{S}^c, \end{cases}
\end{align}
for both $ q= 1$ and $q = 2$. However, since $\rank(A^2) = 1$, by \Cref{l: shift_1_rank_1_no_unique_recovery}, the second diagonal can only accept two different values. The index sets $\mathcal S, \mathcal S^c \neq \varnothing$, hence there is at least one $k \in [d]$ with 
\begin{equation*}
\left(F^{-1}f^{1,q}\right)_k = e^{i\psi_q} \quad \text{and} \quad \left(F^{-1}f^{1,q}\right)_{k+1} = e^{i(\psi_q + (-1)^{q}\rho)}.
\end{equation*}
Thus, we can conclude for the second diagonal that
\begin{equation*}
(F^{-1}f^{2,q})_k = e^{i\psi_q} \cdot  e^{i(\psi_q + (-1)^{q} \rho)} = e^{i(2\psi_q + (-1)^{q} \rho)}    
\end{equation*}
for at least one $k \in [d]$. Consequently,
\begin{equation*}
        (F^{-1} f^{2,q})_k \in \left\{
     e^{i(2\psi_q + (-1)^{q} \ell \rho)}, e^{i\left(2\psi_q + (-1)^{q} (\ell+1)\rho\right)}\right\}, \quad q= 1,2,
\end{equation*}
where in both cases $\ell$ is either zero or one, depending on the set $\mathcal S$.

From \Cref{l: shift_1_rank_1_no_unique_recovery} for $j=1$, we obtain 
\begin{equation} \label{eq: relation_psi1_psi2_A1}
    \psi_1 = \psi_2 + \rho + \pi + 2\pi m_1,
\end{equation}
and, for $j=2$,
 \begin{equation} \label{eq: relation_psi1_psi2_A2}
    2\psi_1 - \ell \rho = 2\psi_2 + \ell \rho + \rho + \pi + 2\pi m_2
\end{equation}
for some $m_1, m_2 \in \Z$.

Bringing $\psi_1 - \psi_2$ to the left--hand side and the rest to the right-hand side in both \eqref{eq: relation_psi1_psi2_A1} and \eqref{eq: relation_psi1_psi2_A2} yields
\begin{equation*} 
 \rho + \pi + 2\pi m_1 = \frac{1}{2}\left( (2 \ell +1) \rho + \pi + 2\pi m_2 \right),        \end{equation*}
so that
\begin{equation*} 
(2 \ell - 1) \rho = \pi + 2\pi(2m_1 - m_2).        
\end{equation*}
For both, $\ell = 0$ and $\ell = 1$, we find that $\rho \in \left\{ -\pi,\pi\right\}$, which is not attainable according to \Cref{l: shift_1_rank_1_no_unique_recovery}. Hence, $\rank(A^1) = \rank(A^2) = 1$ is not feasible.
\end{proof}

Next, we investigate the case $\rank(A^1) = 1$ and $\rank(A^2) = 0$ and find the second negative example $(ii)$ in \Cref{thm: ambiguities}. From \Cref{sec: proof of ambiguity theorem} it is known that such objects cannot be recovered uniquely.

\begin{lemma} \label{l: negative result 2}
Let $x^q \in \Complex^d$ be defined by entries
\begin{align*}
    x_k^{q} \defeq e^{-\frac{2\pi i k m}{d}} \cdot \begin{cases}
        1, \quad &k ~\text{even},\\-(-1)^q e^{(-1)^q \frac{1}{2}i \rho}, &k ~\text{odd},
    \end{cases} 
\end{align*}
for all $k \in [d]$. If $\rank(A^1) = 1$ and $\rank(A^2) = 0$, then $d$ is even and $x \sim x^q$ with $q = 1$ or $q = 2$ for some $m \in \Z$ and $ \rho \in (-\pi,\pi)\backslash\{0\}$. 
\end{lemma}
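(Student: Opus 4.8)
The plan is to combine the structural descriptions of the two diagonals provided by Lemmas~\ref{l: shift_1_rank_1_no_unique_recovery} and~\ref{l: shift1_rank0}, and then to reconstruct $x$ explicitly from the first diagonal. Since $\rank(A^1)=1$, Lemma~\ref{l: shift_1_rank_1_no_unique_recovery} applied to $j=1$ says that $F^{-1}f^1 = x \circ S_1\overline x$ takes only two values: it equals $e^{i\psi}$ on a set $\mathcal S_1 \ni 0$ and $e^{i\chi}$ on its nonempty complement $\mathcal S_1^c$, where $\chi-\psi=(-1)^q\rho$ with $\rho\in(-\pi,\pi)\backslash\{0\}$, so in particular $\psi\neq\chi\pmod{2\pi}$. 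Since $\rank(A^2)=0$, Lemma~\ref{l: shift1_rank0} applied to $j=2$ says that $F^{-1}f^2$ is constant.

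The crucial link is the rank-one relation \eqref{eq: phase obj diag relation}, $(F^{-1}f^2)_k=(F^{-1}f^1)_k(F^{-1}f^1)_{k+1}$, which forces the product of consecutive entries of the two-valued first diagonal to be independent of $k$. First I would identify this constant: as $\mathcal S_1,\mathcal S_1^c$ are both nonempty, the cycle $0,1,\dots,d-1,0$ contains some $k$ with $k\in\mathcal S_1$ and $k+1\in\mathcal S_1^c$, so the constant equals $e^{i(\psi+\chi)}$. Then I argue by contradiction that no two cyclically adjacent indices can lie in the same set: a same-set adjacency would contribute the product $e^{2i\psi}$ or $e^{2i\chi}$, and equating either with $e^{i(\psi+\chi)}$ yields $e^{i\psi}=e^{i\chi}$, contradicting $\psi\neq\chi$. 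Hence membership in $\mathcal S_1$ strictly alternates along the cycle, which is possible only when $d$ is even; together with $0\in\mathcal S_1$ this identifies $\mathcal S_1$ with the even indices and $\mathcal S_1^c$ with the odd ones.

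It remains to turn the alternating diagonal into the explicit object. Using $\arg x_{k+1}=\arg x_k-\arg\big((x\circ S_1\overline x)_k\big)$ and telescoping from an arbitrary choice of the global phase $\arg x_0$, I obtain $x_k=e^{-ik\sigma}d_k$ with $\sigma\defeq\tfrac12(\psi+\chi)$, where $d_k=1$ for even $k$ and $d_k=e^{i(\chi-\psi)/2}$ for odd $k$ (recall $|x_k|=1$ for $x\in\T^d$). Cyclic consistency $x_d=x_0$ forces $\sum_{k=0}^{d-1}\arg\big((x\circ S_1\overline x)_k\big)=\tfrac d2(\psi+\chi)=d\sigma\in 2\pi\Z$, so $\sigma=\tfrac{2\pi m}{d}$ for some $m\in\Z$ and the factor $e^{-ik\sigma}$ becomes the modulation $e^{-2\pi ikm/d}$. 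Comparing $d_k$ with the target entries, using $\chi-\psi=(-1)^q\rho$, then identifies $x$ up to a global phase with $x^q$ for a suitable $q\in\{1,2\}$; the normalizing sign $-(-1)^q$ on the odd entries is reconciled by shifting $m$ by $\tfrac d2$ (an integer since $d$ is even), which multiplies the odd entries by $-1$ while leaving the even ones fixed. This gives $x\sim x^q$ with the claimed parameters.

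I expect the combinatorial alternation argument to be the main obstacle, since it is where the rank hypotheses genuinely interact and where the parity of $d$ is forced; the subsequent reconstruction is a routine telescoping, but some care is needed in the final bookkeeping to match the normalization $-(-1)^q e^{(-1)^q i\rho/2}$ in the statement, in particular the $m\mapsto m+\tfrac d2$ shift that makes the two admissible representations consistent.
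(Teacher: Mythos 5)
Your proposal is correct and follows essentially the same route as the paper: both invoke \Cref{l: shift1_rank0} (for $j=2$) and \Cref{l: shift_1_rank_1_no_unique_recovery} (for $j=1$), use the rank-one relation \eqref{eq: phase obj diag relation} to force the two-valued first diagonal to alternate, deduce that $d$ must be even, and then reconstruct $x$ by telescoping and the cyclic product condition. The only cosmetic difference is that you force the parity via the strict-alternation-around-a-cycle argument, whereas the paper derives the odd-$d$ contradiction from the wrap-around entry $x_{d-1}\overline{x}_1$; these are the same idea, and your version in fact spells out the same-set-adjacency step that the paper leaves implicit.
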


\begin{proof}[Proof.]    

According to \Cref{l: shift1_rank0}, $\rank(A^2) = 0$ is equivalent to $F^{-1} f^2$ constant.

We assume $\rank(A^1) = 1$. As in \Cref{l: shift_1_rank_1_no_unique_recovery}, set $(F^{-1} f^{1,q})_0 \defeq e^{i\psi_q}$ for some $\psi_q \in [0,2\pi), \ q = 1,2$, and the other value that appears for at least one $(F^{-1} f^{1,q})_k$ we set as $e^{i(\psi_q +(-1)^q \rho)}$ for $\rho \in (-\pi,\pi)\backslash\{0\}$, for both $q=1,2$. 

The first and the second off-diagonal are related via \eqref{eq: phase obj diag relation}. Hence, based on the knowledge of the first diagonals $F^{-1}f^{1,q}$ and the fact that $F^{-1} f^2$ is constant, the second diagonal must be $(F^{-1} f^2)_k = e^{i(2\psi_q +(-1)^q \rho)}$ for all $k \in [d]$, for either $q=1$ or $q=2$. For this to hold true, the entries of $F^{-1} f^{1,q}$ have to satisfy
\begin{align} \label{eq: first diagonal}
    \left(F^{-1} f^{1,q}\right)_k = \begin{cases}
        e^{i\psi_q}, \quad &k ~\text{even},\\ e^{i(\psi_q +(-1)^q \rho)}, &k ~\text{odd}.
    \end{cases}
\end{align}

If $d$ is odd, this means $x_{d-1}\overline{x}_0 = e^{i\psi_q}$, so 
\begin{equation*}
x_{d-1}\overline{x}_1 = x_{d-1}\overline{x}_0x_0\overline{x}_1 = e^{i2\psi_q}.    
\end{equation*}
This requires $\rho \in 2\pi \Z$. Then, the first diagonal is $ \left(F^{-1} f^{1,q}\right)_k = e^{i\psi_q}$ for all $k \in [d]$, contradicting $\rank(A^1) = 1$. Hence, $\rank(A^1) = 1$ and $\rank(A^2) = 0$ can only appear if the object dimension $d$ is even.

Let $d$ be even.
By 
\begin{equation*}
    \prod_{k=0}^{d-1} x_k \overline{x}_{k+1} = \prod_{k=0}^{d-1} \vert x_k\vert^2 = 1, 
\end{equation*}
we obtain
\begin{equation*}
    \left(e^{i\psi_1}\right)^{\frac{d}{2}}\left(e^{i(\psi_1-\rho)}\right)^{\frac{d}{2}} = 1,
\end{equation*}
i.e., 
\begin{equation} \label{eq: relation_psi1_rho}
    d\psi_1 - \frac{d}{2}\rho = 2\pi m
\end{equation}
for some $m \in \Z$. Equation \eqref{eq: relation_psi1_psi2} in \Cref{l: shift_1_rank_1_no_unique_recovery} further provides that the respective $\psi_2$ corresponding to $f^{1,2}$ satisfies $\psi_2 = \psi_1 - \rho - \pi + 2\pi \tilde{m}$ for some $\tilde{m} \in \Z$. 

The first diagonal is sufficient to reconstruct $x$ as explained in \Cref{l: unique recovery from diagonal if coprime}. From \eqref{eq: first diagonal}, we derive the two possible solutions
\begin{align*}
    x_k^{q} =  e^{-ik(\psi_{q} + (-1)^{q} \rho)} \cdot \begin{cases}
        e^{(-1)^q i \frac{k}{2} \rho}, \quad & k ~\text{even},\\ e^{(-1)^q i \frac{k+1}{2} \rho}, & k ~\text{odd},
    \end{cases} \quad q = 1,2.
\end{align*}
Together with \eqref{eq: relation_psi1_psi2} and \eqref{eq: relation_psi1_rho},
we obtain that $x$ can be equal to $x^q$ for $q = 1$ or $q = 2$ with 
\begin{align*}
    x_k^{q} =  e^{-\frac{2\pi i k m}{d}} \cdot \begin{cases}
        1, \quad &k ~\text{even},\\-(-1)^q e^{(-1)^q \frac{1}{2}i \rho}, &k ~\text{odd},
    \end{cases} \quad q = 1,2,
\end{align*}
for all $k \in [d]$, for some $m \in \Z$ and $ \rho \in (-\pi,\pi)\backslash\{0\}$. 
\end{proof}

\begin{remark}
    In \Cref{thm: ambiguities} $(ii)$, we do not exclude $\rho = 0$. Note that for $\rho = 0$, the objects of type $(ii)$ equal the objects of type $(i)$. In \Cref{l: negative result 2}, however, $\rho = 0$ is excluded as it is not attainable in case $\rank(A^1) = 1$.
\end{remark}

The remaining case to investigate is $\rank(A^2) = 2$. 
If  $\rank(A^2) = 2$ and $d$ is odd, $d$ is coprime with $2$ and the ground-truth solution $x$ can be
uniquely recovered as shown in \Cref{l: unique recovery from diagonal if coprime} for $j =2$. It is left to study the case when $d$ is even.

\begin{lemma} \label{l: shift1_rank1_shift2_rank2}
If $\rank(A^1) = 1$, $\rank(A^2) = 2,$ and $d$ is even, the ground-truth solution $x$ can be uniquely recovered.

\end{lemma}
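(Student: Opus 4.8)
The plan is to combine the two candidate first diagonals coming from the rank-one analysis with the uniquely determined second diagonal, and to show that only one of the candidates is consistent with the latter via the product relation \eqref{eq: phase obj diag relation}. First I would invoke \Cref{l: shift_1_rank_1_no_unique_recovery} for $j=1$: since $\rank(A^1)=1$, there are exactly two candidate diagonals $F^{-1}f^{1,q}$, $q=1,2$, which take the value $e^{i\psi_q}$ on a set $\mathcal S_1 \ni 0$ and the value $e^{i(\psi_q+(-1)^q\rho)}$ on its nonempty complement $\mathcal S_1^c$, with $\rho \in (-\pi,\pi)\setminus\{0\}$ and $\psi_2 \in \psi_1 - \rho + \pi + 2\pi\Z$. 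Exactly one of these is the true first diagonal $x \circ S_1\overline x$. Because $\rank(A^2)=2$, \Cref{l: unique recovery from diagonal if coprime}$(i)$ simultaneously yields the true second diagonal $F^{-1}f^2 = x \circ S_2\overline x$ uniquely.

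Next I would compute, for each candidate, the second diagonal it predicts through \eqref{eq: phase obj diag relation}, namely $(F^{-1}f^{2,q})_k = (F^{-1}f^{1,q})_k(F^{-1}f^{1,q})_{k+1} = e^{i(2\psi_q + (-1)^q n_k \rho)}$, where $n_k \defeq |\{k,k+1\}\cap \mathcal S_1^c| \in \{0,1,2\}$ depends only on the common partition and not on $q$. The true second diagonal necessarily coincides with the prediction of the true candidate, so it suffices to show that the two predicted second diagonals are distinct; then the known $F^{-1}f^2$ matches exactly one candidate and thereby identifies the true first diagonal.

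To prove distinctness, I would argue that an equality $F^{-1}f^{2,1} = F^{-1}f^{2,2}$ would force $2(\psi_1-\psi_2) \equiv 2 n_k \rho \pmod{2\pi}$ for every $k$, i.e. $2n_k\rho$ must be constant over the attained values of $n_k$. Since $\mathcal S_1$ and $\mathcal S_1^c$ are both nonempty and $[d]$ is cyclic, some index lies on the boundary between the two sets, so the value $n_k=1$ is always attained. If in addition some $n_k \in \{0,2\}$ occurs, comparing it with $n_k=1$ gives $2\rho \equiv 0 \pmod{2\pi}$, hence $\rho \in \{0,\pm\pi\}$, contradicting $\rho \in (-\pi,\pi)\setminus\{0\}$. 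The only surviving possibility is $n_k \equiv 1$, i.e. $\mathcal S_1$ and $\mathcal S_1^c$ alternate perfectly into the even and odd indices (possible only because $d$ is even); but then, using $\psi_2 \equiv \psi_1-\rho+\pi$, both predicted second diagonals collapse to the same constant $e^{i(2\psi_1-\rho)}$, so the true $F^{-1}f^2$ would be constant, which by \Cref{l: shift1_rank0} means $\rank(A^2)=0$, contradicting $\rank(A^2)=2$. Hence the two predicted second diagonals are genuinely distinct.

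With the correct candidate singled out, the true first diagonal $x \circ S_1\overline x$ is known, and since $j=1$ is coprime with every $d$, \Cref{l: unique recovery from diagonal if coprime}$(ii)$ reconstructs $x$ uniquely up to a global phase, completing the argument. I expect the combinatorial bookkeeping of $n_k$ over the partition $\mathcal S_1,\mathcal S_1^c$ — and, in particular, recognizing that the perfectly alternating configuration that underlies the ambiguity $(ii)$ is exactly the one excluded by $\rank(A^2)=2$ — to be the main obstacle; the remaining steps are routine.
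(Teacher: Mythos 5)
Your proposal is correct and follows essentially the same route as the paper's proof: use the uniquely recovered second diagonal (from $\rank(A^2)=2$) to discriminate between the two candidates produced by \Cref{l: shift_1_rank_1_no_unique_recovery}, derive a contradiction from assuming both candidates predict the same second diagonal via the relation \eqref{eq: phase obj diag relation} and the constraint $\psi_2 \in \psi_1 - \rho + \pi + 2\pi\Z$, and finish with \Cref{l: unique recovery from diagonal if coprime}$(ii)$ for $j=1$. Your bookkeeping via $n_k = \vert\{k,k+1\}\cap\mathcal S_1^c\vert$ merely reorganizes the paper's case split (its cases $(i)$/$(ii)$ correspond to your $n_k\in\{0,2\}$ comparisons, and its exclusion of a constant second diagonal via \Cref{l: shift1_rank0} is exactly your alternating-partition case), so the two arguments coincide in substance.
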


\begin{proof}[Proof.]  

If  $\rank(A^2) = 2$, the system (\ref{eq: linear_system}) for $j=2$ has one solution. That means, we can recover $f^2_0$ uniquely and compute the second diagonal $F^{-1} f^2$.

Since $d$ is even, $d$ and $j = 2$ are not coprime and there are multiple $x$ corresponding to the second diagonal $F^{-1} f^2$, see \Cref{rem: rank 2 coprime}. However, we can make use of $f^2$ to select the ground-truth solution out of the two possible solutions $f^{1,1} \neq f^{1,2}$ obtained in \Cref{l: shift_1_rank_1_no_unique_recovery} for the shift $j=1$. We show that this is possible by contradiction.

Assume that $f^{1,1}$ and $f^{1,2}$ yield, via \eqref{eq: phase obj diag relation}, the same diagonal $F^{-1} f^2$. The entries of $F^{-1} f^2$ can be deduced from $f^{1,1}$ and $f^{1,2}$ and can take the three possible values as in \eqref{eq: second diagonal elements} for both $q = 1$ and $q = 2$. 

It is not possible that $F^{-1}f^{2}$ is constant.  Otherwise, $\rank(A^2) = 0$ according to \Cref{l: shift1_rank0}. Thus, for at least one $k \in [d]$, 
\begin{equation*}
(F^{-1}f^{2})_k \in \left\{ e^{i2\psi_q}, e^{i(2\psi_q +(-1)^q 2\rho)} \right\}    
\end{equation*}
for both $ q= 1,2$, and, as we assumed that $F^{-1} f^2$ is obtained from both $f^{1,1}$ and $f^{1,2}$, either
\[
(i) \quad e^{i2\psi_1} =  e^{i2\psi_2} 
\quad \text{or} \quad 
(ii) \quad e^{i(2\psi_1 - 2\rho)}  =  e^{i(2\psi_2 +2 \rho)}
\]
must be satisfied. Suppose $(i)$ holds true. Then 
\begin{equation*}
    2\psi_1 = 2\psi_2 + 2\pi m
\end{equation*}
for some $m \in \Z$, i.e.,
\begin{equation*}
    \psi_1 = \psi_2 + \pi m.
\end{equation*}
For $m$ even this means $\psi_1 = \psi_2$ since $\psi_1,\psi_2 \in (0,2\pi]$, which is impossible by \Cref{l: shift_1_rank_1_no_unique_recovery}.
If $m$ is odd, we have
\begin{equation*}
\psi_1 = \psi_2 + \pi + 2\pi \tilde{m}    
\end{equation*}
with $\tilde{m} \in \Z$. Together with \eqref{eq: relation_psi1_psi2}, we obtain $\rho \in 2\pi \Z$, which is again not possible by \Cref{l: shift_1_rank_1_no_unique_recovery}.

Next, suppose property $(ii)$ is true and
\begin{align*}
    2\psi_1 - 2\rho = 2\psi_2 + 2\rho + 2\pi m_1
\end{align*}
for some $m_1 \in \Z$.  Combining this with \eqref{eq: relation_psi1_psi2}, we further find that $\rho \in  \pi\Z.$
Once again, this is not attainable according to \Cref{l: shift_1_rank_1_no_unique_recovery}. 

We conclude that both $(i)$ and $(ii)$ are not possible and we got a contradiction. Hence, only one out of $f^{1,1}$ or $f^{1,2}$ is compatible with the uniquely recovered $f^2$.
\end{proof}

In summary, we found that involving $j = 2$ helps to distinguish the two possible diagonals caused by $\rank(A^1) = 1$ into the ground-truth diagonal and the false diagonal. Knowing the first diagonal determines $x$ uniquely, see \Cref{l: unique recovery from diagonal if coprime} $(ii)$. 

We conclude that only from $j = 1$ and $j = 2$ it can be decided whether $x$ can be recovered uniquely, up to the global phase, and which objects can never be uniquely recovered from data with background noise. \Cref{thm: unique_recovery} summarizes these results.

\section{Conclusion and discussion}\label{sec: conclusions}

In this paper, we considered ptychographic measurements corrupted with background noise. Along the lines of the Wigner Distribution Deconvolution approach for ptychography, we designed two denoising algorithms, one for arbitrary objects and another version for phase objects. For the latter algorithm, a uniqueness guarantee was established for almost every object.

Following up on this approach, it would be interesting to investigate whether discarding more frequencies can be offset by the redundancy and how this affects the uniqueness of reconstruction.

Another promising direction is to use the established analysis for subspace completion technique \cite{forstner2020well}. It requires solving a system alike to \eqref{eq: subspace_completion_linear_system} with only one unknown coefficient, which is more difficult than the phase objects, but easier than two unknowns in the case of arbitrary objects. This can be seen as an intermediate step for establishing the uniqueness of reconstruction for general objects in the presence of background noise.

\bmhead{Acknowledgments}  The authors would like to thank Tim Salditt and Christian Schroer for providing valuable insights into the physics of imaging experiments.
Furthermore, the authors are grateful for the helpful comments on this work by Benedikt Diederichs and Frank Filbir.

\section*{Declarations}

\bmhead{Funding}
The authors acknowledge support by the Helmholtz Association under contracts No.~ZT-I-0025 (Ptychography 4.0), No.~ZT-I-PF-4-018 (AsoftXm), No.~ZT-I-PF-5-28 (EDARTI), No.~ZT-I-PF-4-024 (BRLEMMM).

\bibliography{sn-bibliography.bbl} 

\begin{thebibliography}{10}
\providecommand{\url}[1]{{#1}}
\providecommand{\urlprefix}{URL }
\providecommand{\doi}[1]{\url{https://doi.org/#1}}
\bibcommenthead

\bibitem{hoppe1969beugung}
W.~Hoppe, {Beugung im inhomogenen Prim{\"a}rstrahlwellenfeld. I. Prinzip einer
  Phasenmessung von Elektronenbeungungsinterferenzen}.
\newblock Acta Crystallographica Section A: Crystal Physics, Diffraction,
  Theoretical and General Crystallography \textbf{25}(4), 495--501 (1969)

\bibitem{pfeiffer2018x}
F.~Pfeiffer, X-ray ptychography.
\newblock Nature Photonics \textbf{12}(1), 9--17 (2018)

\bibitem{rodenburg2019ptychography}
J.~Rodenburg, A.~Maiden, Ptychography.
\newblock Springer Handbook of Microscopy pp. 819--904 (2019)

\bibitem{giewekemeyer2010quantitative}
K.~Giewekemeyer, P.~Thibault, S.~Kalbfleisch, A.~Beerlink, C.M. Kewish,
  M.~Dierolf, F.~Pfeiffer, T.~Salditt, Quantitative biological imaging by
  ptychographic x-ray diffraction microscopy.
\newblock Proceedings of the National Academy of Sciences \textbf{107}(2),
  529--534 (2010)

\bibitem{shi2019x}
X.~Shi, N.~Burdet, B.~Chen, G.~Xiong, R.~Streubel, R.~Harder, I.K. Robinson,
  X-ray ptychography on low-dimensional hard-condensed matter materials.
\newblock Applied Physics Reviews \textbf{6}(1) (2019)

\bibitem{hawkes2019springer}
P.W. Hawkes, J.C. Spence, \emph{{Springer handbook of microscopy}} ({Springer
  Nature}, Cham, 2019)

\bibitem{candes2015phase}
E.J. Candes, X.~Li, M.~Soltanolkotabi, {Phase retrieval via Wirtinger flow:
  Theory and algorithms}.
\newblock IEEE Transactions on Information Theory \textbf{61}(4), 1985--2007
  (2015)

\bibitem{xu2018accelerated}
R.~Xu, M.~Soltanolkotabi, J.P. Haldar, W.~Unglaub, J.~Zusman, A.F. Levi, R.M.
  Leahy, {Accelerated Wirtinger flow: A fast algorithm for ptychography}.
\newblock arXiv preprint arXiv:1806.05546  (2018)

\bibitem{wang2017solving}
G.~Wang, G.B. Giannakis, Y.C. Eldar, Solving systems of random quadratic
  equations via truncated amplitude flow.
\newblock IEEE Transactions on Information Theory \textbf{64}(2), 773--794
  (2017)

\bibitem{melnyk2022stochastic}
O.~Melnyk, {Stochastic Amplitude Flow for phase retrieval, its convergence and
  doppelgängers}.
\newblock arXiv preprint arXiv:2212.04916  (2022)

\bibitem{gerchberg1972practical}
R.~Gerchberg, W.~Saxton, A practical algorithm for the determination of phase
  from image and diffraction plane picture.
\newblock Optik \textbf{35}, 237--246 (1972)

\bibitem{fienup1978reconstruction}
J.R. Fienup, Reconstruction of an object from the modulus of its fourier
  transform.
\newblock Optics letters \textbf{3}(1), 27--29 (1978)

\bibitem{marchesini2016alternating}
S.~Marchesini, Y.C. Tu, H.T. Wu, Alternating projection, ptychographic imaging
  and phase synchronization.
\newblock Applied and Computational Harmonic Analysis \textbf{41}(3), 815--851
  (2016)

\bibitem{luke2004relaxed}
D.R. Luke, Relaxed averaged alternating reflections for diffraction imaging.
\newblock Inverse problems \textbf{21}(1), 37 (2004)

\bibitem{chang2018total}
H.~Chang, Y.~Lou, Y.~Duan, S.~Marchesini, {Total variation-based phase
  retrieval for Poisson noise removal}.
\newblock SIAM Journal on Imaging Sciences \textbf{11}(1), 24--55 (2018)

\bibitem{chang2019blind}
H.~Chang, P.~Enfedaque, S.~Marchesini, {Blind ptychographic phase retrieval via
  convergent alternating direction method of multipliers}.
\newblock SIAM Journal on Imaging Sciences \textbf{12}(1), 153--185 (2019)

\bibitem{rodenburg2004phase}
J.M. Rodenburg, H.M. Faulkner, A phase retrieval algorithm for shifting
  illumination.
\newblock Applied physics letters \textbf{85}(20), 4795--4797 (2004)

\bibitem{melnyk2023convergence}
O.~Melnyk, Convergence properties of gradient methods for blind ptychography.
\newblock arXiv preprint arXiv:2306.08750  (2023)

\bibitem{rodenburg1992theory}
J.~Rodenburg, R.~Bates, {The theory of super-resolution electron microscopy via
  Wigner-distribution deconvolution}.
\newblock Philosophical Transactions of the Royal Society of London. Series A:
  Physical and Engineering Sciences \textbf{339}(1655), 521--553 (1992)

\bibitem{chapman1996phase}
H.N. Chapman, {Phase-retrieval X-ray microscopy by Wigner-distribution
  deconvolution}.
\newblock Ultramicroscopy \textbf{66}(3-4), 153--172 (1996)

\bibitem{iwen2016fast}
M.A. Iwen, A.~Viswanathan, Y.~Wang, Fast phase retrieval from local correlation
  measurements.
\newblock SIAM Journal on Imaging Sciences \textbf{9}(4), 1655--1688 (2016)

\bibitem{iwen2020phase}
M.A. Iwen, B.~Preskitt, R.~Saab, A.~Viswanathan, {Phase retrieval from local
  measurements: Improved robustness via eigenvector-based angular
  synchronization}.
\newblock Applied and Computational Harmonic Analysis \textbf{48}(1), 415--444
  (2020)

\bibitem{preskitt2018phase}
B.P. Preskitt, Phase retrieval from locally supported measurements.
\newblock Ph.D. thesis, University of California, San Diego (2018)

\bibitem{cordor2020fast}
C.~Cordor, B.~Williams, Y.~Hristova, A.~Viswanathan, in \emph{{28th European
  Signal Processing Conference (EUSIPCO 2020)}}, ed. by A.~Marques, B.~Hunyadi
  (IEEE, [Piscataway, NJ], 2020), pp. 980--984

\bibitem{perlmutter2020provably}
M.~Perlmutter, N.~Sissouno, A.~Viswantathan, M.~Iwen, in \emph{{28th European
  Signal Processing Conference (EUSIPCO 2020)}}, ed. by A.~Marques, B.~Hunyadi
  (IEEE, [Piscataway, NJ], 2020), pp. 970--974

\bibitem{perlmutter2021inverting}
M.~Perlmutter, S.~Merhi, A.~Viswanathan, M.~Iwen, {Inverting spectrogram
  measurements via aliased Wigner distribution deconvolution and angular
  synchronization}.
\newblock Information and Inference: A Journal of the IMA \textbf{10}(4),
  1491--1531 (2021)

\bibitem{melnyk2023phase}
O.~Melnyk, {Phase Retrieval from Short-Time Fourier Measurements and
  Applications to Ptychography}.
\newblock Ph.D. thesis, Technische Universit{\"a}t M{\"u}nchen (2023)

\bibitem{bojarovska2016phase}
I.~Bojarovska, A.~Flinth, {Phase retrieval from Gabor measurements}.
\newblock Journal of Fourier Analysis and Applications \textbf{22}(3), 542--567
  (2016)

\bibitem{jaganathan2016stft}
K.~Jaganathan, Y.C. Eldar, B.~Hassibi, {STFT phase retrieval: Uniqueness
  guarantees and recovery algorithms}.
\newblock IEEE Journal of selected topics in signal processing \textbf{10}(4),
  770--781 (2016)

\bibitem{alaifari2021stability}
R.~Alaifari, M.~Wellershoff, {Stability estimates for phase retrieval from
  discrete Gabor measurements}.
\newblock Journal of Fourier Analysis and Applications \textbf{27}, 1--31
  (2021)

\bibitem{bendory2022nearoptimal}
T.~Bendory, C.y. Cheng, D.~Edidin, {Near-Optimal Bounds for Signal Recovery
  from Blind Phaseless Periodic Short-Time Fourier Transform}.
\newblock {Journal of Fourier Analysis and Applications} \textbf{29}(1) (2022)

\bibitem{beinert2015ambiguities}
R.~Beinert, G.~Plonka, {Ambiguities in one-dimensional discrete phase retrieval
  from Fourier magnitudes}.
\newblock Journal of Fourier Analysis and Applications \textbf{21}, 1169--1198
  (2015)

\bibitem{thibault2012maximum}
P.~Thibault, M.~Guizar-Sicairos, Maximum-likelihood refinement for coherent
  diffractive imaging.
\newblock New Journal of Physics \textbf{14}(6), 063,004 (2012)

\bibitem{roemer2022wirtinger}
P.~Römer, B.~Diederichs, F.~Filbir, in \emph{The 8th International Conference
  on Computational Harmonic Analysis} (2022)

\bibitem{li2022poisson}
Z.~Li, K.~Lange, J.A. Fessler, Poisson phase retrieval in very low-count
  regimes.
\newblock IEEE Transactions on Computational Imaging \textbf{8}, 838--850
  (2022)

\bibitem{chang2019advanced}
H.~Chang, P.~Enfedaque, J.~Zhang, J.~Reinhardt, B.~Enders, Y.S. Yu, D.~Shapiro,
  C.G. Schroer, T.~Zeng, S.~Marchesini, {Advanced denoising for X-ray
  ptychography}.
\newblock Optics express \textbf{27}(8), 10,395--10,418 (2019)

\bibitem{salditt2020nanoscale}
T.~Salditt, A.~Egner, D.R. Luke, \emph{{Nanoscale Photonic Imaging}} ({Springer
  Nature}, Cham, 2020)

\bibitem{marchesini2013augmented}
S.~Marchesini, A.~Schirotzek, C.~Yang, H.t. Wu, F.~Maia, Augmented projections
  for ptychographic imaging.
\newblock Inverse Problems \textbf{29}(11), 115,009 (2013)

\bibitem{wang2017background}
C.~Wang, Z.~Xu, H.~Liu, Y.~Wang, J.~Wang, R.~Tai, Background noise removal in
  x-ray ptychography.
\newblock Applied optics \textbf{56}(8), 2099--2111 (2017)

\bibitem{bendory2017non}
T.~Bendory, Y.C. Eldar, N.~Boumal, {Non-convex phase retrieval from STFT
  measurements}.
\newblock IEEE Transactions on Information Theory \textbf{64}(1), 467--484
  (2017)

\bibitem{rodenburg2008ptychography}
J.M. Rodenburg, Ptychography and related diffractive imaging methods.
\newblock Advances in imaging and electron physics \textbf{150}, 87--184 (2008)

\bibitem{preskitt2021admissible}
B.~Preskitt, R.~Saab, Admissible measurements and robust algorithms for
  ptychography.
\newblock Journal of Fourier Analysis and Applications \textbf{27}, 1--39
  (2021)

\bibitem{viswanathan2015fast}
A.~Viswanathan, M.~Iwen, in \emph{Wavelets and Sparsity XVI}, vol. 9597 (SPIE,
  2015), pp. 281--288

\bibitem{filbir2021recovery}
F.~Filbir, F.~Krahmer, O.~Melnyk, On recovery guarantees for angular
  synchronization.
\newblock Journal of Fourier Analysis and Applications \textbf{27}(2), 31
  (2021)

\bibitem{forstner2020well}
A.~Forstner, F.~Krahmer, O.~Melnyk, N.~Sissouno, Well-conditioned ptychographic
  imaging via lost subspace completion.
\newblock Inverse Problems \textbf{36}(10), 105,009 (2020)

\end{thebibliography}


\end{document}